\documentclass[a4paper,UKenglish,cleveref, autoref, thm-restate]{lipics-v2021}

\usepackage{ifthen}
\newboolean{long}

\setboolean{long}{true} %
\ifthenelse{\boolean{long}}{
	\NewDocumentEnvironment{prooflater}{m}{\begin{proof}}{\end{proof}\ignorespacesafterend}
	\NewDocumentEnvironment{proofsketch}{o +b}{}{\ignorespacesafterend}
	\newcommand{\restateref}[1]{}
	\NewDocumentEnvironment{statelater}{m}{}{}
}{
	\NewDocumentEnvironment{prooflater}{m +b}{%
		\expandafter\global\expandafter\def\csname#1\endcsname{\begin{proof}#2\end{proof}}%
	}{\ignorespacesafterend}
	\NewDocumentEnvironment{proofsketch}{O{Proof Sketch}}{\begin{proof}[#1]}{\end{proof}\ignorespacesafterend}
	\usepackage{apptools}
	\newcommand{\restateref}[1]{[\IfAppendix{\hyperref[#1]{\AppendixSymbol{}}}{\hyperref[#1*]{\AppendixSymbol{}}}]}

	\NewDocumentEnvironment{statelater}{m +b}{%
		\expandafter\global\expandafter\def\csname#1\endcsname{#2}%
	}{\ignorespacesafterend}
}

\pdfoutput=1 %
\hideLIPIcs  %

\graphicspath{{./graphics/}}%

\bibliographystyle{plainurl}%

\title{The Peculiarities of Extending Queue Layouts} %

\author{Thomas Depian}{Algorithms and Complexity Group, TU Wien, Vienna, Austria}{tdepian@ac.tuwien.ac.at}{https://orcid.org/0009-0003-7498-6271}{}

\author{Simon D.~Fink}{Algorithms and Complexity Group, TU Wien, Vienna, Austria}{sfink@ac.tuwien.ac.at}{https://orcid.org/0000-0002-2754-1195}{}

\author{Robert Ganian}{Algorithms and Complexity Group, TU Wien, Vienna, Austria}{rganian@ac.tuwien.ac.at}{https://orcid.org/0000-0002-7762-8045}{}

\author{Martin N\"ollenburg}{Algorithms and Complexity Group, TU Wien, Vienna, Austria}{noellenburg@ac.tuwien.ac.at}{https://orcid.org/0000-0003-0454-3937}{}

\authorrunning{T.~Depian, S.~D.~Fink, R.~Ganian, M.~N\"ollenburg} %

\Copyright{Thomas Depian, Simon D.~Fink, Robert Ganian, Martin N\"ollenburg} %

\ccsdesc[500]{Theory of computation~Fixed parameter tractability}
\ccsdesc[500]{Mathematics of computing~Graphs and surfaces}

\keywords{Queue layouts, Parameterized complexity, Linear layouts, Extension problems} %

\acknowledgements{All authors acknowledge support from the Vienna Science and Technology Fund (WWTF) [10.47379/ICT22029]. Robert Ganian and Thomas Depian furthermore acknowledge support from the Austrian Science Fund (FWF) [10.55776/Y1329].
	We want to thank 
	P.\ Bachmann, %
	T.\ Brand, %
	G.\ Gutowski, %
	M.\ Pfretzschner, %
	I.\ Rutter, %
	P.\ Stumpf, %
	S.\ Wolf, and %
	J.\ Zink %
	who at the HOMONOLO 2024 workshop helped us uncover and independently prove the relationship to permutation graphs used in \Cref{sec:only-missing edges}.
	We also want to thank the organizers for making this enjoyable event possible.
}

\nolinenumbers %
\usepackage[T1]{fontenc}
\usepackage{graphicx}
\graphicspath{{./graphics/}}

\usepackage{pifont}
\usepackage{cite}
\usepackage{mdframed}
\usepackage{xspace}
\usepackage{complexity}
\usepackage{todonotes}
\usepackage{mathtools}
\usepackage{amsmath}
\usepackage{booktabs}
\usepackage{multirow}

\newtheorem{property}{Property}

\crefname{property}{property}{properties}
\Crefname{property}{Property}{Properties}

\newcommand{\Vadd}{\ensuremath{V_{\text{add}}}\xspace}
\newcommand{\nadd}{\ensuremath{n_{\text{add}}}\xspace}

\newcommand{\Eadd}{\ensuremath{E_{\text{add}}}\xspace}
\newcommand{\madd}{\ensuremath{m_{\text{add}}}\xspace}
\newcommand{\EaddH}{\ensuremath{E_{\text{add}}^H}\xspace}

\NewDocumentCommand{\ql}{o}{\ensuremath{\langle\prec\IfNoValueF{#1}{_{#1}},\sigma\IfNoValueF{#1}{_{#1}}\rangle}\xspace}
\newcommand{\instance}{\ensuremath{\mathcal{I}}\xspace}
\newcommand{\instanceLong}{\ensuremath{\left(\ell, G, H, \ql[H]\right)}\xspace}

\NewDocumentCommand{\nestingRelation}{o}{\ensuremath{\Cap\IfNoValueF{#1}{_{#1}}}\xspace}

\newcommand{\intervalPlacing}[1]{\ensuremath{\curlyvee(#1)}}

\newcommand{\probname}[1]{{\normalfont\textsc{#1}}}
\newcommand{\probdef}[3]{%
	\begin{mdframed}
		\probname{#1}
		\begin{description}
			\item[Given] #2
			\item[Question] #3
		\end{description}
	\end{mdframed}
}%
\newcommand{\MCC}{\probname{McC}\xspace}
\newcommand{\QLELong}{\probname{Queue Layout Extension}\xspace}
\newcommand{\QLE}{\probname{QLE}\xspace}

\newcommand{\SLE}{\probname{SLE}\xspace}

\newcommand{\Size}[1]{\ensuremath{\left\vert #1 \right\vert}}
\newcommand{\BigO}[1]{\ensuremath{\mathcal{O}(#1)}}

\let\oldrestatable\restatable
\def\restatable{\expandafter\oldrestatable}

\begin{document}
\maketitle              %
\begin{abstract}
We consider the problem of computing $\ell$-page queue layouts, which are linear arrangements of vertices accompanied with an assignment of the edges to pages {from one to $\ell$} %
that avoid the nesting of edges on any of the pages. Inspired by previous work in the extension of stack layouts, here we consider the setting of extending a partial $\ell$-page queue layout into a complete one and primarily analyze the problem through the refined lens of parameterized complexity. We obtain novel algorithms and lower bounds which provide a detailed picture of the problem's complexity under various measures of incompleteness, and identify surprising distinctions between queue and stack layouts in the extension setting.
\end{abstract}

\section{Introduction}

The computation of fundamental representations or visualizations of graphs is a core research topic in several fields including theoretical computer science, computational geometry and graph drawing. While some representations can be computed in polynomial time (e.g., planar drawings of planar graphs), for many others computing a valid representation from the input graph alone is intractable (and, in particular, \NP-hard when phrased as a decision question). 

Nevertheless, it is entirely plausible---and often natural---that a fixed partial representation (such as a %
partial drawing) is already available, and our task is merely to complete this into a full representation. This perspective has recently led to the systematic study of graph representations through the perspective of so-called \emph{extension problems}. A classical result in this line of work is the %
{%
	linear}-time algorithm for extending partial planar drawings~\cite{ABF+.TPP.2015}; however, when the representations are \NP-hard to compute ``from scratch'' one cannot expect polynomial-time algorithms in the strictly more general extension setting. That is why work on extension problems has predominantly been carried out through the lens of parameterized complexity~\cite{EGH+.ENC.2020,EGH+.EP1.2020,GHK+.COE.2021,BDLMN.UBT.2023,BDLMN.UBT.2021,BGK+.EOP.2023,DFGN.PCE.2024,DFF+.PTG.2025}, 
where one can use parameters to capture how much of the input graph still needs to be added to the partial representation.

In this article we present the next step in the aforementioned systematic investigation of extension problems, specifically targeting the notion of \emph{queue layouts}. An $\ell$-page queue layout of a graph $G$ is a linear arrangement of $V(G)$ accompanied with an assignment of $E(G)$ to pages from $[\ell]$ such that the preimage of each page contains no pair of nesting edges (see \Cref{sec:preliminaries} and \Cref{fig:example}). Queue layouts have been extensively studied from a variety of contexts, including their relationship to planarity~\cite{BFG+.PGB.2019,DJM+.PGH.2020}, classical as well as parameterized complexity~\cite{HR.LOG.1992,BGMN.PAQ.2022,BGMN.PAQ.2020} and connections to other graph parameters~\cite{Wie.QNG.2017,DEH+.SNi.2022}. However, unlike for the closely related notion of stack layouts (which differ by excluding crossings instead of nestings)~\cite{CLR.EGB.1987,Bil.Egb.1992,DFGN.PCE.2024}, %
to the best of our knowledge up to now nothing was known about the extension problem for queue layouts:

\begin{figure}[t]
	\centering
	\includegraphics[page=1]{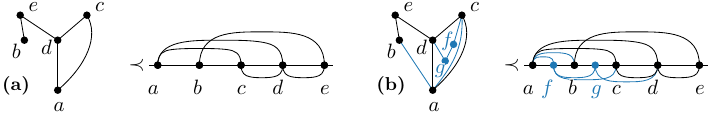}
	\caption{\textbf{(a)} Graph $H$ with a two-page queue layout $\ql[H]$.
	Observe the twist on the upper page {%
		between the edges $ac$ and $be$}. \textbf{(b)} An extension of $H$ and \ql[H] by new vertices and edges in blue.}
	\label{fig:example}
\end{figure}

\probdef{\QLELong (\QLE)}{An integer $\ell \geq 1$, a graph $G$, a subgraph $H$ of $G$, and an $\ell$-page queue layout \ql[H] of $H$.}{Does there exist an $\ell$-page queue layout \ql[G] of $G$ that extends \ql[H]?}

\QLE is easily seen to be \NP-hard already when $\ell=1$, as it generalizes the \NP-hard problem of computing a $1$-page queue layout from scratch~\cite{HR.LOG.1992}---indeed, this occurs when $V(H)=\emptyset$. An interesting special case occurs when $V(H)=V(G)$ and $E(H)=\emptyset$, i.e., %
{%
	$H$ contains all vertices of $G$ but none of its edges}: in this case, \QLE coincides with the well-studied problem of computing an $\ell$-page queue layout with a fixed linear arrangement of the vertices, which is known to be polynomial-time solvable~\cite{HR.LOG.1992}. Since the case where $E(H)=E(G)$ is trivial, this raises the question of whether \QLE is polynomial-time solvable whenever $V(G)=V(H)$. We begin our investigation by building upon %
{%
	a known relation between queue layouts with fixed spine order and colorings of %
specifically constructed 
permutation graphs~\cite{DW.LLG.2004,DM.POS.1941}}
to establish our first minor contribution:

\begin{restatable}\restateref{thm:only-edges-hard}{theorem}{thmVGisVH}
\label{thm:only-edges-hard}
    When restricted to the case where $V(G)=V(H)$, \QLE is \textup{(1)} \NP-complete and \textup{(2)} in \XP\ when parameterized by the number $\ell$ of pages.
\end{restatable}

\enlargethispage{1em}
By slightly adapting the ideas used for extending stack layouts~\cite[Theorem 3.2]{DFGN.PCE.2024}, we obtain our second minor contribution: fixed-parameter tractability for the special case above w.r.t.\ the number of missing edges.

\begin{restatable}\restateref{thm:only-edges-fpt}{theorem}{theoremOnlyEdgesFPT}
	\label{thm:only-edges-fpt}
	When restricted to the case where $V(G)=V(H)$, \QLE when only $\madd$ edges are missing from \ql[H] is fixed-parameter tractable when parameterized by $\madd$.%
\end{restatable}

Moving on to the general case where both edges and vertices may be missing\footnote{We say that an element, i.e., a vertex or edge, is \emph{missing} if it is in $G$ but not in $H$.} from $H$, we begin by providing our third and final ``minor'' contribution: 

\begin{restatable}\restateref{thm:kappa-xp}{theorem}{theoremKappaXP}
	\label{thm:kappa-xp}
    \QLE is in \XP\ when parameterized by the number $\kappa$ of missing elements.
\end{restatable}

Similarly to \Cref{thm:only-edges-fpt}, the proof of the above result follows by adapting the ideas from an analogous algorithm in the setting of stack layouts~\cite[Theorem 5.1]{DFGN.PCE.2024}. Essentially, while \Cref{thm:only-edges-hard,thm:only-edges-fpt,thm:kappa-xp} are important pieces in our overall understanding of \QLE, in terms of technical contributions they merely set the stage for our three main results. The first of these establishes a corresponding lower bound to Theorem~\ref{thm:kappa-xp}:

\begin{restatable}\restateref{thm:kappa-w1}{theorem}{theoremKappaWHard}
	\label{thm:kappa-w1}
	\QLE is \W\textup{[1]}-hard when parameterized by the number $\kappa$ of missing elements.
\end{restatable}

Theorem~\ref{thm:kappa-w1} also has a matching result in the stack layout setting~\cite[Theorem 6.4]{DFGN.PCE.2024}---however, the gadgets used to establish the previous lower bound seem impossible to directly translate to the queue layout setting, and circumventing this issue by redesigning the reduction required overcoming several technical difficulties that were not present in the stack layout setting. 

Since Theorem~\ref{thm:kappa-w1} excludes fixed-parameter algorithms w.r.t.\ the number of missing elements alone, the natural question is whether one can at least obtain such an algorithm if the parameter also includes the number $\ell$ of pages. This directly corresponds to the main open question highlighted in the previous work on stack layout extension~\cite[Section 8]{DFGN.PCE.2024}. As our second main contribution, for \QLE\ we answer this in the affirmative:

\begin{restatable}{theorem}{thmkappapagesfpt}
	\label{thm:kappa-pages-fpt}
	\QLE is fixed-parameter tractable w.r.t.\ the number $\ell$ of pages plus the number $\kappa$ of missing elements.%
\end{restatable}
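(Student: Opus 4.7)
The plan is to prove the theorem via an enumerate-then-sweep dynamic-programming strategy that crucially exploits the \emph{FIFO property} of queue layouts: on each page, the edges are sorted identically by their left and their right endpoints, so that as we traverse the spine from left to right the edges on each page must open and close in a first-in, first-out order.

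First I would enumerate the page $p_e \in [\ell]$ of every missing edge $e \in \Eadd$; this costs at most $\ell^\kappa$ guesses and fixes, for every page, the precise set of edges that must eventually lie on it. For each such guess I then run a dynamic program that sweeps through the spine from left to right. A DP state records (i) the most recently processed $V(H)$-vertex (at most $n$ choices), (ii) the subset $S \subseteq \Vadd$ of missing vertices already placed (at most $2^\kappa$ choices), and (iii) for every page $p \in [\ell]$, the ordered FIFO queue of currently open missing edges on $p$, i.e., those with one endpoint already processed but the other endpoint not. A transition either advances past the next $V(H)$-vertex or inserts a chosen vertex $v \in \Vadd \setminus S$ at the current spine position; either way, every missing edge incident to the newly processed vertex is handled locally: it either \emph{opens} (and is appended to the back of its page's queue) or \emph{closes}. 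By the FIFO property, a closing missing edge must be the first element of its page's queue, and additionally no currently open $H$-edge on that page may have been opened before this closing edge; otherwise a nesting arises.

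Since at any moment the cumulative length of all page queues is bounded by $\madd \le \kappa$ and there are at most $\ell$ pages, the number of possible queue-tuples depends only on $\ell$ and $\kappa$; together with the $n$ slot choices and the $2^\kappa$ subset choices, the DP has $n \cdot f_1(\ell,\kappa)$ reachable states and is evaluated in polynomial time per state. Running this DP for each of the $\ell^\kappa$ page-assignment guesses therefore yields a total running time of the form $f(\ell,\kappa) \cdot \text{poly}(n)$, as required. The hardest part will be to carefully formalize the local update when several \Vadd-vertices are packed consecutively into the same gap between two $V(H)$-vertices (since the order in which these vertices are inserted matters for the page queues), as well as corner cases such as missing edges both of whose endpoints are inserted into the same gap; correctness in both directions then follows by induction on the sweep, using the characterization of $\ell$-page queue layouts by the FIFO edge sequences on each page.
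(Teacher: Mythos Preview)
Your sweep DP has a genuine gap. The state you propose---last processed $V(H)$-vertex, the set $S\subseteq\Vadd$, and per-page FIFO queues of open \emph{missing} edges---forgets \emph{where} each open missing edge started, and that information is needed to test nestings against old edges. You only inspect the FIFO condition when a missing edge closes, so you never detect the case where a missing edge \emph{nests} an old edge (the old edge opens and closes entirely while the missing edge stays open). Concretely: with $\ell=1$, $V(H)=\{h_1,\dots,h_6\}$ in this spine order, old edges $h_2h_3$ and $h_4h_5$ on page~$1$, $\Vadd=\{u\}$ and missing edges $uh_1,uh_6$ on page~$1$, the instance is negative (compatibility of $uh_1$ forces $u\prec h_3$, compatibility of $uh_6$ forces $h_4\prec u$), yet your DP accepts the run that inserts $u$ before $h_1$: both missing edges close with no $H$-edge currently open, so all your checks pass even though $uh_6$ nests $h_2h_3$ and $h_4h_5$. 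Repairing this by also recording, for each open missing edge, the $\prec_H$-gap of its left endpoint makes the checks sound but costs $n^{\Theta(\kappa)}$ states, so you only recover \XP. The same positional data is in fact also required to evaluate your condition ``no currently open $H$-edge opened before this closing edge''.

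The paper avoids positional bookkeeping altogether. After guessing the page assignment (as you do) it additionally guesses only the relative order $\prec_{\Eadd}$ among the endpoints of the new edges---bounded by a function of $\kappa$---which already settles all new--new nestings. The residual task of interleaving the new vertices into $\prec_H$ without creating a same-page new/old nesting is then cast as \probname{2-Sat} over variables $x_{u,v}$ meaning $u\prec v$: non-nesting is the biconditional $x_{u_1,u_2}\Leftrightarrow x_{v_1,v_2}$ of \Cref{obs:nesting-two-edges}, and each transitivity constraint $x_{u,v}\wedge x_{v,w}\Rightarrow x_{u,w}$ collapses to a $2$-clause because among any three vertices at least one pair already has its order fixed by $\prec_H$ or $\prec_{\Eadd}$. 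Solving one \probname{2-Sat} instance per guess in $\BigO{\Size{\instance}^2}$ time gives the \FPT\ bound.
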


The proof of Theorem~\ref{thm:kappa-pages-fpt} relies on a combination of branching and structural insights into queue layouts that %
allow us to reduce the problem to solving a set of \probname{2-Sat} instances. One high-level take-away from this result is that even if both vertices and edges are missing from $H$, \QLE\ seems to behave differently from stack layout extension---in particular, the approach used to establish \Cref{thm:kappa-pages-fpt} does not seem applicable to the latter. As our final result, we provide ``hard'' evidence for this intuition by establishing the polynomial-time tractability of the special case where only two vertices and %
their incident edges are missing:

\begin{restatable}\restateref{thm:two-missing-vertices-poly}{theorem}{theoremTwoMissingVerticesPoly}
    \label{thm:two-missing-vertices-poly}
    When restricted to the case where $H$ is obtained by deleting two vertices from $G$, \QLE can be solved in polynomial time.
\end{restatable}

The reason Theorem~\ref{thm:two-missing-vertices-poly} is noteworthy is that it directly contrasts the known \NP-hardness of the corresponding restriction in the stack layout setting~\cite[Theorem 4.2]{DFGN.PCE.2024}.
The proof %
is also non-trivial: the difficulty is that many missing edges may need to be assigned to 
a large number of available pages. We resolve this by developing a set of reduction rules based on novel insights into the behavior of queue layouts.
\Cref{tab:results} summarizes our results.

\begin{table}[t]
    \centering
    \caption{Overview of our results. \SLE is the problem of extending stack layouts %
    and VEDD = 2 indicates two missing vertices.
    Results for \QLE are shown in this paper.}
    \setlength{\tabcolsep}{.42em}
    \begin{tabular}{ccccc}
        \toprule
        Problem & $V(H) = V(G)$ & $\kappa$ & $\kappa + \ell$ & VEDD = 2\\  
        \midrule
        \multirow{2}{*}{\SLE} & \NP c, %
        \FPT(\madd)%
        & \XP, \W[1]-hard%
        & Open & \NP c%
        \\
        &{%
        	Thms.~1~\cite{Ung.kCC.1988} \&~3.2~\cite{DFGN.PCE.2024}}&{%
        	Thms.~5.1 \&~6.4~\cite{DFGN.PCE.2024}}&&{%
        	Thm.~4.2~\cite{DFGN.PCE.2024}}\\
        \midrule
        \multirow{2}{*}{\QLE} & \NP c, \FPT(\madd) & \XP, \W[1]-hard & \FPT & Polynomial time \\
        & Thms.\ \ref{thm:only-edges-hard} \& \ref{thm:only-edges-fpt} & Thms.\ \ref{thm:kappa-xp} \& \ref{thm:kappa-w1} & Thm.\ \ref{thm:kappa-pages-fpt} & Thm.\ \ref{thm:two-missing-vertices-poly}\\
        \bottomrule
    \end{tabular}
    \label{tab:results}
\end{table}

\section{Preliminaries}
\label{sec:preliminaries}
We assume familiarity with standard notions from graph theory~\cite{Die.GT4.2012} and parameterized complexity~\cite{CFK+.PA.2015}.
Unless stated otherwise, all graphs $G$ %
are simple and undirected with vertex set $V(G)$ and edge set $E(G)$.
We let $[\ell]$ denote the set $\{1, \ldots, \ell\}$ for $\ell \geq 1$ and $G[X]$ the subgraph of $G$ induced %
{%
	by} $X$ for $X \subseteq V(G)$.

Given two edges $u_1v_1$ and $u_2v_2$ and a linear order $\prec$ of $V(G)$ with $u_1 \prec u_2 \prec v_2 \prec v_1$, we say that $u_1v_1$ \emph{nests} $u_2v_2$ or, conversely, that $u_2v_2$ \emph{is nested} by $u_1v_1$.
In addition, the edges are in a \emph{nesting relation}, indicated as $u_1v_1 \nestingRelation[\prec_G] u_2v_2$.
We omit $\prec_G$ if it is clear from context and observe that only edges with pairwise distinct endpoints can be in a nesting relation.
An \emph{$\ell$-page queue layout} consists of a linear order $\prec_G$ of $V(G)$ (the \emph{spine order}) and a function $\sigma_G\colon E(G) \to [\ell]$ (the \emph{page assignment}) that assigns each edge %
{%
	$uv \in E(G)$} to a \emph{page} %
{%
	$\sigma_G(uv)$} such that for any two edges %
{%
	$u_1v_1$} and %
{%
	$u_2v_2$} with %
{%
	$\sigma_G(u_1v_1) = \sigma_G(u_2v_2)$} we do not have $u_1v_1 \nestingRelation u_2v_2$. %
If $u_1 \prec u_2 \prec v_1 \prec v_2$, then we say that $u_1v_1$ and $u_2v_2$ form a \emph{twist} in \ql[G]; see also \Cref{fig:example}a.
We write \ql instead of \ql[G] if $G$ is clear from context.
Let $u,v \in V(G)$ be two vertices and \ql a queue layout of $G$.
A vertex $v \in V(G)$ \emph{sees} a vertex $u \in V(G)$ on a page $p \in [\ell]$ if $\langle \prec, \sigma' \rangle$, where %
$\sigma'(uv)=p$ and $\sigma'(e)=\sigma(e)$ otherwise, is a valid queue layout of $G \cup \{uv\}$.
Note that the visibility relation is symmetric.
Furthermore, $v$ is \emph{left} of $u$ if $v \prec u$ and \emph{right} of~$u$ if $u \prec v$.
An \emph{interval} $[u,v]$, with $u \prec v$, is a segment on the spine such that there is no %
$w \in V(G)$ with $u \prec w \prec v$. %
A queue layout \ql[G] \emph{extends} a queue layout \ql[H] of a subgraph $H \subseteq G$ of $G$ if $\sigma_H \subseteq \sigma_G$ and $\prec_H\ \subseteq\ \prec_{G}$, i.e., $\sigma_H(e)=\sigma_G(e)$ for all $e\in E(H)$ and $(u\prec_H v)\Leftrightarrow(u \prec_G v)$ for all $u,v\in V(H)$.

Let $\instance$ be an instance of \QLE.
We use $\Size{\instance}$ as a shorthand for $\Size{V(G)} + \Size{E(G)} + \ell$ and refer to \ql[G], if it exists, as a \emph{solution} of \instance.
For the remainder of this paper, we use $\Vadd \coloneq V(G) \setminus V(H)$ and $\Eadd \coloneqq E(G) \setminus E(H)$ to speak about the set of \emph{new} vertices and edges and denote with $\nadd{}$ and $\madd{}$ their cardinality, respectively.
Vertices and edges of $H$ are called \emph{old}.
We use \emph{missing} as a synonym for \emph{new}. %
Furthermore, $\EaddH{} \coloneqq \{e = uv \in \Eadd{} \mid u, v \in V(H)\}$ is the set of new edges $e$ with only old endpoints $u$ and $v$.

\section{\textsc{QLE} With Only Missing Edges}
\label{sec:only-missing edges}
For our first result, we will use a known equivalence between finding a certain queue layout and a coloring problem in a restricted graph class, combined with existing complexity results on the latter problem.
Dujmović and Wood already note that ``the problem of assigning edges to queues in a fixed vertex ordering is equivalent to colouring a permutation graph'' \cite{DW.LLG.2004} and attribute this result to a 1941 publication by Dushnik and Miller \cite{DM.POS.1941}. %
\ifthenelse{\boolean{long}}{%
    To make this insight more formal and also accessible in modern terms, we want to first independently show this relationship in this section, before later applying it to our problem.
    \par
}{
    In \Cref{app:permutation}, we define the term permutation graph and also make this insight more formal and accessible in modern terms by independently showing this relationship.
}
\begin{statelater}{permutationGraphs}
The \emph{permutation graph} $G_\pi$ corresponding to some permutation $\pi$ of the ordered elements $X$ is the graph that has $X$ as its vertices and where two vertices are adjacent if $\pi$ reverses the order of the corresponding elements. %
To test whether a graph is a permutation graph, we can use the notion of transitive orientations.
A \emph{transitive orientation} $F$ of some undirected graph $G=(V,E)$ assigns directions to all edges in $E$ such that the existence of directed edges $ab\in F$ and $bc\in F$ always also implies an $ac\in F$ edge, oriented in this direction.
Permutation graphs can now be characterized as those graphs where both the graph itself and its complement allow for a transitive orientation (they are thus also called comparability and co-comparability graphs, respectively)~\cite{DM.POS.1941,Gol.AGT.1980}.

We now show the first direction of the claimed statement, i.e., that assigning edges to queues with a fixed spine ordering is equivalent to coloring a certain permutation graph.
Given a graph $G$ and a fixed spine order $\prec_G$, we define the \emph{conflict graph} $C(G,\prec_G)=(E(G),\Cap(\prec_G))$ to be the graph that has $E(G)$ as its vertices and %
edges that correspond to the nesting relation in $\prec_G$.

\begin{lemma}
    \label{lem:queue-layout-permutation-graph}
    For any graph $G$ with fixed spine order $\prec_G$, we have that its conflict graph $C(G,\prec_G)=(E(G),\Cap(\prec_G))$ is a permutation graph.
\end{lemma}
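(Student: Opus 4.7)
The plan is to verify the Dushnik--Miller condition by exhibiting transitive orientations of both $C(G,\prec_G)$ and its complement. It is convenient to view each edge $e=uv\in E(G)$ (with $u\prec_G v$) as the open interval on the spine with endpoints $u$ and $v$; then $e_1\nestingRelation[\prec_G] e_2$ holds precisely when the interval of $e_1$ strictly contains the interval of $e_2$, so in particular all four endpoints are pairwise distinct.

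For $C(G,\prec_G)$ itself, I would orient each edge of the conflict graph from the containing interval to the contained one. Because strict interval containment is transitive and preserves endpoint distinctness along chains, this orientation is automatically transitive and stays inside the edge set of $C(G,\prec_G)$.

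For the complement, I would use a left-endpoint precedence rule with the right endpoint as tie-breaker: orient $e_1\to e_2$ whenever $e_1\neq e_2$ are non-nesting and either the left endpoint of $e_1$ strictly precedes the left endpoint of $e_2$ in $\prec_G$, or the two left endpoints coincide and the right endpoint of $e_1$ strictly precedes the right endpoint of $e_2$. A routine case check shows that every non-nesting pair is oriented in exactly one direction; transitivity as a relation is inherited from the underlying lexicographic order, so the only nontrivial point is verifying that whenever both $e_1\to e_2$ and $e_2\to e_3$ hold, the pair $(e_1,e_3)$ is also non-nesting. For this, one shows that the conditions ``$(e_i,e_{i+1})$ non-nesting and $u_i$ weakly precedes $u_{i+1}$'' force $v_i$ to weakly precede $v_{i+1}$; chaining this conclusion over $i\in\{1,2\}$ prevents $e_1$ from strictly containing $e_3$.

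The main obstacle I anticipate lies precisely in this last case analysis, since pairs of edges sharing an endpoint are non-nesting but straddle the ``disjoint'' and ``twist'' configurations and must be oriented consistently with both; the tie-breaking in the rule above is chosen with this in mind. Once both transitive orientations are in hand, the Dushnik--Miller characterization recalled above yields that $C(G,\prec_G)$ is a permutation graph.
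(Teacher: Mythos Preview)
Your approach is essentially the same as the paper's: exhibit transitive orientations of both $C(G,\prec_G)$ and its complement and invoke the comparability/co-comparability characterisation of permutation graphs. For $C$ both you and the paper orient from the containing interval to the contained one; for $\overline{C}$ the paper orients purely by the left endpoint, whereas you use the lexicographic (left, then right) order---this is the natural refinement once edges sharing an endpoint are taken into account, a case the paper's proof glosses over.

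One small imprecision worth fixing: the claim ``$(e_i,e_{i+1})$ non-nesting and $u_i$ weakly precedes $u_{i+1}$ force $v_i\preceq v_{i+1}$'' is false as literally stated when $u_i=u_{i+1}$, since then either order of the right endpoints is non-nesting. What you actually need, and what your tie-breaking rule gives you, is that $e_i\to e_{i+1}$ in your orientation implies $v_i\preceq v_{i+1}$ (strict left-endpoint precedence handles one case, and the tie-breaker forces $v_i\prec v_{i+1}$ in the other). With this correction the chaining argument goes through and rules out $e_1$ nesting $e_3$ exactly as you outline.
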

\begin{proof}
    We will explicitly give transitive orientations $F$ and $F'$ for $C(G,\prec_G)$ and its complement $\overline{C(G,\prec_G)}$, respectively.
    Recall that $C(G,\prec_G)$ contains an edge ${a,b}$ if, based on the spine order $\prec_G$, the edge $a$ is nested within edge $b$ or vice versa.
    In $F$, we direct the edge $ab$ from the outer edge to the inner nested edge, or equivalently, according to the distances of the endpoints of $a$ and $b$ in $\prec_G$, from the longer to the shorter edge.
    As the nesting relation is transitive, $F$ is a transitive orientation.

    For $F'$, we will direct ${a,b}\in E(\overline{C(G,\prec_G)})$, i.e., an edge corresponding to two intervals that do not nest, according to their starting points in $\prec_G$, from one that starts earlier to the one that starts later, i.e., from left to right.
    The starts-before relationship is transitive.
    But for $F'$ being transitive, we also have to show that this holds for the restriction to $E(\overline{C(G,\prec_G)})$.
    Assume $ab,bc \in F'$ and denote with $x_s,x_e$ the start and end vertices of the edge $x$ in $\overline{C(G,\prec_G)}$.
    We thus have $a_s \prec_G b_s$ and $b_s \prec_G c_s$ and, as $ab,bc\notin E(C(G,\prec_G))$, i.e., neither $a,b$ nor $b,c$ are in a pairwise nesting relation, we must have
    $a_e \prec_G b_e$, $b_e \prec_G c_e$.
    Together, this yields $a_s \prec_G c_s$ and $a_e \prec_G c_e$ and thus $ac\notin E(C(G,\prec_G))$ and~$ac\notin F'$.
\end{proof}

As the vertices of the conflict graph correspond to the edges of the input graph, and two vertices are adjacent if and only if the corresponding edges are in a nesting relation and thus cannot share a page, the proper $k$-colorings of the vertices of $C(G,\prec_G)$ are exactly the $k$-page queue layouts of $(G,\prec_G)$ by identifying each page with a color (and vice versa).
Having established one direction of the equivalence, we now want to focus on the converse direction.
Here, we will first show that every permutation graph also has a corresponding queue layout instance with a fixed spine order, where edges of the permutation graph are represented through the nesting of edges in the queue layout.

\begin{lemma}
    \label{lem:permutation-graph-queue-layout}
    For every permutation graph $G_\pi$ induced by a permutation $\pi$ of elements $X$,
    the graph $Q(\pi)=(\left\{(v,a) \mid v\in X, a\in\{1,2\}\right\}, \left\{((v,1),(v,2)) \mid v\in X\right\})$ with spine order $\prec_{Q(\pi)}$ of $V(Q(\pi))$ 
    such that $(v_1,a_1) \prec_{Q(\pi)} (v_2,a_2)$ 
        if $a_1 < a_2$, or 
        if $a_1 = a_2 = 1$ and $v_1 < v_2$ in $X$, or
        if $a_1 = a_2 = 2$ and $v_1 \prec v_2$ in $\pi$
    has $\Cap(\prec_{Q(\pi)})=E(G_\pi)$.
\end{lemma}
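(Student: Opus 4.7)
The plan is to directly unpack the definition of the spine order and show that two edges of $Q(\pi)$ are in the nesting relation precisely when the corresponding elements of $X$ are reversed by $\pi$, which is exactly the adjacency condition in $G_\pi$. Note that $Q(\pi)$ has exactly $|X|$ edges, one per element, so we implicitly identify each edge $e_v = ((v,1),(v,2))$ with the vertex $v$ of $G_\pi$; under this identification the claim $\Cap(\prec_{Q(\pi)}) = E(G_\pi)$ becomes meaningful.

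Concretely, I would fix two distinct elements $v, w \in X$ and, without loss of generality, assume $v < w$ in the natural order on $X$. By the definition of the spine order, both $(v,1)$ and $(w,1)$ lie to the left of both $(v,2)$ and $(w,2)$, and within the left block we have $(v,1) \prec_{Q(\pi)} (w,1)$. The relative position of $(v,2)$ and $(w,2)$ within the right block is determined solely by $\pi$: if $v \prec w$ in $\pi$, then $(v,2) \prec_{Q(\pi)} (w,2)$; otherwise $(w,2) \prec_{Q(\pi)} (v,2)$.

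From this a straightforward case split finishes the argument. In the first case the four endpoints appear in the order $(v,1), (w,1), (v,2), (w,2)$, so $e_v$ and $e_w$ form a twist and are \emph{not} in a nesting relation; consistently, $v$ and $w$ are in the same order in $\pi$ as in $X$, so $vw \notin E(G_\pi)$. In the second case the endpoints appear in the order $(v,1), (w,1), (w,2), (v,2)$, which is precisely a nesting of $e_w$ by $e_v$; and since $\pi$ reverses the order of $v$ and $w$, we have $vw \in E(G_\pi)$. Thus $e_v \Cap_{\prec_{Q(\pi)}} e_w$ iff $vw \in E(G_\pi)$, establishing the claimed equality.

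I do not anticipate a serious obstacle: the construction is engineered so that the left block contributes the ``$v<w$'' order and the right block contributes the ``$\pi$'' order, reducing the question of nesting to the question of whether these two orders agree. The only mild care needed is in the bookkeeping of which endpoint belongs to which edge when writing out the four-point order, and in verifying that the tie-breaking rules in the definition of $\prec_{Q(\pi)}$ are used correctly in both the first and the second half of the spine.
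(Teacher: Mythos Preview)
Your proposal is correct and follows essentially the same approach as the paper's proof: fix the order of two elements in $X$, read off the positions of the four endpoints from the definition of $\prec_{Q(\pi)}$, and observe that nesting occurs precisely when $\pi$ reverses the pair. Your version is in fact slightly more explicit, spelling out both the twist and the nesting case and noting the identification of edges of $Q(\pi)$ with vertices of $G_\pi$, which the paper leaves implicit.
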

\begin{proof}
    Note that we have an edge $v_1v_2\in E(G_\pi)$ if and only if the order of $a$ and $b$ is reversed in $\pi$ with regard to their original order in $X$.
    Thus, assuming w.l.o.g.\ $v_1 < v_2$ in $X$, we have that $v_1v_2\in E(G_\pi)$ implies $v_2 \prec v_1$ in $\pi$.
    This means we have $(v_1,1) \prec_{Q(\pi)} (v_2,1) \prec_{Q(\pi)} (v_2,2) \prec_{Q(\pi)} (v_1,2)$ and thus the $v_2$ edge is nested within the $v_1$ edge.
    Conversely, recall that all edges in $Q(\pi)$ are of the form $((v,1),(v,2))$ for some $v\in X$.
    Thus, if $((v_2,1),(v_2,2))$ is nested within $((v_1,1),(v_1,2))$ according to $\prec_{Q(\pi)}$, we have $(v_1,1) \prec_{Q(\pi)} (v_2,1) \prec_{Q(\pi)} (v_2,2) \prec_{Q(\pi)} (v_1,2)$.
    This means that $v_1 < v_2$ in $X$ and $v_2 \prec v_1$ in $\pi$, which implies $v_1v_2\in E(G_\pi)$.
\end{proof}

Note that by the definition of $Q(\pi)$, there is a natural bijection between $X=V(G_\pi)$ and $E(Q(\pi))$.
Moreover, as shown by \Cref{lem:permutation-graph-queue-layout}, two vertices of $G_\pi$ being adjacent and thus needing a different color also means that the corresponding edges in $(Q(\pi), \prec_{Q(\pi)})$ are in a nesting relation, and thus need to be on different pages in a queue layout.
As these implications also hold in the converse direction, the $k$-page queue layouts of $(Q(\pi),\prec_{Q(\pi)})$ are then (by interpreting $k$ pages assigned to $E(Q(\pi))$ as colorings of $X$ with $k$ colors) exactly the $k$-colorings of vertices of $G_\pi$.
Especially, two vertices of $X$ can have the same color in some $k$-coloring if and only if the corresponding edges of $Q(\pi)$ can have the same page in a $k$-page queue layout.
\end{statelater}
\ifthenelse{\boolean{long}}{
    \par
    Altogether, this not only shows equivalence of both problems, but also means that additional constraints on a permutation graph $k$-coloring can equivalently be expressed as additional constraints on a $k$-page-assignment that yields a queue layout.    
}{
    Here, it suffices to note that we not only have equivalence of both problems, but also that additional constraints on a permutation graph $k$-coloring can equivalently be expressed as additional constraints on a $k$-page-assignment with a fixed vertex order that yields a valid queue layout.
}
Especially, if we want a certain $k$-page queue layout, for example one that observes some predefined pages for certain edges, this is equivalent to requiring certain colors for some vertices in the coloring, i.e., an extension of a precoloring.
Using known results from permutation graph coloring, 
that precoloring extension is \NP-complete~\cite{Jan.OCC.1997}
while list-coloring for up to $k$ colors %
is in %
\XP\ parameterized by~$k$~\cite{EST.LCL.2014},
we thus %
obtain the following theorem.

\thmVGisVH*

\newcommand{\stateLemmaOnlyEdgesRemoveSafeText}{\label{lem:only-edges-remove-safe}
Let $\instance = \instanceLong$ be an instance of \QLE with $V(G) = V(H)$ that contains an edge $e \in \Eadd$ with $\Size{P(e)} \geq \madd$.
The instance $\instance' = \left(\ell, G' =  G \setminus \{e\}, H, \ql[H]\right)$ is a positive instance if and only if \instance is a positive instance.}

In light of \Cref{thm:only-edges-hard}, %
we now turn our attention to the parameterized complexity of \QLE when only edges are missing and show that this problem is fixed-parameter tractable in their number.
This matches the complexity of extending stack layouts in the same setting~\cite[Theorem~3.2]{DFGN.PCE.2024}.
In particular, we can use the same proof strategy to obtain the result\ifthenelse{\boolean{long}}{.}{: Remove all new edges $e$ that have at least $\madd$ different pages $p$ such that there is no old edge $e'$ with $\sigma(e') = p $ and $e \nestingRelation e'$, branch to determine the page assignment of the remaining edges, and, finally, check each of the \BigO{{\madd}^{\madd}} different branches in linear time {%
		if it results in} a valid queue layout.}
Nevertheless, for reasons of completeness, we provide the full details %
\ifthenelse{\boolean{long}}{in the following}{of the following theorem in the appendix}.

\ifthenelse{\boolean{long}}{To that end, we define for an edge $e \in \Eadd$, set $P(e) \subseteq [\ell]$ of \emph{admissible pages}, i.e., the set of pages such that $p \in P(e)$ if and only if $\langle\prec_{H},\sigma'_{H}\rangle$ is an $\ell$-page queue layout of $H\cup \{e\}$, where $\sigma'_H(e) = p$ and $\sigma'_H(e') = \sigma_H(e')$ for all edges $e' \in E(H)$.
Intuitively, if only $\madd$ edges are missing from $H$, and a new edge can be placed in many pages, then its concrete page assignment should not influence the existence of a solution.
This intuition is formalized in the following statement.
\begin{lemma}
\stateLemmaOnlyEdgesRemoveSafeText
\end{lemma}
}{}
\begin{prooflater}{plemmaOnlyEdgesRemoveSafe}
	We observe that removing an edge from $G$ and adapting the page assignment $\sigma$ accordingly preserves a solution \ql to \QLE for \instance.
	Thus, we focus for the remainder of the proof on the  ``($\boldsymbol{\Rightarrow}$)-direction''.
	
	\proofsubparagraph*{($\boldsymbol{\Rightarrow}$)}
	
	Let $\instance'$ be a positive instance of \QLE and \ql[G'] a corresponding solution.
	By the definition of~$e$, we can find a page $p \in P(e)$ such that we have $\sigma(e') \neq p$ for every edge $e' \in \Eadd \setminus \{e\}$.
	We now define \ql[G] as $\langle\prec_{G'},\sigma'_{G'}\rangle$, where we extend $\sigma_{G'}$ by the page assignment $\sigma'_{G'}(e) = p$ to $\sigma_{G'}$.
	To see that no two edges in \ql[G] on the same page are in a nesting relation, we first 
	observe that by the definition of $P(\cdot)$ this must be the case for $e$ and any edge $e' \in E(H)$.
	Furthermore, by our selection of $p$, this must also hold between $e$ and any edge $e' \in \Eadd$.
	Hence, \ql[G] is a valid queue layout.
	Finally, since $\ql[G']$ extends $\ql[H]$, we conclude that also \ql[G] is an extension of \ql[H].
	Thus, it witnesses that \instance is a positive instance of \QLE.
\end{prooflater}
\ifthenelse{\boolean{long}}{With \Cref{lem:only-edges-remove-safe} at hand, we can remove all new edges with at least $\madd$ admissible pages. In the following theorem we show that this is sufficient to obtain an algorithm that is fixed-parameter tractable in \madd.}{}
\theoremOnlyEdgesFPT*

\ifthenelse{\boolean{long}}{\begin{prooflater}{ptheoremOnlyEdgesFPT}}{\begin{prooflater}{ptheoremOnlyEdgesFPT}[Proof of \Cref{thm:only-edges-fpt}]}
	First, we compute for every edge $e \in \Eadd$ the set $P(e)$.
	This takes for a single edge linear time, as we can iterate through the old edges $e'$ and check for each of them whether $e \nestingRelation e'$.
	In the spirit of \Cref{lem:only-edges-remove-safe}, we remove $e$ from $G$ if $P(e) \geq \madd$.
	Overall, this takes $\BigO{\madd \cdot\Size{\instance}}$ time and results in a graph $G'$ with $H \subseteq G' \subseteq G$.
	
	Each of the remaining new edges in $G'$ can be assigned to fewer than $\madd$ pages without being in a nesting relation with an old edge.
	Hence, we can brute force their possible page assignments.
	Each of the \BigO{{\madd}^{\madd}} different branches corresponds to a different $\ell$-page queue layouts \ql of $G'$ which, by construction, extends \ql[H].
	We can create \ql in \BigO{\Size{\instance}} time by copying \ql[H] first and augmenting it with $\BigO{\madd}$ new edges.
	For each created $\ell$-page layout, we can check in linear time whether it admits a queue layout, i.e., whether no two edges on the same page are in a nesting relation.
	Note that by our pre-processing step, no new edge can be in a nesting relation with an old edge, and thus it suffices to check whether no pair of new edges $e', e''\in \Eadd \cap E(G')$ on the same page are in a nesting relation.
	If there exists a queue layout for $G'$ that extends \ql[H], then by applying \Cref{lem:only-edges-remove-safe} iteratively, we can complete it to a solution \ql[G] for~\instance.
	Otherwise, we conclude by the same lemma that \instance does not admit the desired $\ell$-page queue layout.
	Combining all, the overall running time is \BigO{{\madd}^{\madd}\cdot\Size{\instance{}}}.
\end{prooflater}

\section{\textsc{QLE} With Missing Vertices and Edges}
\label{sec:qle-kappa}
While \Cref{thm:only-edges-fpt} shows tractability of \QLE when parameterized by the number of missing edges, it only applies to the highly restrictive setting where $H$ already contains all vertices of $G$.
However, in general also vertices are missing from \ql[H] and our task is to find a spine position for them.
We now turn our attention to the complexity of \QLE when both vertices and edges are missing and recall that the problem of deciding whether a graph has a 1-page queue layout is already \NP-complete~\cite{HR.LOG.1992}.
Thus, we %
investigate the complexity of \QLE when parameterized by the number $\kappa \coloneqq \nadd + \madd$ of missing elements.
We show that under this parameter, extending queue and stack layouts admit the same complexity-theoretic behavior: being \XP-tractable but \W[1]-hard~\cite{DFGN.PCE.2024}.%
\subsection{\textsc{QLE} Parameterized by the Number of New Elements is in \textsf{XP}}
We first establish \XP-membership of \QLE when parameterized by $\kappa$ by using an algorithm analogous to the one for extending stack layouts~\cite[Theorem~5.1]{DFGN.PCE.2024}: %
We first branch to determine the spine position of the new vertices.
This gives us $\BigO{\Size{\instance}^{\nadd}}$ different branches, each corresponding to a different spine order $\prec_G$.
The fixed spine order allows us to employ our \FPT-algorithm from \Cref{thm:only-edges-fpt}, resulting in the following theorem. %
\theoremKappaXP*

\begin{prooflater}{ptheoremKappaXP}
	The algorithm that we propose below consists of two steps:
	We first guess the spine position of the new vertices, i.e., determine $\prec_G$.
	This leaves us with an instance of \QLE with $\nadd = 0$, for which we can apply in the second step \Cref{thm:only-edges-fpt}.
	
	Regarding the first step, it is important to note that a solution to \instance could assign multiple new vertices to the same interval in $\prec_{H}$.
	Therefore, we not only branch to determine the spine position of the new vertices but also to fix their order in $\prec_G$.  
	To that end, observe that $\prec_{H}$ induces $\Size{V(H)} + 1$ different intervals, out of which we have to choose $\nadd$ with repetition.
	Considering also the order of the new vertices, we can bound the number of branches by $\nadd!\cdot\binom{\Size{V(H)} + \nadd}{\nadd}$, which is, as the following reformulations underlines, in $\BigO{\Size{\instance}^{\nadd{}}}$.
	\begin{align*}
		\nadd!\cdot\binom{\Size{V(H)} + \nadd}{\nadd} &= \frac{\nadd{}!\cdot (\Size{V(H)} + \nadd{})!}{\nadd{}!\cdot ((\Size{V(H)} + \nadd{}) - \nadd{})!}\\
		&= \frac{(\Size{V(H)} + \nadd{})!}{\Size{V(H)}!}\\
		&= \prod_{i = 1}^{\nadd{}}(\Size{V(H)} + i)\\
		&= \BigO{\Size{\instance}^{\nadd{}}}
	\end{align*}
	
	As discussed above, the spine order $\prec_{G}$ is now fixed and extends $\prec_{H}$.
	Hence, in each branch we only have to check whether $\prec_{G}$ allows for a valid page assignment $\sigma_G$.
	As we have $\nadd= 0$, each branch corresponds to an instance of \QLE with only missing edges.
	Thus, we can use \Cref{thm:only-edges-fpt} to check for a single branch in $\BigO{{\madd}^{\madd}\cdot\Size{\instance}}$ time whether such an assignment $\sigma_G$ exists.
	The overall running time now follows readily.
\end{prooflater}

\Cref{thm:kappa-xp} raises the question whether we can get fixed parameter tractability under $\kappa$.
In what follows, we answer this question negatively and show \W[1]-hardness of \QLE when parameterized by the number of missing elements.

\subsection{\textsc{QLE} Parameterized by the Number of New Elements is Hard}
\label{sec:kappa-w-1}
\newcommand{\stateEdgeGadgetProperty}{%
\begin{restatable}{property}{propertyEdgeGadget}
    \label{prop:edge-gadget}
    Let \ql be a solution to an instance of \QLE with $e = v_{\alpha}^iv_{\beta}^j \in E(G_C)$ and $x_{\alpha},x_{\beta} \in \Vadd$ with $\alpha < \beta$.
	If $u_{\alpha}^1 \prec x_{\alpha} \prec u_{\alpha}^{n_{\alpha} + 1}$, $u_{\beta}^1 \prec x_{\beta} \prec u_{\beta}^{n_{\beta} + 1}$, and $\sigma(x_{\alpha}x_\beta) = p_e$, then $x_{\alpha}$ is placed in  $\intervalPlacing{v_{\alpha}^i}$ and $x_{\beta}$ in $\intervalPlacing{v_{\beta}^j}$.
\end{restatable}}
\newcommand{\stateBotVerticesProperty}{
\begin{restatable}{property}{propertyBotVerticesNotVisible}
    \label{prop:bot-vertices-not-visible}
    For every color $\alpha \in [k]$ and 
    {%
    	for the page $p \in [\ell]$ of every edge gadget}, the vertex $u_{\alpha}^{\bot}$ is not visible on $p$ for any vertex $v$ with $u_1^{\bot_L} \prec v \prec u_{\alpha}^{\bot_L}$ or $u_{\alpha}^{n_{\alpha} + 1} \prec v \prec u_{k + 1}^{\bot_R}$.
\end{restatable}}
\newcommand{\stateFixationGadgetProperty}{
\begin{restatable}{property}{propertyFixationGadget}
    \label{prop:fixation-gadget-property}
    Let \ql be a solution to an instance of \QLE that contains a fixation gadget on $k$ vertices and has \Cref{prop:bot-vertices-not-visible} for every page $p \in [\ell] \setminus \{p_d\}$.
    For every color $\alpha \in [k]$ we have $u_{\alpha}^1 \prec x_{\alpha} \prec u_{\alpha}^{n_{\alpha} + 1}$.
\end{restatable}}
\newcommand{\showEdgeGadgetFigure}{
\begin{figure}[t]
	\centering
	\includegraphics[page=2]{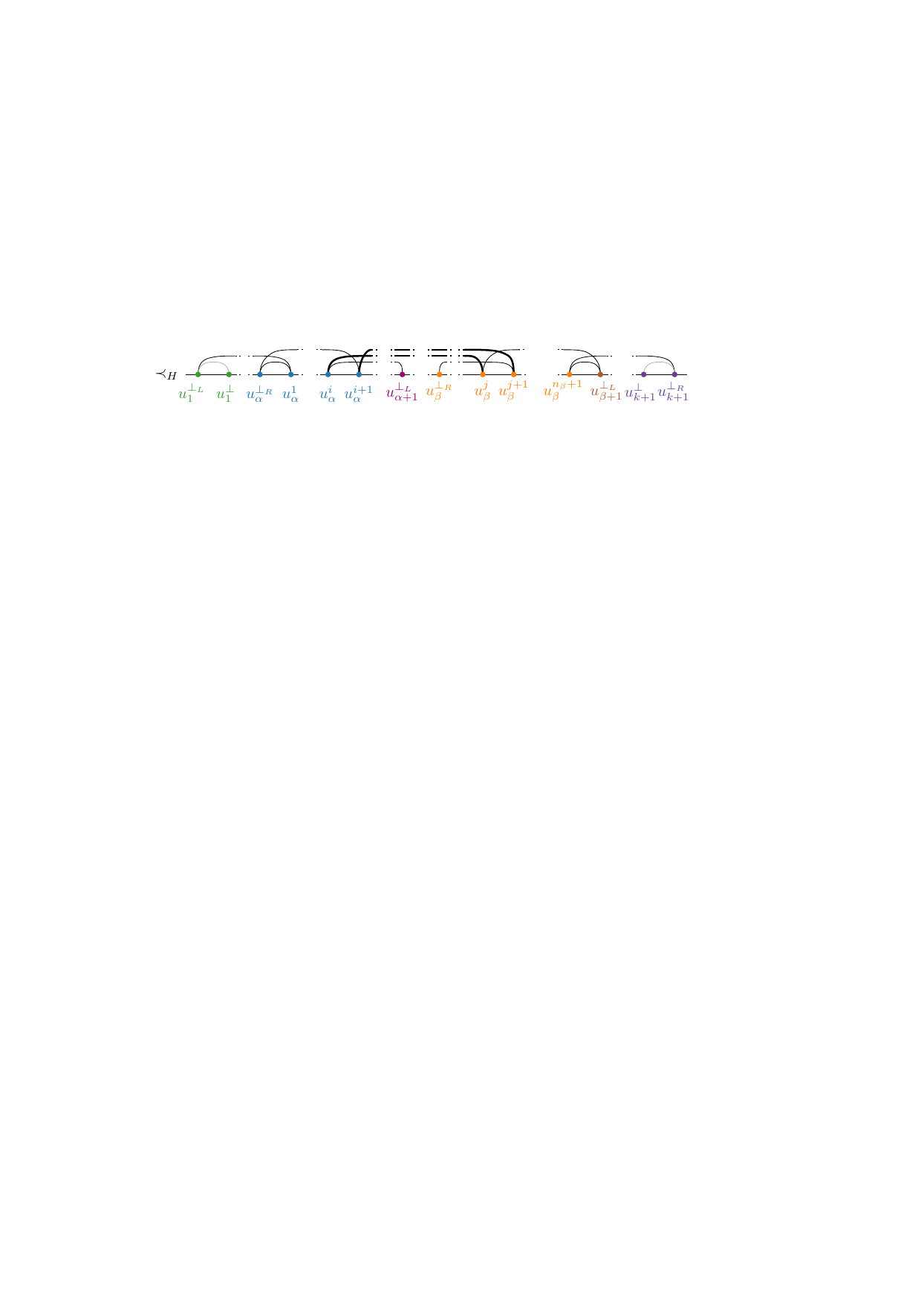}
    \ifthenelse{\boolean{long}}{\caption{Visualization of the edges on the page $p_e$ for the edge $e = v_{\alpha}^iv_{\beta}^j \in E(G_C)$. The {%
    			bold edges form the twist on $p_e$ that we use to synchronize the placement of the vertices and the} gray edges are part of the fixation gadget.}}{\caption{Visualization of the spine order $\prec_H$ and the edge gadget for the edge $e = v_{\alpha}^iv_{\beta}^j \in E(G_C)$ on page $p_e$. The {%
    			bold edges form the twist on $p_e$ that we use to synchronize the placement of the vertices and the} gray edges are part of the fixation gadget.}}
	
	\label{fig:w-1-edge-gadget}
\end{figure}
}
We now show \W[1]-hardness of \QLE when parameterized by $\kappa$.
Our reduction uses the idea and framework of the reduction %
for 
establishing \W[1]-hardness of the problem of extending stack layouts under this parameter~\cite[Theorem~6.4]{DFGN.PCE.2024}.
However, the substantially different notion of visibility makes %
the construction of new gadgets inevitable.
We reduce from \probname{Multi-colored Clique} (\MCC):
Given a graph $G_C$, an integer $k \geq 1$, and a partition of $V(G_C)$ into $k$ independent color sets, i.e., $V(G_C) = V_1\ \dot\cup\ \ldots\ \dot\cup\ V_k$, does there exist a colorful $k$-clique $\mathcal{C}$ in $G_C$?
\MCC is well-known to be \W[1]-hard parameterized by $k$~\cite{CFK+.PA.2015}.

For the remainder of this section, we let $(G_C, k, (V_1, \ldots, V_k))$ be an instance of \MCC with $N = \Size{V(G_C)}$ vertices and $M = \Size{E(G_C)}$ edges.
We use Greek letters to denote the color of a vertex to separate it from its index{%
	, i.e., we let $v_{\alpha}^i$ denote the $i$th vertex of color $\alpha$}.
Based on the instance of \MCC, we construct an instance \instanceLong of \QLE.
\ifthenelse{\boolean{long}}{

In the following, we first give an overview of the reduction  %
before we give a precise description of its construction and establish its correctness. %
}{}
To facilitate description and presentation of our reduction, we allow multi-edges in $H$.
We show in \Cref{app:remove-multi-edges} how we can redistribute the multi-edges over multiple auxiliary vertices while preserving the correctness of the reduction.

\ifthenelse{\boolean{long}}{\begin{figure}
	\centering
	\includegraphics[page=1]{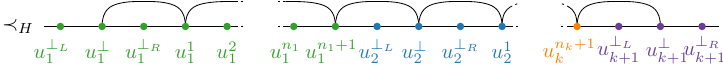}
	\caption{The spine order of $V(H)$ in our reduction and the edges of the fixation gadget on page $p_d$. Vertices of different colors $\alpha \in [k]$ are colored differently. The dummy vertices are colored lilac.}
	\label{fig:w-1-base-layout}
\end{figure}}{}

\subparagraph*{Overview of the Reduction.}
\ifthenelse{\boolean{long}}{}{
{%
At a high level, the constructed instance of \QLE contains $k$ new vertices, one for each color.
For every vertex of $G_C$, we create an interval on the spine $\prec_H$.
We represent membership in $\mathcal{C}$ through the placement of new vertices in the intervals corresponding to vertices in $\mathcal{C}$.
To guarantee that every solution to \MCC corresponds to a solution to \QLE and vice versa, we have to ensure that we select exactly one vertex for each color $\alpha \in [k]$, and that they form a clique. 
For the former requirement, we use a gadget that forces one of the new vertices to be placed in an interval of color $\alpha$.
For the latter property, we let the new vertices form a $k$-clique in $G$ and introduce for each edge $e$ of $G_C$ a further gadget.
This gadget contains a twist that restricts the placement of the endpoints of edges assigned to this page.
In particular, it forces them to lie in the intervals corresponding to the endpoints of $e$, i.e., we only select pairwise adjacent vertices.
With the above intuition in mind, we now describe our reduction in more detail.
}
}

\ifthenelse{\boolean{long}}{}
{\subsubsection{Description of the Reduction.}}
The spine order $\prec_H$ forms the core of our reduction.
To construct it, we introduce for each of the $N$ \emph{original} vertices $v_{\alpha}^i \in V(G_C)$ a \emph{copy} $u_{\alpha}^i \in V(H)$.
Furthermore, we introduce for each color $\alpha$ four dummy vertices $u_{\alpha}^{\bot_L}$, $u_{\alpha}^{\bot}$, $u_{\alpha}^{\bot_R}$, and $u_{\alpha}^{n_{\alpha} + 1}$, with $n_{\alpha} \coloneqq \Size{V_{\alpha}}$, and overall three further dummy vertices $u_{k + 1}^{\bot_L}$, $u_{k + 1}^{\bot}$, and $u_{k + 1}^{\bot_R}$.
\ifthenelse{\boolean{long}}{\Cref{fig:w-1-base-layout}}{\Cref{fig:w-1-edge-gadget}} visualizes the spine order~$\prec_H$, which is the transitive closure of the following orders:
For every $\alpha \in [k + 1]$, we set $u_{\alpha}^{\bot_L} \prec u_{\alpha}^{\bot} \prec u_{\alpha}^{\bot_R}$.
Furthermore, for every $\alpha \in [k]$, we set $u_{\alpha}^{\bot_R} \prec u_{\alpha}^{1}$ and $u_{\alpha}^{n_{\alpha} + 1} \prec u_{\alpha + 1}^{\bot_L}$, and for every $j \in [n_{\alpha}]$ we set $u_{\alpha}^j \prec u_{\alpha}^{j + 1}$.

Observe that for each original vertex $v_{\alpha}^i \in V(G_C)$ we have the interval $[u_{\alpha}^i, u_{\alpha}^{i + 1}]$ on the spine.
We say that the interval $[u_{\alpha}^i, u_{\alpha}^{i + 1}]$ \emph{corresponds} to $v_{\alpha}^i$ and is of color $\alpha$.
We address $[u_{\alpha}^i, u_{\alpha}^{i + 1}]$ with~$\intervalPlacing{v_{\alpha}^i}$.
The graph $G[\Vadd]$ forms a $k$-clique {%
	representing $\mathcal{C}$ in $G$} on the new vertices $\Vadd = \{x_1, \ldots, x_k\}$.
\ifthenelse{\boolean{long}}{To establish an equivalence between the solutions to the individual problems, we must ensure that $\mathcal{C}$ is a clique and colorful.}{}

\ifthenelse{\boolean{long}}{To do the former, we must ensure that two new vertices $x_{\alpha}$ and $x_{\beta}$ can only be placed in $\intervalPlacing{v_{\alpha}^i}$ and $\intervalPlacing{v_{\beta}^j}$, respectively, if $v_{\alpha}^iv_{\beta}^j \in E(G_C)$.
We introduce below the \emph{edge gadget} which enforces this.
For the latter property, each new vertex $x_{\alpha}$ should only be placed in an interval corresponding to a vertex of color $\alpha$.
To ensure this property, we describe the \emph{fixation gadget} afterwards.}{%
To ensure that {%
the selected vertices} form a clique in $G_C$, we introduce for each edge $e = v_{\alpha}^iv_{\beta}^j \in E(G_C)$, $\alpha < \beta$, a so-called \emph{edge gadget}; see \Cref{fig:w-1-edge-gadget}.
This gadget consists of a page $p_e$ on which we place a twist between $\intervalPlacing{v_{\alpha}^i}$ and $\intervalPlacing{v_{\beta}^j}$.
{%
We observe that we introduce one page for each edge of $G_C$.}
{%
The aforementioned} twist ensures that if $x_{\alpha}$ is placed in $\intervalPlacing{v_{\alpha}^i}$, it can see $x_{\beta}$ only if the latter vertex is placed in~$\intervalPlacing{v_{\beta}^j}$.
\showEdgeGadgetFigure
To ensure that the new edges on page $p_e$ can only be used to connect~$x_{\alpha}$ and~$x_{\beta}$ if they are placed in the right intervals, we furthermore introduce edges that block visibility for intervals of color $\gamma$ with $\gamma < \alpha$ (\Cref{fig:w-1-edge-gadget} left), $\alpha \leq \gamma \leq \beta$ (middle), and $\beta < \gamma$ (right).
More formally, we show in \Cref{app:reduction} that the edge gadget has the following properties.
\stateEdgeGadgetProperty
\stateBotVerticesProperty

To ensure that $\mathcal{C}$ is colorful, we must %
force the vertex $x_{\alpha}$ to be placed in an interval of color $\alpha$ for all $\alpha \in [k]$.
We introduce for this purpose the \emph{fixation gadget}.
It consists of $2k$ new edges connecting $x_{\alpha}$ with $u_{\alpha}^\bot$ and $u_{\alpha + 1}^\bot$ for all $\alpha \in [k]$.
These edges are accompanied by the edges $u_{\alpha}^{\bot}u_{\alpha}^1$, $u_{\alpha}^1u_{\alpha}^{n_{\alpha} + 1}$, and $u_{\alpha}^{n_{\alpha} + 1}u_{\alpha + 1}^{\bot}$.
We place these edges on a new page $p_d$ associated to the fixation gadget to ensure that the former vertex can see the latter vertices on $p_d$ only if $u_{\alpha}^1 \prec x_{\alpha} \prec u_{\alpha}^{n_{\alpha} + 1}$.
To reduce visibility on the other pages, we introduce on every page $p \in [\ell] \setminus \{p_d\}$ the edges $u_1^{\bot_L}u_1^{\bot}$ and $u_{k + 1}^{\bot}u_{k + 1}^{\bot_R}$, drawn in gray in \Cref{fig:w-1-edge-gadget}.
As discussed in \Cref{app:reduction}, the gadget has the following property.
\stateFixationGadgetProperty
}

\ifthenelse{\boolean{long}}{
\subparagraph*{Edge Gadget.}}{}
\begin{statelater}{detailsEdgeGadget}
Let $e = v_{\alpha}^iv_{\beta}^j \in E(G_C)$, $\alpha < \beta$ be an edge of $G_C$.
Recall that the $k$ new vertices form a $k$-clique in $G$.
Hence, every pair $x_{\alpha}$ and $x_{\beta}$ of new vertices is adjacent in $G$.
The edge gadget for $e$ consists of a page $p_e$ to which we assign a set of edges to ensure that the new edge $x_{\alpha}x_{\beta}$ can only be assigned to $p_e$ if $x_{\alpha}$ is placed in $\intervalPlacing{v_{\alpha}^i}$ and $x_{\beta}$ in $\intervalPlacing{v_{\beta}^j}$.
{%
	Note that we introduce one page for each edge of $G_C$.}
We create the following edges in $H$, also visualized in \Cref{fig:w-1-edge-gadget}, all of which are assigned to the page $p_e$.

First, we create the edges $u_1^{\bot_L}u_{\alpha}^{1}$ and $u_{\alpha}^{\bot_R}u_{\alpha}^{1}$; see \Cref{fig:w-1-edge-gadget} left.
Second, we create a symmetric set of edges for $\beta$, i.e., $u_{\beta}^{n_{\beta}+1}u_{\beta + 1}^{\bot_L}$ and $u_{\beta}^{n_{\beta}+1}u_{k + 1}^{\bot_R}$; see \Cref{fig:w-1-edge-gadget} right.
\ifthenelse{\boolean{long}}{\showEdgeGadgetFigure}{}
The purpose of these edges becomes clear once we discuss the fixation gadget.
Third, we create the main component of the gadget, visualized in \Cref{fig:w-1-edge-gadget} middle.
It consists of a twist between $\intervalPlacing{v_{\alpha}^i}$ and $\intervalPlacing{v_{\beta}^j}$ with two whiskers on either side to restrict visibility in each interval between $u_{\alpha}^1$ and $u_{\beta}^{n_{\beta} + 1}$.
More concretely, the twist consists of the edges $u_{\alpha}^iu_{\beta}^j$ and $u_{\alpha}^{i + 1}u_{\beta}^{j + 1}$.
The whiskers consist of the edges $u_{\alpha}^{\bot_R}u_{\alpha}^{i + 1}$, $u_{\alpha}^{i}u_{\alpha + 1}^{\bot_L}$, $u_{\beta}^{\bot_R}u_{\beta}^{j + 1}$, and $u_{\beta}^ju_{\beta + 1}^{\bot_L}$.
In concert with the above introduced edges they ensure that $x_{\alpha}$, when placed in $\intervalPlacing{v_{\alpha}^i}$, can see on page $p_e$ only $\intervalPlacing{v_{\beta}^j}$.
Furthermore, the twist, implicitly, removes visibility between $x_{\alpha}$ and $x_{\beta}$ on page $p_e$ if we have $u_{\alpha}^{i + 1} \prec x_{\alpha} \prec x_{\beta} \prec u_{\beta}^j$.

This completes the construction of the gadget and %
we summarize now its properties.
\ifthenelse{\boolean{long}}{\stateEdgeGadgetProperty}{\propertyEdgeGadget*}
\begin{lemma}
    \label{lem:edge-gadget-property}
    Our constructed instance of \QLE has \Cref{prop:edge-gadget}.
\end{lemma}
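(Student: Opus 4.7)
The plan is to derive, for every placement of $x_\alpha$ and $x_\beta$ that deviates from the target intervals, a nesting conflict on page $p_e$ using one of the edges of the edge gadget. Since $u_\alpha^1 \prec x_\alpha \prec u_\alpha^{n_\alpha+1}$ and $x_\alpha$ is a new vertex distinct from every old one, it lies strictly between two consecutive copies, so $u_\alpha^{i'} \prec x_\alpha \prec u_\alpha^{i'+1}$ for exactly one index $i' \in [n_\alpha]$; analogously $u_\beta^{j'} \prec x_\beta \prec u_\beta^{j'+1}$ for some $j' \in [n_\beta]$. Because $\alpha < \beta$, the global spine order also guarantees $x_\alpha \prec u_\beta^{\bot_R}$ and $u_{\alpha+1}^{\bot_L} \prec x_\beta$. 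The task reduces to proving $i' = i$ and $j' = j$, which I achieve by four one-sided arguments, each ruling out one type of deviation.

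First, I would establish the outer bounds $i' \geq i$ and $j' \leq j$ using the two whiskers on page $p_e$. If $i' < i$, then $x_\alpha \prec u_\alpha^{i'+1} \preceq u_\alpha^i$, and together with $u_{\alpha+1}^{\bot_L} \prec x_\beta$ this gives $x_\alpha \prec u_\alpha^i \prec u_{\alpha+1}^{\bot_L} \prec x_\beta$; hence $x_\alpha x_\beta$ nests the whisker $u_\alpha^i u_{\alpha+1}^{\bot_L}$, contradicting validity of \ql. Symmetrically, if $j' > j$, then $u_\beta^{j+1} \preceq u_\beta^{j'} \prec x_\beta$, and combined with $x_\alpha \prec u_\beta^{\bot_R}$ this shows that $x_\alpha x_\beta$ nests the whisker $u_\beta^{\bot_R} u_\beta^{j+1}$, again a contradiction.

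Finally, the remaining deviations $i' > i$ and $j' < j$ are excluded by the two twist edges, once the outer bounds are in hand. If $i' > i$, then $u_\alpha^{i+1} \prec x_\alpha$; using $x_\beta \prec u_\beta^{j+1}$ from $j' \leq j$ we get $u_\alpha^{i+1} \prec x_\alpha \prec x_\beta \prec u_\beta^{j+1}$, so the twist edge $u_\alpha^{i+1} u_\beta^{j+1}$ nests $x_\alpha x_\beta$. Symmetrically, if $j' < j$, then $x_\beta \prec u_\beta^j$; using $u_\alpha^i \prec x_\alpha$ from $i' \geq i$ we obtain $u_\alpha^i \prec x_\alpha \prec x_\beta \prec u_\beta^j$, so the twist edge $u_\alpha^i u_\beta^j$ nests $x_\alpha x_\beta$. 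Combining the four inequalities yields $i' = i$ and $j' = j$, which is exactly the claim. I do not anticipate any substantial obstacle beyond the careful case bookkeeping; the main conceptual point is that the two whiskers act as one-sided barriers confining $x_\alpha$ to the right of $u_\alpha^i$ and $x_\beta$ to the left of $u_\beta^{j+1}$, while the two crossing edges of the twist then deliver the matching barriers in the opposite directions once one of the outer bounds has been secured.
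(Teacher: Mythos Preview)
Your argument is correct and uses exactly the same four gadget edges as the paper---the two whiskers $u_\alpha^{i}u_{\alpha+1}^{\bot_L}$ and $u_\beta^{\bot_R}u_\beta^{j+1}$ together with the two twist edges $u_\alpha^{i}u_\beta^{j}$ and $u_\alpha^{i+1}u_\beta^{j+1}$---to produce the required nesting contradictions on page $p_e$.

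The only difference is organizational. The paper first completely pins down $x_\alpha$: it shows $u_\alpha^{i}\prec x_\alpha$ via the whisker $u_\alpha^{i}u_{\alpha+1}^{\bot_L}$, and then excludes $u_\alpha^{i+1}\prec x_\alpha$ by a nested case split on whether $x_\beta\prec u_\beta^{j+1}$ (twist edge) or $u_\beta^{j+1}\prec x_\beta$ (whisker $u_\beta^{\bot_R}u_\beta^{j+1}$); only afterwards does it locate $x_\beta$ via the twist. You instead argue symmetrically: first both ``outer'' bounds $i'\ge i$ and $j'\le j$ from the two whiskers independently, and then both ``inner'' bounds from the twist using the outer bound on the other side. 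Your decomposition avoids the paper's nested case analysis and makes the roles of whiskers versus twist edges more transparent, at no cost in rigor.
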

\begin{proof}
	Let \ql[G] be a solution to \instance for which we have $u_{\alpha}^1 \prec x_{\alpha} \prec u_{\alpha}^{n_{\alpha} + 1}$, $u_{\beta}^1 \prec x_{\beta} \prec u_{\beta}^{n_{\beta} + 1}$, and $\sigma(x_{\alpha}x_\beta) = p_e$ for $e = v_{\alpha}^iv_{\beta}^j \in E(G_C)$.
	In the following, we first show that this implies $u_{\alpha}^i \prec x_{\alpha}$ and $x_{\alpha} \prec u_{\alpha}^{i + 1}$.
	Thus, we show that the only feasible position for $x_{\alpha}$ that remains is $\intervalPlacing{v_{\alpha}^i}$.
	Once this has been established, the twist on page $p_e$ implies $x_{\beta}$ in $\intervalPlacing{v_{\beta}^j}$.

	We first observe that $u_{\alpha}^1 \prec x_{\alpha} \prec u_{\alpha}^i$ gives us no placement for $x_{\beta}$ that is compatible with $\sigma(x_{\alpha}x_\beta) = p_e$:
	As $\alpha < \beta$, we obtain $x_{\alpha} \prec u_{\alpha}^i \prec u_{\alpha + 1}^{\bot_L} \prec x_{\beta}$.
	Since $\sigma(u_{\alpha}^iu_{\alpha + 1}^{\bot_L}) = p_e$, we obtain that $x_{\alpha}x_{\beta}$ nests the edge $u_{\alpha}^iu_{\alpha + 1}^{\bot_L}$ on page $p_e$.
	Hence, we must have $u_{\alpha}^{i} \prec x_{\alpha}$.

	Next, we show that $u_{\alpha}^{i + 1} \prec x_{\alpha}$ is impossible due to the absence of a placement of $x_{\beta}$ that avoids a nesting relation on $p_e$.
	First, we observe that $x_{\beta} \prec u_{\beta}^{j + 1}$ is impossible due to $\sigma(u_{\alpha}^{i + 1}u_{\beta}^{j + 1}) = p_e$, as we have $u_{\alpha}^{i + 1} \prec x_{\alpha} \prec x_{\beta} \prec u_{\beta}^{j + 1}$.
	Hence, from $u_{\alpha}^{i + 1} \prec x_{\alpha}$, the only placement that is still possible is $u_{\beta}^{j + 1} \prec x_{\beta}$.
	However, having $u_{\beta}^{j + 1} \prec x_{\beta}$ gives us $x_{\alpha} \prec u_{\beta}^{\bot_R} \prec u_{\beta}^{j + 1} \prec x_{\beta}$.
	Hence, $x_{\alpha}x_{\beta}$ nests $u_{\beta}^{\bot_R}u_{\beta}^{j + 1}$ and since both edges are on $p_e$, we conclude that also $u_{\beta}^{j + 1} \prec x_{\beta}$ is impossible.
	This brings us to the conclusion that $u_{\alpha}^{i + 1} \prec x_{\alpha}$ is also impossible.
	We thus obtain $u_{\alpha}^{i} \prec x_{\alpha} \prec u_{\alpha}^{i + 1}$, i.e., $x_{\alpha}$ is placed in $\intervalPlacing{v_{\alpha}^i}$.
	
	To complete the proof of this lemma, observe that $x_{\alpha}$ in $\intervalPlacing{v_{\alpha}^i}$, $x_{\alpha} \prec x_{\beta}$, and $\sigma(x_{\alpha}x_{\beta}) = p_e$ implies $x_{\beta}$ in $\intervalPlacing{v_{\beta}^j}$ thanks to the twist on the page $p_e$:
	$x_{\beta} \prec u_{\beta}^j$ implies $u_{\alpha}^i \prec x_{\alpha} \prec x_{\beta} \prec u_{\beta}^j$, which leads to $x_{\alpha}x_{\beta}$ being nested by the edge $u_{\alpha}^iu_{\beta}^j$, and $u_{\beta}^{j + 1} \prec x_{\beta}$ implies $x_{\alpha} \prec u_{\alpha}^{i + 1} \prec u_{\beta}^{j + 1} \prec x_{\beta}$, which causes the edge $u_{\alpha}^{i + 1}u_{\beta}^{j + 1}$ to be nested by $x_{\alpha}x_{\beta}$.
	Hence, we have $u_{\beta}^j \prec x_{\beta} \prec u_{\beta}^{j + 1}$.
	Combining all, we conclude that $x_{\alpha}$ is placed in $x_{\alpha}$ in $\intervalPlacing{v_{\alpha}^i}$ and $x_{\beta}$ is placed in $\intervalPlacing{v_{\beta}^j}$.%
\end{proof}

\ifthenelse{\boolean{long}}{\stateBotVerticesProperty}{\propertyBotVerticesNotVisible*}
\begin{lemma}
	\label{lem:bot-vertices-not-visible}
	Our constructed instance of \QLE has \Cref{prop:bot-vertices-not-visible}.
\end{lemma}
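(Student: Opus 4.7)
The plan is to prove the property by a case analysis on the position of the property's color $\alpha$ relative to the colors of the endpoints of the edge $e$ defining the page $p = p_e$, together with the position of the vertex $v$. Throughout, let $e = v_{\gamma}^i v_{\delta}^j \in E(G_C)$ with $\gamma < \delta$ (I relabel the gadget's $\alpha,\beta$ as $\gamma,\delta$ to avoid a clash with the property's $\alpha$). By the inherent left--right symmetry of the edge gadget---the first set of edges around $u_\gamma^1$ mirrors the second set around $u_\delta^{n_\delta+1}$, and similarly for the two pairs of whiskers---I would focus on the left range $u_1^{\bot_L} \prec v \prec u_\alpha^{\bot_L}$ and obtain the right range by analogous reasoning.

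The underlying observation is that every potential edge $v u_\alpha^\bot$ from this range must form a nesting with some edge already placed on $p_e$ by the gadget. When $\alpha \leq \gamma$, we have $u_\alpha^\bot \preceq u_\gamma^\bot \prec u_\gamma^1$; combined with $u_1^{\bot_L} \prec v \prec u_\alpha^\bot$, this shows that the first-set edge $u_1^{\bot_L} u_\gamma^1$ nests $v u_\alpha^\bot$. When $\alpha > \gamma$ and $v \prec u_\gamma^{\bot_R}$, the short edge $u_\gamma^{\bot_R} u_\gamma^1$ lies strictly in the open interval $(v, u_\alpha^\bot)$ on the spine (since $u_\gamma^1 \prec u_\alpha^{\bot_L}$ whenever $\alpha > \gamma$), so $v u_\alpha^\bot$ nests it.

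For the remaining sub-cases, where $\alpha > \gamma$ and $u_\gamma^{\bot_R} \preceq v \prec u_\alpha^{\bot_L}$, I would combine the whisker $u_\gamma^i u_{\gamma+1}^{\bot_L}$---whose right endpoint lies strictly left of $u_\alpha^\bot$ whenever $\alpha > \gamma$---with the whisker $u_\gamma^{\bot_R} u_\gamma^{i+1}$ and the twist edge $u_\gamma^{i+1} u_\delta^{j+1}$ to cover all positions of $v$, further splitting on whether $\alpha \leq \delta$ or $\alpha > \delta$ to select the correct witness (in particular, when $\alpha > \delta$ the endpoint $u_\delta^{j+1}$ of the twist lies strictly left of $u_\alpha^\bot$, giving the needed nesting). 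The hard part will be making the case analysis exhaustive and picking the correct blocking edge at each boundary: when $v$ coincides with a gadget vertex such as $u_\gamma^i$ or $u_\gamma^{i+1}$, the most natural candidate edge shares that endpoint with $v$ and therefore cannot nest with $v u_\alpha^\bot$, so a different witness (for example the twist $u_\gamma^i u_\delta^j$ or an edge from the second set around $\delta$) must be substituted. The symmetric argument for the right range follows by exchanging the roles of $\gamma$ and $\delta$ and using the second-set edges together with the whiskers anchored at $\delta$.
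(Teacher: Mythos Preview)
Your overall strategy---case analysis locating, for each position of $v$, an edge on $p_e$ forced into a nesting relation with $v u_\alpha^\bot$, together with the appeal to left--right symmetry---is exactly what the paper does. The paper, however, organises the primary split by the position of $v$ relative to the two left twist-endpoints $u_\gamma^i$ and $u_\delta^j$ (the three cases $v\prec u_\gamma^i$, $u_\gamma^i\prec v\prec u_\delta^j$, $u_\delta^j\prec v$) and only secondarily by $\alpha$ versus $\gamma,\delta$; you reverse this order.

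Your Case~3 has a genuine gap. First, the whisker $u_\gamma^{\bot_R} u_\gamma^{i+1}$ you list is useless there: once $u_\gamma^{\bot_R}\preceq v$ and $\alpha>\gamma$ (so $u_\gamma^{i+1}\prec u_\alpha^\bot$), neither direction of nesting with $v u_\alpha^\bot$ is possible. Second---and this is the missing piece---for $\gamma<\alpha\le\delta$ the interval $u_\gamma^i\prec v\prec u_\gamma^{i+1}$ is not covered by any of your three named edges; the correct witness (and the one the paper uses in its Case~2) is the \emph{first} twist edge $u_\gamma^i u_\delta^j$, which nests $v u_\alpha^\bot$ since $u_\gamma^i\prec v\prec u_\alpha^\bot\prec u_\delta^j$. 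That single edge in fact handles all of $u_\gamma^i\prec v\prec u_\alpha^{\bot_L}$ at once. For $\alpha>\delta$ and $v$ beyond the $\gamma$-block you will further need the $\delta$-whisker $u_\delta^j u_{\delta+1}^{\bot_L}$ and the second-set edges $u_\delta^{n_\delta+1}u_{\delta+1}^{\bot_L}$ and $u_\delta^{n_\delta+1}u_{k+1}^{\bot_R}$, none of which you mention. Finally, your boundary worry is misplaced: the property is only ever applied with $v=x_\alpha$ a new vertex, so $v$ never coincides with an old gadget vertex (the paper's own case split also uses strict inequalities there); and in fact for $v=u_\gamma^i$ with $\gamma<\alpha\le\delta$ no blocking edge on $p_e$ exists at all, so your claim that ``a different witness can be substituted'' is false at that point.
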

\begin{proof}
	Consider a vertex $v$ such that $u_1^{\bot_L} \prec v \prec u_{\alpha}^{\bot_L}$ and a page $p_e \in [\ell] \setminus \{p_d\}$ for an edge $e = v_{\beta}^iv_{\gamma}^j \in E(G_C)$.
	Without loss of generality we assume $\beta < \gamma$.
	We can identify the following cases depending on the relative order of $v$ with respect to the copies of the endpoints of the edge $e$.
	
	\proofsubparagraph*{Case 1: $v \prec u_{\beta}^i \prec u_{\gamma}^j$.}
	There are two sub-cases we can consider: $\alpha \leq \beta$ and $\beta < \alpha$.
	For the first sub-case, i.e., $\alpha \leq \beta$, it is sufficient to observe that we have $\sigma(u_1^{\bot_L}u_{\beta}^1) = p_e$ and (thanks to $v \prec u_{\alpha}^{\bot_L}$ for $\alpha = \beta$)  $u_1^{\bot_L} \prec v \prec u_{\alpha}^{\bot} \prec u_{\beta}^1 \preceq u_{\beta}^i$.
	Hence, already the existence of the edge $u_1^{\bot_L}u_{\beta}^1$ on this page blocks visibility to $u_{\alpha}^{\bot}$ for $v$.
	For the second sub-case, i.e., $\beta < \alpha$, we immediately get the desired property by combining $v \prec u_{\beta}^i \prec u_{\beta + 1}^{\bot_L} \prec u_{\alpha}^{\bot}$ and $\sigma(u_{\beta}^iu_{\beta + 1}^{\bot_L}) = p_e$.\looseness=-1
		
	\proofsubparagraph*{Case 2: $u_{\beta}^i \prec v \prec u_{\gamma}^j$.}
	Recall that we have $u_1^{\bot_L} \prec v \prec u_{\alpha}^{\bot_L}$.
	Hence, $u_{\beta}^i \prec v \prec u_{\gamma}^j$ implies $\beta < \alpha$, which leaves us this time with the following two sub-cases:
	$\alpha \leq \gamma$ and $\gamma < \alpha$.
	In the first sub-case, i.e., $\alpha \leq \gamma$, we make use of the edge $u_{\beta}^iu_{\gamma}^j$.
	Combining the prerequisites of the sub-case, we get $u_{\beta}^i \prec v \prec u_{\alpha}^{\bot_L} \prec u_{\alpha}^{\bot} \prec u_{\gamma}^j$.
	Hence, the edge $u_{\beta}^iu_{\gamma}^j$ on page $p_e$ would nest with $x_{\alpha}u_{\alpha}^{\bot}$, thus preventing $v$ from seeing $u_{\alpha}^{\bot}$ on $p_e$.
	The second sub-case, i.e., $\gamma < \alpha$, yields $v \prec u_{\gamma}^j \prec u_{\gamma}^{n_{\gamma}+1} \prec u_{\alpha}^{\bot}$.
	Hence, the edge $u_{\gamma}^{j}u_{\gamma}^{n_{\gamma} + 1}$ with $\sigma(u_{\gamma}^{j}u_{\gamma}^{n_{\gamma} + 1}) = p_e$ is sufficient to prevent visibility for $v$ to $u_{\alpha}^\bot$.\looseness=-1
	
	\proofsubparagraph*{Case 3: $u_{\beta}^i \prec u_{\gamma}^j \prec v$.}
	For the third and last case we obtain the desired property almost immediately.
	Observe that $u_{\beta}^i \prec u_{\gamma}^j \prec v$ and $v \prec u_{\alpha}^{\bot_L}$ imply $u_{\gamma}^j \prec v \prec u_{\alpha}^{\bot} \prec u_{k + 1}^{\bot_R}$.
	If we have $v \prec u_{\gamma}^{n_{\gamma} + 1}$, we can use the edge $u_{\gamma}^{n_{\gamma}+1}u_{\gamma + 1}^{\bot_L}$ on page $p_e$ together with the fact that $u_{\gamma + 1}^{\bot_L} \prec u_{\gamma + 1}^{\bot} \preceq u_{\alpha}^{\bot}$ which gives us $u_{\gamma}^j \prec v \prec u_{\gamma}^{n_{\gamma} + 1} \prec u_{\gamma + 1}^{\bot_L} \prec u_{\alpha}^{\bot}$.
	Thus, $v$ cannot see $u_{\alpha}^{\bot}$ on $p_e$.
	Otherwise, i.e., if we have $u_{\gamma}^{n_{\gamma} + 1} \prec v$, the existence of the edge $u_{\gamma}^{n_{\gamma}+1}u_{k + 1}^{\bot_R}$ on page $p_e$ together with $u_{\gamma}^{n_{\gamma} + 1} \prec v \prec u_{\alpha}^{\bot} \prec u_{k + 1}^{\bot_R}$ is sufficient to conclude that $u_{\alpha}^{\bot}$ is not visible for $v$ on page $p_e$.\looseness=-1
	
	Combining all three cases, we conclude that our construction has \Cref{prop:bot-vertices-not-visible} if $u_1^{\bot_L} \prec v \prec u_{\alpha}^{\bot_L}$.
	The setting with $u_{\alpha}^{n_{\alpha} + 1} \prec v \prec u_{k + 1}^{\bot_R}$ can be shown by an analogous, symmetric case analysis.
\end{proof}

\end{statelater}
\ifthenelse{\boolean{long}}{\subparagraph*{Fixation Gadget.}}{}
\begin{statelater}{detailsFixationGadget}
Observe that \Cref{prop:edge-gadget} requires the new vertices to be placed between $u_{\alpha}^1$ and $u_{\alpha}^{n_{\alpha}  + 1}$ and $u_{\beta}^1$ and $u_{\beta}^{n_{\beta}  + 1}$, respectively.
Furthermore, recall that a colorful clique must contain one vertex from each color.
Hence, the correctness of our reduction hinges on a way to restrict the placement of new vertices to be between specific old vertices.
The fixation gadget, which we describe in the following and visualize in \ifthenelse{\boolean{long}}{\Cref{fig:w-1-base-layout}}{\Cref{fig:w-1-example}}, gives us a way to achieve this.

The gadget consists of $2k$ new edges $\{x_{\alpha}u_{\alpha}^\bot, x_{\alpha}u_{\alpha + 1}^\bot \mid \alpha \in [k]\} \subset \Eadd$.
Thanks to a set of old edges, which we introduce below, $x_{\alpha}$ can see $u_{\alpha}^\bot$ and $u_{\alpha + 1}^\bot$ on a special page $p_d$ only if placed between $u_{\alpha}^{1}$ and $u_{\alpha}^{n_{\alpha} + 1}$.
Thus, these edges restrict the placement of $x_{\alpha}$ if they are assigned to $p_d$.

First, we introduce for every color $\alpha \in [k]$ the following old edges on page $p_d$.
The edge $u_{\alpha}^{\bot}u_{\alpha}^1$ connects $u_{\alpha}^{\bot}$ with $u_{\alpha}^1$ and prevents visibility for $x_{\alpha}$ to $u_{\alpha + 1}^{\bot}$ if $x_{\alpha} \prec u_{\alpha}^{\bot}$.
The next edge, $u_{\alpha}^1u_{\alpha}^{n_{\alpha} + 1}$, limits the visibility to $u_{\alpha + 1}^{\bot}$ further to $x_{\alpha} \prec u_{\alpha}^{1}$.
Additionally, if $u_{\alpha}^{n_{\alpha} + 1} \prec x_{\alpha}$, the new vertex $x_{\alpha}$ can no longer see its neighbor $u_{\alpha}^{\bot}$ on page $p_d$.
Next, we introduce the edge $u_{\alpha}^{n_{\alpha} + 1}u_{\alpha + 1}^{\bot}$ whose effect is symmetric to the one of the first edge.
Finally, we introduce on every page $p \in [\ell]$, $p \neq p_d$ the edges $u_1^{\bot_L}u_1^{\bot}$ and $u_{k + 1}^{\bot}u_{k + 1}^{\bot_R}$, visualized in gray in \Cref{fig:w-1-edge-gadget}.

As indicated above, these edges should allow visibility to $u_{\alpha}^\bot$ and $u_{\alpha + 1}^\bot$ on page $p_d$ only if $u_{\alpha}^{1} \prec x_{\alpha} \prec u_{\alpha}^{n_{\alpha} + 1}$.
We now summarize this property.
\ifthenelse{\boolean{long}}{\stateFixationGadgetProperty}{\propertyFixationGadget*}
\begin{lemma}
    Our constructed instance of \QLE has \Cref{prop:fixation-gadget-property}
\end{lemma}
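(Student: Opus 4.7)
The plan is to fix a color $\alpha \in [k]$ and perform a case analysis on the pages to which the solution assigns the two fixation-gadget edges $x_\alpha u_\alpha^\bot$ and $x_\alpha u_{\alpha+1}^\bot$. The aim is to show that in every feasible combination of page assignments, $x_\alpha$ is forced to lie strictly between $u_\alpha^1$ and $u_\alpha^{n_\alpha+1}$.

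First I would treat the case in which both fixation edges of color $\alpha$ lie on $p_d$. On $p_d$ we have the chain edges $u_\beta^\bot u_\beta^1$, $u_\beta^1 u_\beta^{n_\beta+1}$, and $u_\beta^{n_\beta+1} u_{\beta+1}^\bot$ for every $\beta \in [k]$. A direct nesting check shows that non-nesting of $x_\alpha u_\alpha^\bot$ with the span edge $u_\alpha^1 u_\alpha^{n_\alpha+1}$ forbids $u_\alpha^{n_\alpha+1} \prec x_\alpha$, while the symmetric non-nesting of $x_\alpha u_{\alpha+1}^\bot$ with the same span forbids $x_\alpha \prec u_\alpha^1$. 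The neighboring span edges (for $\beta = \alpha - 1$ and $\beta = \alpha + 1$, when they exist) rule out positions far to the left of $u_\alpha^\bot$ or far to the right of $u_{\alpha+1}^\bot$, altogether yielding $u_\alpha^1 \prec x_\alpha \prec u_\alpha^{n_\alpha+1}$.

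Second, I would show that placing either fixation edge on a page $p \neq p_d$ is either infeasible or still pins $x_\alpha$ down to the desired range. Any such $p$ is an edge-gadget page and therefore carries the guards $u_1^{\bot_L} u_1^\bot$ and $u_{k+1}^\bot u_{k+1}^{\bot_R}$, which together with the partner fixation edge of color $\alpha$ (used to handle the boundary subcases $\alpha \in \{1, k\}$) rule out $x_\alpha \prec u_1^{\bot_L}$ and $u_{k+1}^{\bot_R} \prec x_\alpha$ via a short nesting argument. Invoking \Cref{prop:bot-vertices-not-visible} on $p$, edge $x_\alpha u_\alpha^\bot$ on $p$ forces $x_\alpha \in [u_\alpha^{\bot_L}, u_\alpha^{n_\alpha+1}]$, while $x_\alpha u_{\alpha+1}^\bot$ on $p$ forces $x_\alpha \in [u_{\alpha+1}^{\bot_L}, u_{\alpha+1}^{n_{\alpha+1}+1}]$. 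Since $u_\alpha^{n_\alpha+1} \prec u_{\alpha+1}^{\bot_L}$ in $\prec_H$, these two ranges are disjoint, so assigning both fixation edges off $p_d$ is infeasible. In the remaining mixed scenarios one combines the chain-edge constraint obtained from the edge placed on $p_d$ with the range from \Cref{prop:bot-vertices-not-visible} for the edge placed off $p_d$; this either yields an immediate contradiction or directly places $x_\alpha$ in the open interval $(u_\alpha^1, u_\alpha^{n_\alpha+1})$.

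The main obstacle I expect is the clean treatment of the boundary colors $\alpha = 1$ and $\alpha = k$, where the symmetric arguments used in the generic case break down because one of the surrounding span or guard edges is missing or degenerate. Resolving these corners requires combining information from both fixation edges of color $\alpha$ with the asymmetric guard on the relevant side of the spine, and in some subcases also consulting the chain edges belonging to the neighboring color in order to still exclude placements at the extreme left or right of $\prec_H$.
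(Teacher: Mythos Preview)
Your proposal is correct and uses exactly the same ingredients as the paper's proof: the chain edges on $p_d$ to constrain positions when a fixation edge lands there, the guard edges $u_1^{\bot_L}u_1^\bot$ and $u_{k+1}^\bot u_{k+1}^{\bot_R}$ to exclude extreme positions, and \Cref{prop:bot-vertices-not-visible} to block visibility to $u_\alpha^\bot$ (resp.\ $u_{\alpha+1}^\bot$) on any edge-gadget page.

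The organization differs, though. You perform a full case split over the page assignments of \emph{both} fixation edges $x_\alpha u_\alpha^\bot$ and $x_\alpha u_{\alpha+1}^\bot$ and then intersect the resulting position constraints. The paper instead argues by contradiction from a bad position: it assumes $x_\alpha \prec u_\alpha^1$, focuses on the single edge $x_\alpha u_{\alpha+1}^\bot$, and shows that \emph{no} page works for it---the span edge $u_\alpha^1 u_\alpha^{n_\alpha+1}$ rules out $p_d$, and \Cref{prop:bot-vertices-not-visible} (applied with color $\alpha+1$) rules out every $p\neq p_d$ once the guards have excluded $x_\alpha \prec u_1^{\bot_L}$. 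The symmetric bad position $u_\alpha^{n_\alpha+1}\prec x_\alpha$ is then dismissed analogously via $x_\alpha u_\alpha^\bot$. This buys the paper a shorter argument (no mixed-page subcases and no separate treatment of ``both edges off $p_d$''), whereas your case analysis is more explicit and makes the role of each edge-gadget ingredient visible. Both arrive at the same conclusion with the same tools; yours is simply a more exhaustive unpacking of the paper's contradiction.
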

\begin{proof}
	Assume, for the sake of a contradiction, $x_{\alpha} \prec u_{\alpha}^1$ and consider the page on which the edge $x_{\alpha}u_{\alpha + 1}^{\bot}$ is placed, i.e., the page $p \in [\ell]$ with $\sigma(x_{\alpha}u_{\alpha + 1}^{\bot}) = p$.
	First, we observe that $x_{\alpha} \prec u_1^{\bot_L}$ is impossible, since we have $\sigma(u_1^{\bot}u_1^1) = p_d$ and $\sigma(u_1^Lu_1^{\bot}) = p$ for every $p \in [\ell] \setminus \{p_d\}$.
	Since we have $x_{\alpha} \prec u_1^{\bot_L} \prec u_1^{\bot} \prec u_1^1 \prec u_{\alpha + 1}^{\bot}$, the edge $x_{\alpha}u_{\alpha + 1}^{\bot}$ is on every page $p \in [\ell]$ in a nesting relation with some edge on $p$. 
    Hence, we have $u_1^{\bot_L} \prec x_{\alpha} \prec u_{\alpha}^1$ and since we have $\sigma(u_{\alpha}^1u_{\alpha}^{n_{\alpha}+1}) = p_d$ and $u_1^{\bot_L} \prec x_{\alpha} \prec u_{\alpha}^1 \prec u_{\alpha}^{n_{\alpha}+1} \prec u_{\alpha + 1}^\bot$, we must have $p \neq p_d$.
    Until now we have $u_1^{\bot_L} \prec x_{\alpha} \prec u_{\alpha + 1}^{\bot_L}$.
    \Cref{prop:bot-vertices-not-visible} now tells us that $u_{\alpha + 1}^{\bot}$ is not visible on $p$ from any spine position left of $u_{\alpha + 1}^{\bot_L}$ and in particular not from the position of $x_{\alpha}$.
    Since we cannot find a page $p$ on which the edge $x_{\alpha}u_{\alpha + 1}^{\bot}$ can be placed, we conclude that $x_{\alpha} \prec u_{\alpha}^1$ cannot be the case.
	By similar arguments we can also derive that $u_{\alpha}^{n_{\alpha} + 1} \prec x_{\alpha}$ is impossible.
	Thus, we conclude $u_{\alpha}^1 \prec x_{\alpha} \prec u_{\alpha}^{n_{\alpha} + 1}$.
\end{proof}

\end{statelater}
\ifthenelse{\boolean{long}}{To complete our reduction, we now combine both gadgets.}{}

\ifthenelse{\boolean{long}}{
\subparagraph*{The Complete Reduction.}
Recall that we reduce from \MCC.
The strategy of our reduction is to insert $k$ new vertices into our layout that form a $k$-clique in $G[\Vadd]$, where $k$ is the parameter for an instance $(G_C, k, (V_1, \ldots, V_k))$ of the said problem.
To obtain the correspondence between solutions to the two problems, we created for each vertex $v_{\alpha}^i \in G_C$ a copy $u_{\alpha}^i$ and defined the spine order $\prec_H$ based on the color and index of the vertices.
This gives us for each $v_{\alpha}^i \in G_C$ an interval in $\prec_H$ that corresponds to it.
We introduced the fixation gadget that forces each new vertex to be placed in some interval that corresponds to a vertex of its color.
Finally, we have one edge gadget for every edge in $G_C$.
Thus, there is one page for each edge in $G_C$.
The twists on these pages ensure that two new vertices can only be placed in intervals corresponding to adjacent vertices in $G_C$, effectively selecting a colorful clique in $G_C$.

\Cref{fig:w-1-example} shows a small illustrative example of our reduction.
}{
\subsubsection{Correctness of the Reduction.}
The above (informal) description of our construction should only convey its main idea.
We provide in \Cref{app:reduction} a precise description of the gadgets and show a small illustrative example in \Cref{fig:w-1-example}.
}
\begin{figure}[t]
	\centering
	\includegraphics[page=2]{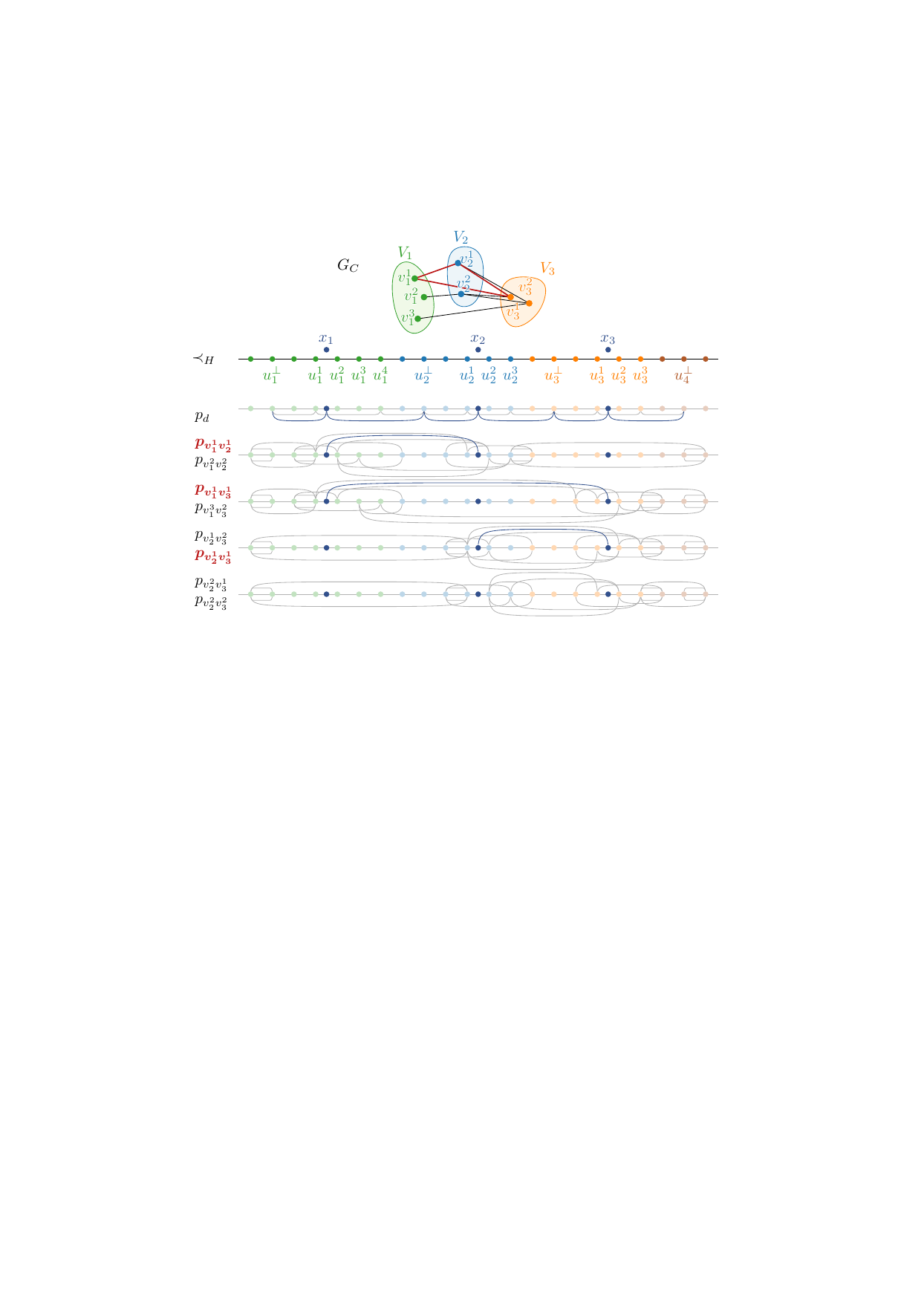}
	\caption{An example of the reduction. We indicate the solution to \MCC in red (left) and the corresponding solution to the instance \instanceLong of \QLE in saturated colors (right). To increase readability, we omit the labels for the vertices $u_{\alpha}^{\bot_L}$ and $u_{\alpha}^{\bot_R}$.}
	\label{fig:w-1-example}
\end{figure}%

Observe that even for $u_{\alpha}^1 \prec x_{\alpha} \prec u_{\alpha}^{n_{\alpha} + 1}$, the edges of the edge gadget prevent visibility to $u_{\alpha}^{\bot}$ and $u_{\alpha + 1}^{\bot}$.
Thus, the edges $x_{\alpha}u_{\alpha}^\bot$ and $x_{\alpha}u_{\alpha + 1}^\bot$ can only be placed on the dedicated page for the fixation gadget.
Furthermore, for every new edge~$x_{\alpha}x_{\beta}$ in $G[\Vadd]$ we have $\alpha \neq \beta$ and since $\sigma(u_{\alpha}^{\bot}u_{\alpha}^1) = \sigma(u_{\alpha}^{n_{\alpha} + 1}u_{\alpha + 1}^{\bot}) = p_d$, we conclude that these are the only edges on $p_d$.
More formally, we obtain:

\begin{observation}
	\label{obs:fixation-gadget-edges}
	Let \ql be a solution to our created instance of \QLE.
	For every $\alpha \in [k]$ we have $\sigma(x_{\alpha}u_{\alpha}^\bot) = \sigma(x_{\alpha}u_{\alpha + 1}^\bot) = p_d$.
    Furthermore, for every edge $e \in \Eadd$ with $\sigma(e) = p_d$ we have $e \in \{x_{\alpha}u_{\alpha}^\bot, x_{\alpha}u_{\alpha + 1}^\bot \mid \alpha \in [k]\}$.
\end{observation}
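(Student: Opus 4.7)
The plan is to verify the two halves of the observation separately: first, that every fixation-gadget edge $x_\alpha u_\alpha^\bot$ and $x_\alpha u_{\alpha+1}^\bot$ is forced onto $p_d$, and second, that no other new edge lands on $p_d$. Throughout, I rely on \Cref{prop:fixation-gadget-property} to pin down $u_\alpha^1 \prec x_\alpha \prec u_\alpha^{n_\alpha+1}$ for every $\alpha \in [k]$.

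For the ``right'' fixation edge $x_\alpha u_{\alpha+1}^\bot$ with $\alpha+1 \le k$, the chain $u_1^{\bot_L} \prec x_\alpha \prec u_\alpha^{n_\alpha+1} \prec u_{\alpha+1}^{\bot_L}$ places $x_\alpha$ in the left exclusion region of \Cref{prop:bot-vertices-not-visible} for color $\alpha+1$, so $u_{\alpha+1}^\bot$ is invisible from $x_\alpha$ on every edge-gadget page and the edge is forced onto $p_d$. The ``left'' edge $x_\alpha u_\alpha^\bot$ and the boundary edge $x_k u_{k+1}^\bot$ are not directly covered by \Cref{prop:bot-vertices-not-visible}, since $x_\alpha$ lies inside color $\alpha$'s own block on the spine. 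For these I run a case analysis over any edge-gadget page $p_e$ with $e = v_\beta^i v_\gamma^j$ and $\beta < \gamma$. For $x_\alpha u_\alpha^\bot$, depending on how $\alpha$ sits relative to $\beta$ and $\gamma$, one of the following edges already on $p_e$ forms a nesting with $x_\alpha u_\alpha^\bot$: $u_1^{\bot_L} u_\beta^1$ if $\alpha < \beta$; $u_\alpha^{\bot_R} u_\alpha^1$ if $\alpha = \beta$; the twist edge $u_\beta^i u_\gamma^j$ if $\beta < \alpha < \gamma$; and $u_\gamma^{n_\gamma+1} u_{k+1}^{\bot_R}$ if $\alpha > \gamma$. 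The subtle case $\alpha = \gamma$ splits according to the index $i'$ with $x_\alpha \in \intervalPlacing{v_\alpha^{i'}}$: when $i' \le j$ one of the twist edges $u_\beta^i u_\gamma^j$ or $u_\beta^{i+1} u_\gamma^{j+1}$ provides the nesting, while for $i' > j$ the whisker $u_\alpha^{\bot_R} u_\alpha^{j+1}$ is nested by $x_\alpha u_\alpha^\bot$. A symmetric case analysis handles $x_k u_{k+1}^\bot$, using $u_\gamma^{n_\gamma+1} u_{k+1}^{\bot_R}$ when $\gamma < k$ and $u_k^{n_k+1} u_{k+1}^{\bot_L}$ when $\gamma = k$. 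Hence neither edge fits on any edge-gadget page and both are forced onto $p_d$.

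For the second half, the only other new edges are the $\binom{k}{2}$ clique edges $x_\alpha x_\beta$ of $G[\Vadd]$ with $\alpha < \beta$. For any such edge \Cref{prop:fixation-gadget-property} yields the chain $x_\alpha \prec u_\alpha^{n_\alpha+1} \prec u_{\alpha+1}^\bot \prec x_\beta$, so assigning $x_\alpha x_\beta$ to $p_d$ would nest the old edge $u_\alpha^{n_\alpha+1} u_{\alpha+1}^\bot$ that the fixation gadget already places on $p_d$, a contradiction. Thus only the $2k$ fixation-gadget edges appear in $\sigma^{-1}(p_d) \cap \Eadd$.

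The main obstacle is the case analysis for $x_\alpha u_\alpha^\bot$, especially the sub-case $\alpha = \gamma$ where \Cref{prop:bot-vertices-not-visible} offers no shortcut and the required nesting has to be teased out of the whisker edges of the edge gadget. Every other step reduces to combining the established gadget properties with the explicit list of edges that the fixation and edge gadgets place on the relevant pages.
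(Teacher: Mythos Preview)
Your proof is correct and follows the same line of argument the paper sketches informally in the paragraph preceding the observation: the edge-gadget edges block visibility from $x_\alpha$ to $u_\alpha^\bot$ and $u_{\alpha+1}^\bot$ on every page $p_e$, and the old fixation edges on $p_d$ force the clique edges $x_\alpha x_\beta$ off $p_d$. You in fact go further than the paper, correctly observing that \Cref{prop:bot-vertices-not-visible} does not literally cover $x_\alpha u_\alpha^\bot$ (since $x_\alpha$ lies strictly between $u_\alpha^{\bot_L}$ and $u_\alpha^{n_\alpha+1}$) nor $x_k u_{k+1}^\bot$, and supplying the explicit case analysis over $\alpha$ versus $\beta,\gamma$ that the paper leaves implicit.
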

Furthermore, observe that the page $p_e$ for an edge $e = v_{\alpha}^iv_{\beta}^j \in E(G_C)$ can only be used for new edges that have their endpoints placed in intervals of color~$\alpha$ and $\beta$.
If at least one endpoint is placed in an interval of a different color $\gamma$, we create, depending on the relation of $\alpha$, $\beta$, and $\gamma$, always a nesting relation with at least one edge on $p_e$.
The following observation summarizes this property.
\begin{observation}
	\label{obs:edge-gadget-colors}
	Let \ql be a solution to our constructed instance of \QLE.
        For every new edge $x_{\gamma}x_{\delta} \in E(G[\Vadd])$ assigned to the page $p_e \in [\ell] \setminus \{p_d\}$ with $e = v_{\alpha}^iv_{\beta}^j \in E(G_C)$, $\alpha < \beta$, and $\gamma < \delta$, we have $\alpha = \gamma$ and $\beta = \delta$.
\end{observation}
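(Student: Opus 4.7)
The plan is to prove the observation by contradiction: assume $\sigma(x_\gamma x_\delta) = p_e$ with $(\gamma, \delta) \neq (\alpha, \beta)$, and exhibit an edge of the edge gadget on $p_e$ whose endpoints strictly sandwich those of $x_\gamma x_\delta$, contradicting the fact that \ql is a valid queue layout. The starting point is \Cref{prop:fixation-gadget-property}, applied to both new vertices: it pins each $x_\zeta$ inside the open interval $(u_\zeta^1, u_\zeta^{n_\zeta+1})$ associated with its color $\zeta$. Together with the global spine order this puts $x_\gamma$ and $x_\delta$ into one of five consecutive regions along the spine, namely $R_L = (u_1^{\bot_L}, u_\alpha^1)$, $R_\alpha = (u_\alpha^1, u_\alpha^{n_\alpha+1})$, $R_M = (u_\alpha^{n_\alpha+1}, u_\beta^1)$, $R_\beta = (u_\beta^1, u_\beta^{n_\beta+1})$, and $R_R = (u_\beta^{n_\beta+1}, u_{k+1}^{\bot_R})$, according to whether $\zeta < \alpha$, $\zeta = \alpha$, $\alpha < \zeta < \beta$, $\zeta = \beta$, or $\zeta > \beta$. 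Since $\gamma < \delta$ and the two new vertices are distinct, this leaves $13$ ordered region pairs to inspect; one of them is the desired $(R_\alpha, R_\beta)$, so the remaining $12$ pairs must be ruled out.

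Most of the $12$ forbidden pairs can be handled by exhibiting a single nesting edge from the edge gadget. The ``outer envelope'' edges $u_1^{\bot_L} u_\alpha^1$ and $u_\beta^{n_\beta+1} u_{k+1}^{\bot_R}$ nest every $x_\gamma x_\delta$ whose endpoints both lie in $R_L$, respectively $R_R$. The short transition edges $u_\alpha^{\bot_R} u_\alpha^1$ and $u_\beta^{n_\beta+1} u_{\beta+1}^{\bot_L}$ settle the pairs $(R_L, R_\alpha)$ and $(R_\beta, R_R)$. The left whisker $u_\alpha^i u_{\alpha+1}^{\bot_L}$ handles every $(R_L, Y)$ for $Y \in \{R_M, R_\beta, R_R\}$, since in all three cases $x_\gamma \prec u_\alpha^i$ while $x_\delta \succ u_{\alpha+1}^{\bot_L}$; symmetrically, the right whisker $u_\beta^j u_{\beta+1}^{\bot_L}$ settles the pairs $(R_\alpha, R_R)$ and $(R_M, R_R)$. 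Finally, the twist edge $u_\alpha^i u_\beta^j$ nests every $x_\gamma x_\delta$ with both endpoints in $R_M$.

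The only step I expect to require real work, and hence the main obstacle, is the treatment of the two ``inner'' pairs $(R_\alpha, R_M)$ and $(R_M, R_\beta)$, where none of the above outer edges suffices on its own. For these I split further on the position of the vertex lying inside the special interval, distinguishing whether $x_\alpha$ falls into $(u_\alpha^1, u_\alpha^i)$, $(u_\alpha^i, u_\alpha^{i+1})$, or $(u_\alpha^{i+1}, u_\alpha^{n_\alpha+1})$, and symmetrically for $x_\beta$ within $R_\beta$. This is the very same three-way analysis already performed in the proof of \Cref{prop:edge-gadget}: in one extreme sub-case the whisker $u_\alpha^i u_{\alpha+1}^{\bot_L}$ (resp.\ $u_\beta^{\bot_R} u_\beta^{j+1}$) brackets $x_\gamma x_\delta$, while in the remaining two sub-cases one of the twist edges $u_\alpha^i u_\beta^j$ or $u_\alpha^{i+1} u_\beta^{j+1}$ does, exploiting that the vertex sitting in $R_M$ lies strictly between $u_\alpha^{n_\alpha+1}$ and $u_\beta^1$. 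Once these sub-case verifications are carried out, all $12$ forbidden region pairs are ruled out and the observation follows.
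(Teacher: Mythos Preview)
Your proposal is correct and matches the paper's approach. The paper treats this statement as an observation and gives only the one-sentence justification that ``if at least one endpoint is placed in an interval of a different color $\gamma$, we create, depending on the relation of $\alpha$, $\beta$, and $\gamma$, always a nesting relation with at least one edge on $p_e$''; your region decomposition and case analysis is exactly the detailed verification of that sentence. One minor remark on wording: in several of your cases (e.g.\ $(R_L,R_\alpha)$ or sub-case~(a) of $(R_\alpha,R_M)$) it is $x_\gamma x_\delta$ that nests the gadget edge rather than the other way around, so ``sandwich'' and ``brackets $x_\gamma x_\delta$'' should be read as ``is in a nesting relation with'' rather than as a fixed direction---but this does not affect the argument.
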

We now have all the ingredients to show the main theorem of this section.

\theoremKappaWHard*
\begin{proofsketch}
    Let $(G_C, k, (V_1, \ldots, V_k))$ be an instance of \MCC and $\instance$ the instance of \QLE obtained by our reduction.
    A careful analysis of \instance reveals that we have $\Size{\instance} = \BigO{\Size{G_C} + k^2}$ and $\kappa = \BigO{k^2}$.
    In particular, $\nadd = k$ and $\madd = \binom{k}{2} + 2k$.
    We now proceed with discussing correctness of our reduction.

    For the forward direction, let $\mathcal{C}$ be a colorful $k$-clique.
    To construct a solution of \instance, we consider every vertex $v_{\alpha}^i \in \mathcal{C}$ and place $x_{\alpha}$ in $\intervalPlacing{v_{\alpha}^i}$.
    As $\mathcal{C}$ is colorful, this placement is well-defined and yields the spine order $\prec_G$.
    For the page assignment $\sigma_G$, we set $\sigma_G(x_{\alpha}u_{\alpha}^{\bot}) = \sigma_G(x_{\alpha}u_{\alpha + 1}^{\bot}) = p_d$ for every $\alpha \in [k]$.
    As this resembles the necessary page assignment from \Cref{obs:fixation-gadget-edges}, we do not introduce a nesting relation on page $p_d$.
    Furthermore, for each new edge $x_{\alpha}x_{\beta} \in E(G[\Vadd])$, we consider the intervals $\intervalPlacing{v_{\alpha}^i}$ and $\intervalPlacing{v_{\beta}^j}$ into which $x_{\alpha}$ and $x_{\beta}$ are placed.
    As $\mathcal{C}$ is a clique, there exists the edge $e = v_{\alpha}^iv_{\beta}^j \in E(G_C)$ and consequently the page $p_e$ in \instance that can accommodate $x_{\alpha}x_{\beta}$.
    This shows that \ql[G] is a solution to \instance.
    
    For the backward direction, let \ql[G] be a solution to \instance.
    Thanks to \Cref{prop:fixation-gadget-property,prop:bot-vertices-not-visible}, we have $u_{\alpha}^1 \prec_G x_{\alpha} \prec_G u_{\alpha}^{n_{\alpha + 1}}$ for each new vertex $x_{\alpha}$.
    As each interval between $u_{\alpha}^1$ and $u_{\alpha}^{n_{\alpha + 1}}$ corresponds to a vertex of $G_C$, $x_{\alpha}$ is placed in~$\intervalPlacing{v_{\alpha}^i}$ for some $v_{\alpha}^i \in V(G_C)$.
    We add $v_{\alpha}^i$ to our solution $\mathcal{C}$ of the instance of \MCC and observe that it is colorful. 
    To see that $\mathcal{C}$ is a clique in $G_C$, we consider the page $p$ for a new edge $x_{\alpha}x_{\beta} \in E(G[\Vadd])$.
    \Cref{obs:fixation-gadget-edges} ensures $p = p_e$ for some $e \in E(G_C)$ and
    \Cref{prop:edge-gadget} and \Cref{obs:edge-gadget-colors} guarantee that $x_{\alpha}$ and $x_{\beta}$ are placed in the intervals that correspond to the endpoints of $e$. %
    Extending this to all new edges $E(G[\Vadd])$, we conclude that $\mathcal{C}$ is also a clique in $G_C$.
\end{proofsketch}
\begin{prooflater}{ptheoremKappaWHard}
	Let $(G_C, k, (V_1, \ldots, V_k))$ be an instance of \MCC and $\instance$ the instance of \QLE obtained by our reduction with $\instance = \instanceLong$.
	Let $G_C$ consist of $N = \Size{V(G_C)}$ vertices and $M = \Size{E(G_C)}$ edges.
	Recall that \instance contains $k$ missing vertices that form a $k$-clique and the fixation gadget furthermore contributes $2k$ missing edges.
	Therefore, we have $\kappa = \nadd + \madd = k + \binom{k}{2} + 2k = 3k + \binom{k}{2}$.
	Regarding the size of \instance, we observe that $H$ consists of $N + 4k + 3$ vertices.
	Furthermore, we create for each edge of $G_C$ one page, on which we place up to ten edges.
	The fixation gadget introduces one additional page with three edges per color and two edges per remaining page in the construction, which are $M$.
	Summing this up, we obtain $12M + 3k$ edges in $H$ distributed among $\ell = M + 1$ pages.
	Together with $\kappa = \BigO{k^2}$, we conclude that the size of \instance is in $\BigO{N + M + k^2}$.
	This is polynomial in the size of $G_C$ and the parameter $\kappa$ is bounded by a function in $k$.
	\instance can be created in time polynomial in $N + M + k^2$ and it thus remains to show correctness of the reduction.
    \looseness=-1
	
	\proofsubparagraph*{($\Rightarrow$)}
	Let $(G_C, k, (V_1, \ldots, V_k))$ be a positive instance of \MCC and $\mathcal{C}$ a solution that witnesses it.
	We now construct based on $\mathcal{C}$ a queue layout \ql[G] of $G$ that extends \ql[H].
	To that end, we consider every vertex $v_{\alpha}^i \in \mathcal{C}$ and place the new vertex $x_{\alpha}$ of color $\alpha$ next to the copy $u_{\alpha}^i$, i.e., in the corresponding interval $\intervalPlacing{v_{\alpha}^i}$.
	The final spine order $\prec_G$ is obtained by taking the transitive closure of the order defined above.
	Since $\mathcal{C}$ is colorful, it contains exactly one vertex of each color, i.e., for each color $\alpha$ we have $u_{\alpha}^1 \prec_G x_{\alpha} \prec_G u_{\alpha}^{n_{\alpha} + 1}$.
	Since we only insert the $k$ missing vertices, the spine order $\prec_G$ clearly extends $\prec_H$.
	We can obtain the page assignment $\sigma_G$ by copying $\sigma_H$, to ensure that the former extends the latter, and setting $\sigma_G(x_{\alpha}u_{\alpha}^{\bot}) = \sigma_G(x_{\alpha}u_{\alpha + 1}^{\bot}) = p_d$ for every $\alpha \in [k]$.
	Furthermore, for every new edge $x_{\alpha}x_{\beta} \in \Eadd$ with $1 \leq \alpha < \beta \leq k$ we consider the interval the two endpoints are placed in, let it be $\intervalPlacing{v_{\alpha}^i}$ and $\intervalPlacing{v_{\beta}^j}$.
	As $\mathcal{C}$ is a clique, and thanks to the way we place these new vertices, we know that $v_{\alpha}^iv_{\beta}^j = e \in E(G_C)$.
	We set $\sigma_G(x_{\alpha}x_{\beta}) = p_e$ and know by above arguments that the page $p_e$ must exist in \instance.
	This completes the construction of \ql[G].
	By construction, it extends \ql[H].
	Thus, it only remains to show that no two edges on the same page are in a nesting relation.
	Clearly, no two new edges of $\Eadd$ can be in a nesting relation since they are either put on different pages, if they are of the form $x_{\alpha}x_{\beta}$, $\alpha, \beta \in [k]$, or their endpoints appear consecutively on the spine, i.e., we have $u_1^{\bot} \prec_G x_1 \prec_G u_2^{\bot} \prec_G x_2 \prec_G \ldots \prec_G x_k \prec_{G} u_{k + 1}^{\bot}$.
	Hence, we only need to ensure that no pair of new and old edge can be in a nesting relation.
	To see that this is the case for a new edge from the fixation gadget, it is sufficient to observe that we resemble in \ql[G] the necessary spine order and page assignment from \Cref{prop:fixation-gadget-property} and \Cref{obs:fixation-gadget-edges}.
	Finally, to show that this is also the case for a new edge between new vertices, i.e., for edges of the form $x_{\alpha}x_{\beta}$, for $1 \leq \alpha < \beta \leq k$, we consider the page $\sigma_G(x_{\alpha}x_{\beta}) = p_e$ for the edge $v_{\alpha}^iv_{\beta}^j$.
	We can see that the whiskers on page $p_e$ are defined on old vertices $u, u' \in V(H)$ such that we have $u \prec_G x_{\alpha} \prec_G u' \prec_G x_{\beta}$, $u \preceq_G u_{\alpha}^i \prec_G x_{\alpha} \prec_G u' \prec_G x_{\beta}$, $x_{\alpha} \prec_G u \prec_G x_{\beta} \prec_G u_{\beta}^{j+1}$, or $x_{\alpha} \prec_G u \preceq_G u_{\beta}^i \prec_G x_{\beta} \prec_G u'$.
	Thus, a nesting relation is not possible here.
	For the twist, we recall that it consists of the two edges $u_{\alpha}^iu_{\beta}^j$ and $u_{\alpha}^{i + 1}u_{\beta}^{j + 1}$ and we have $u_{\alpha}^i \prec_G x_{\alpha} \prec_G u_{\alpha}^{i + 1} \prec_G u_{\beta}^j \prec_G x_{\beta} \prec_G u_{\beta}^{j + 1}$, i.e., also here there is no nesting relation possible.
	Note that all edges are of one of the above forms and since in neither case a nesting relation between edges on the same page is possible, we derive that our layout is indeed a valid queue layout.
	
	Hence, combing above arguments, we conclude that \ql[G] is a queue layout that extends \ql[H], i.e., it witnesses that \instance is a positive instance of \QLE.
	
	\proofsubparagraph*{($\Leftarrow$)}
	Let \instance be a positive instance of \QLE and \ql[G] an extension of \ql[H] that witnesses this.
	We now use $\prec_G$ to determine which $k$ vertices of $G_C$ form the colorful clique.
	To that end, observe that thanks to \Cref{prop:fixation-gadget-property,prop:bot-vertices-not-visible} we have for each color $\alpha \in [k]$ that the new vertex $x_{\alpha}$ is placed between $u_{\alpha}^1$ and $u_{\alpha}^{n_{\alpha + 1}}$, i.e., we have $u_{\alpha}^1 \prec_G x_{\alpha} \prec_G u_{\alpha}^{n_{\alpha + 1}}$.
	Hence, $x_{\alpha}$ is placed in one interval that corresponds to a vertex $u_{\alpha}^i \in V_{\alpha} \subseteq V(G_C)$.
	We let $\mathcal{C}$ consist of precisely the vertices that correspond to these intervals, i.e., if $x_{\alpha}$ is placed in $\intervalPlacing{v_{\alpha}^i}$, we add $v_{\alpha}^i$ to $\mathcal{C}$.
	Observe that \Cref{prop:fixation-gadget-property} guarantees that $\mathcal{C}$ contains exactly one vertex $v_{\alpha}^i$ for each color $\alpha \in [k]$, i.e., is colorful.
	What remains to show is that $\mathcal{C}$ forms a clique in $G_C$.
	To that end, let $v_{\alpha}^i, v_{\beta}^j \in \mathcal{C}$ be two arbitrary vertices.
	In the following, we show $v_{\alpha}^iv_{\beta}^j \in E(G_C)$ and assume, without loss of generality, $\alpha < \beta$.
	From $v_{\alpha}^i, v_{\beta}^j \in \mathcal{C}$ we conclude that $x_{\alpha}$ is placed in $\intervalPlacing{v_{\alpha}^i}$ and $x_{\beta}$ is placed in $\intervalPlacing{v_{\beta}^j}$.
	Recall $x_{\alpha}x_{\beta} \in \Eadd$.
	Hence, we have $\sigma_G(x_{\alpha}x_{\beta}) = p$ for some page $p \in [\ell]$ and we can differentiate between the following cases.
	First, observe that $p = p_d$, i.e., the page from the fixation gadget, is not possible because of \Cref{obs:fixation-gadget-edges}.
	Consequently, $p \in \{p_e \mid e \in E(G_C)\}$ holds.
    From \Cref{obs:edge-gadget-colors}, we get that the endpoints of $e$ are of color $\alpha$ and $\beta$.
    Furthermore, observe that we have $u_{\alpha}^{1} \prec_G x_{\alpha} \prec_{G} u_{\alpha}^{n_{\alpha} + 1}$ and $u_{\beta}^{1} \prec_G x_{\beta} \prec_G u_{\beta}^{n_{\beta}} + 1$.
	Thus, all prerequisites of \Cref{prop:edge-gadget} are fulfilled, which tells us that from $\sigma_G(x_{\alpha}x_{\beta}) = p_e$, $e \in E(G_C)$, we have that $x_{\alpha}$ and $x_{\beta}$ are placed in the intervals corresponding to the (copy) of the endpoints of $e$.
	Since we only create pages for edges of $G_C$, we derive from $\sigma_G(x_{\alpha}x_{\beta}) = p_e$, $x_{\alpha}$ being placed in $\intervalPlacing{v_{\alpha}^i}$, and $x_{\beta}$ in $\intervalPlacing{v_{\beta}^j}$, that $v_{\alpha}^i$ and $v_{\beta}^j$ are adjacent in $G_C$, i.e., $v_{\alpha}^iv_{\beta}^j \in E(G_C)$.
	Since $v_{\alpha}^i$ and $v_{\beta}^j$ are arbitrary vertices from $\mathcal{C}$, we conclude that $\mathcal{C}$ forms a clique in $G_C$.
	We conclude that $\mathcal{C}$ forms a colorful clique in $G_C$, i.e., $(G_C, k, (V_1, \ldots, V_k))$ is a positive instance of \MCC.
\end{prooflater}

\section{Adding the Number of Pages as Parameter}
\label{sec:number-pages-fpt}
In this section, we show that \QLE is \FPT\ when parameterized by $\kappa + \ell$, i.e., the number of missing vertices and edges, and pages.
On a high level, we use the bounded number of missing edges and pages to guess the page assignment of a hypothetical solution.
By guessing also the relative order among endpoints of new edges, we can check for a nesting relation between new edges.
However, we need to find for each new vertex a spine position that avoids nesting relations with old edges.
In \Cref{lem:two-sat}, we show that this can be done in polynomial time, which turns out to be the essential ingredient to obtain fixed-parameter tractability.
\Cref{lem:two-sat} builds on the following observation:
\begin{observation}
	\label{obs:nesting-two-edges}
	Let \ql be a queue layout of $G$ and $u_1v_1, u_2v_2 \in E(G)$ two edges of $G$ with $u_1 \prec v_1$, $u_2 \prec v_2$, and $\sigma(u_1v_1) = \sigma(u_2v_2)$. We have $u_1 \prec u_2 \Leftrightarrow v_1 \prec v_2$.
\end{observation}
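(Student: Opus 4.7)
The plan is a direct proof by contradiction using the defining property of a queue layout: two edges assigned to the same page cannot be in a nesting relation. Since the biconditional is symmetric under swapping the two edges and relabeling their endpoints, it suffices to establish the forward direction $u_1 \prec u_2 \Rightarrow v_1 \prec v_2$; the converse then follows by exchanging the roles of $u_1v_1$ and $u_2v_2$.

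For the forward direction I would argue by contraposition. Fix $u_1 \prec u_2$ and suppose, toward a contradiction, that $v_1 \not\prec v_2$; in the generic case of four distinct endpoints this forces $v_2 \prec v_1$. Combining this with the hypotheses $u_1 \prec u_2$ and $u_2 \prec v_2$ yields the chain $u_1 \prec u_2 \prec v_2 \prec v_1$, which is precisely the nesting relation $u_1v_1 \nestingRelation u_2v_2$ from the definition in \Cref{sec:preliminaries}. Since $\sigma(u_1v_1) = \sigma(u_2v_2)$, this contradicts \ql being a queue layout, so $v_1 \prec v_2$ must hold.

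There is no serious obstacle to this argument; the observation is essentially a direct unpacking of the per-page absence of nestings. The only mild care needed is to handle the degenerate cases where some of the four endpoints coincide: the strict inequalities $u_i \prec v_i$ rule out $u_i = v_j$ inside the assumed chain, and if $u_1 = u_2$ or $v_1 = v_2$ then one side of the biconditional fails trivially and the other side is excluded by combining the fixed orientations of each edge with the remaining strict inequalities on the spine.
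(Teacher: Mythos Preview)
The paper states this as an observation without proof, so there is no argument in the paper to compare against; your direct unpacking of the no-nesting condition is exactly the intended one-line justification, and the forward direction plus symmetry is fine.

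One small correction: your treatment of the degenerate cases $u_1=u_2$ or $v_1=v_2$ is not quite right. If, say, $u_1=u_2$ but $v_1\neq v_2$, then $u_1\prec u_2$ is false while $v_1\prec v_2$ may well be true (nothing in the hypotheses prevents it), so the biconditional can fail. The statement is really intended for edges with pairwise distinct endpoints, consistent with the remark in \Cref{sec:preliminaries} that only such edges can be in a nesting relation; this is also the only case that matters for the \probname{2-Sat} encoding, since variables $x_{u,v}$ are introduced only for $u\neq v$. So rather than trying to salvage the degenerate cases, you should simply note that the observation is read under the standing assumption of four distinct endpoints.
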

\Cref{obs:nesting-two-edges} allows us to encode the (non-existence of a) nesting relation as a \probname{2-Sat} expression.
By introducing one variable $x_{u,v}$ for each pair of vertices $u,v \in V(G)$, we can express \Cref{obs:nesting-two-edges} as $x_{u_1, u_2} \Leftrightarrow x_{v_1, v_2}$.
Hence, we can describe an $\ell$-page queue layout with a fixed page assignment as a \probname{2-Sat} formula plus anti-symmetry ($x_{u,v} \Leftrightarrow \lnot x_{v,u}$) and transitivity ($x_{u,v} \land x_{v,w} \Rightarrow x_{u,w}$) constraints.
While the latter constraints require in general clauses on three literals, the total order among old and new vertices, respectively, fixes for each three vertices $u,v,w \in V(G)$ the relative order among at least two of them.
Thus, we know the truth value of $x_{u,v}$, $x_{v,w}$, or $x_{u,w}$, which allows us to also incorporate transitivity constraints in our \probname{2-Sat} formula.
As the satisfiability of a \probname{2-Sat} formula can be checked in linear time~\cite{APT.LTA.1979}, we obtain:

\begin{restatable}\restateref{lem:two-sat}{lemma}{lemmaTwoSat}
	\label{lem:two-sat}
	Given an instance $\instance = \instanceLong$ of \QLE, a page assignment~$\sigma_G$ for all edges, and a total order $\prec_{\Eadd}$ in which the endpoints of new edges will appear on the spine.
	In $\BigO{{\Size{\instance}}^2}$ time, we can check if \instance admits a solution \ql[G] where $\prec_G$ extends $\prec_{\Eadd}$ or report that no such solution exists.
\end{restatable}
\begin{prooflater}{plemmaTwoSat}
	We start the proof by carefully analysing the implications of the given restrictions.
	Regarding the page assignment~$\sigma_G$, we observe that, due to the known spine position for the endpoints of any edge $e \in \EaddH$, we can check whether an edge $e \in \EaddH$ is in a nesting relation with an old edge $e' \in E(H)$ or another new edge $e' \in \EaddH$.
	Since the spine position of their endpoints is predetermined by $\prec_{H}$, we consider them for the remainder of the proof as old edges.
	Taking also $\prec_{\Eadd}$ into account, we can check for a nesting relation between two new edges $uv, u'v' \in \Eadd$. %
	Hence, if the given assignments imply a nesting relation on the same page, which we can check in $\BigO{\madd{}\cdot \Size{\instance{}}}$ time, we can directly return that no solution exists that adheres to them.
	
	What remains to check is for a nesting relation between %
    new edges and old edges.
	However, in contrast to above, this now depends on the concrete spine position of the new vertices.
	To place each new vertex on the spine, we use a \probname{2-Sat} formula which builds on \Cref{obs:nesting-two-edges}.
	To that end, we introduce a variable $x_{u,v}$ for every two vertices $u,v \in V(G)$ with $u \neq v$.
	These variables carry the following semantic.
	\begin{align}
		\label{eq:two-sat-semantic}
		x_{u,v} = 1 \Leftrightarrow u \prec v
	\end{align}
	We use above semantics when creating the spine order from a (satisfying) variable assignment.
	To model our problem in \probname{2-Sat}, we have to add clauses that (1) capture antisymmetry, (2) guarantee that the obtained spine order respects the order among the endpoints of new edges given by $\prec_{\Eadd}$, (3) ensure that the variable assignment corresponds to a valid spine order, and (4) model \Cref{obs:nesting-two-edges}.
	The \probname{2-Sat} formula $\varphi$, that we create in the following, is of the form
	\begin{align*}
		\varphi \coloneqq  \varphi_1 \land \varphi_2 \land \varphi_3 \land \varphi_4,
	\end{align*}
	where each of the four subformulas encodes the respective part of the above description.\looseness=-1
	
	\subparagraph*{Clauses.}
	We now describe each of the four subformulas in more detail.
	
	The first subformula $\varphi_1$ should encode antisymmetry, i.e., should prevent that we have $x_{u,v}$ and $x_{v,u}$ or $\lnot x_{u,v}$ and $\lnot x_{v,u}$ for $u,v \in V(G)$.
	This can be expressed as:
	\begin{align*}
			\varphi_1 \coloneqq\bigwedge_{\substack{u,v \in V(G)\\u \neq v}} ((x_{u,v} \Rightarrow \lnot x_{v,u}) \land ((x_{v,u} \Rightarrow \lnot x_{u,v}))
	\end{align*}
		
	The purpose of the second subformula $\varphi_2$ is to guarantee that the spine order $\prec_{G}$ obtained from the variable assignment is an extension of $\prec_{\Eadd}$ w.r.t.\ the new vertices.
	Let $u,v\in \Vadd, w \in V(G) \setminus \Vadd$ be three vertices with $u \prec_{\Eadd} v$.
	From $u \prec_{\Eadd} v$, we conclude that $v \prec w$ implies $u \prec w$ as otherwise $v \prec_{G} u$ would need to hold.
	We can express this in \probname{2-Sat} as $x_{v,w} \Rightarrow x_{u,w}$:
	\begin{align*}
		\varphi_2 \coloneqq \bigwedge_{\substack{u,v \in \Vadd,\\u \prec_{\Eadd} v}} \:\bigwedge_{w \in V(G) \setminus \Vadd} x_{v,w} \Rightarrow x_{u,w}
	\end{align*}
	
	The next subformula $\varphi_3$ ensures an analogous concept with respect to the spine position of old vertices.
	To that end, we need to ensure that for two vertices $v,w \in V(H)$ with $v \prec_H w$ and for any spine order $\prec_G$ with $u \prec_G v$ we also have $u \prec_G w$.
	Note that otherwise we would need to have $w \prec_G u \prec_G v$ which would contradict $v \prec_H w$.
	Similar to above, we can express this condition as $x_{u,v} \Rightarrow x_{u,w}$:
	\begin{align*}
		\varphi_3 \coloneqq \bigwedge_{u \in \Vadd}\:\bigwedge_{\substack{v,w \in V(H),\\v \prec_H w}} x_{u,v} \Rightarrow x_{u,w}
	\end{align*}
	
	Finally, the last subformula, subformula $\varphi_4$, should ensure that the obtained spine order $\prec_G$ together with the given page assignment $\sigma_G$ yields a queue layout \ql[G] of $G$.
	To that end, we formalize \Cref{obs:nesting-two-edges}.
	Let $u_1v_1, u_2v_2 \in E(G)$ be a pair of edges for which we need to encode \Cref{obs:nesting-two-edges} and where, in addition, one of the edges is old.
	Note that otherwise we can already determine if the edges are in a nesting relation.
	To express \Cref{obs:nesting-two-edges} as \probname{2-Sat} formula, we use $x_{u_1,u_2} \Leftrightarrow x_{v_1,v_2}$ and replace the bi-equivalence with a conjunction of two implications.
    Note that from $\prec_{\Eadd}$ and $\prec_H$ we can deduce the relative order of the endpoints of all edges of $G$. 
	To define $\varphi_4$, we let $E' \subseteq E(G) \times E(G)$ denote the pairs of edges for which we need to encode \Cref{obs:nesting-two-edges} and set
	\begin{align*}
		\varphi_4 \coloneqq \bigwedge_{u_1v_1, u_2v_2 \in E'} x_{u_1,u_2} \Leftrightarrow x_{v_1,v_2}.
	\end{align*}

	\subparagraph*{Putting all Together.}
	The number of variables in the above-introduced formula $\varphi$ is in \BigO{\Size{\instance}^2}.
	Furthermore, the number of clauses is polynomial in the size of \instance.
	Note that $\varphi$ contains a variable for every pair of vertices.
	However, the order among two old vertices and new vertices is predetermined by $\prec_H$ and $\prec_{\Eadd}$, respectively.
	Hence, we replace for every two old vertices $u, v \in V(H)$ in every clause of~$\varphi$ the variable~$x_{u,v}$ with $1$ if $u \prec_H v$; otherwise we replace $x_{u, v}$ by $0$.
	We do so similarly for every two vertices $u, v \in \Eadd$
    Note that by fixing the variables for edges in $\Eadd$ w.r.t.\ $\prec_{\Eadd}$, we also ensure that our obtained solution will extend $\prec_{\Eadd}$.
	Furthermore, we can decide if $\varphi$ has a satisfying assignment in \BigO{\Size{\instance}^2} time~\cite{APT.LTA.1979}.
	What remains to show is that a satisfying assignment of $\varphi$ corresponds to a spine order $\prec_G$ that, together with $\sigma_G$, yields a solution \ql and vice versa.
	
	For the $(\Rightarrow)$-direction, let $\Psi$ be a variable assignment that satisfies $\varphi$.
	We construct based on $\Psi$ a spine order $\prec_G$ of $G$.
	Recall that we replaced some variables with the truth value predetermined by $\prec_H$ or $\prec_{\Eadd}$, i.e., we cannot directly read $\prec_G$ off from $\Psi$.
	For every two old vertices $u,v \in V(H)$ with $u \neq v$, we set $u \prec_G$ if and only if $u \prec_H v$.
	We perform similarly for every vertices constrained by $\prec_{\Eadd}$.
	This ensures that $\prec_G$ extends $\prec_{H}$ and $\prec_{\Eadd}$.
	To fix the order among a new and an old vertex, we consider for every new vertex $u$ the variables $x_{u,v}$ and $x_{v, u}$ with $v \in V(H)$.
	We set $u \prec_G v$ if $\Psi(x_{u,v}) = 1$ and $v \prec_G u$ if $\Psi(x_{v,u}) = 1$.
	Thanks to subformula $\varphi_1$, which ensures antisymmetry, exactly one of these cases is true.
	Furthermore, thanks to subformulas $\varphi_2$ and $\varphi_3$ the order $\prec_G$ is well-defined: $w \prec_G u$ and $v \prec_G w$ but $u \prec_H v$ or $u \prec_{\Eadd} v$ is not possible.
	To see that $\prec_G$ defines a partial order, consider three vertices $u,v,w  \in V(G)$ with $u \prec_G v$ and $v \prec_G w$ for which then $u \prec_G w$ must hold.
	Expressed as formula, this corresponds to $x_{u,v} \land x_{v,w} \Rightarrow x_{u,w}$, which is equivalent to $\lnot x_{u,v} \lor \lnot x_{v,w} \lor x_{u,w}$.
	We observe that independent of the choice of $u$, $v$, and $w$, the order of at least two of them is prescribed by $\prec_H$ or $\prec_{\Eadd}$.
	Hence, above expression is either vacuously true or can be simplified to a clause on two literals corresponding to a clause from $\varphi_1$ or $\varphi_2$.
	Consequently, we can take the transitive closure of $\prec_G$ to obtain the final spine order.
	To complete the construction of \ql[G], we take the given page assignment.
	Due to $\varphi_4$, which is a formalization of \Cref{obs:nesting-two-edges}, we conclude that \ql[H] is a solution of \instance that extends \ql[H] and $\prec_{\Eadd}$.
	
	For the $(\Leftarrow)$-direction, let \ql be a solution of \instance where \ql extends \ql[H] and $\prec_G$ also extends $\prec_{\Eadd}$.
	We now construct a variable assignment $\Psi$ from $\prec_G$ by setting $\Psi(x_{u,v}) = 1$ if and only if $u \prec_G v$.
	We now show that $\Psi$ satisfies all formulas $\varphi_i$ for $i \in [4]$, thus satisfying $\varphi$.
	As $u \prec_G v$ is a total order of $V(G)$, it is antisymmetric and thus is $\varphi_1$ satisfied by $\Psi$.
	Furthermore, since $\prec_{G}$ extends $\prec_{\Eadd}$ and $\prec_{H}$ the subformulas $\varphi_2$ and $\varphi_3$ are satisfied by construction, respectively.
	Finally, as \ql extends \ql[H] no two edges of $G$ on the same page are in a nesting relation under $\prec_G$ given $\sigma_G$.
	Hence, since $\prec_G$ avoids $u_1 \prec_G u_2 \prec_G v_2 \prec_G v_1$ for any two edges $u_1v_1, u_2v_2 \in E(G)$ with $u_1 \prec_G v_1$, $u_2 \prec_G v_2$, and $u_1 \prec_G u_2$, also subformula $\varphi_4$ is satisfied under $\Psi$
	Combining all, we conclude that $\Psi$ satisfies~$\varphi$.
	
	Having now established both directions, we can conclude that \instance admits a solution that extends $\prec_{\Eadd}$ if and only if $\varphi$ is satisfiable.
\end{prooflater}
Finally, we observe that there are $\BigO{\ell^{\madd}}$ different page assignments $\sigma_G$ and \BigO{{\nadd{}}!\cdot {\madd}^{\nadd}} potential orders among endpoints of new edges.
Since we apply \Cref{lem:two-sat} in each of the $\BigO{\ell^{\madd} \cdot {\nadd{}}!\cdot {\madd}^{\nadd}}$ different branches, we %
obtain the following theorem.
\thmkappapagesfpt*

\section{\textsc{QLE} With Two Missing Vertices}
\label{sec:two-missing-vertices}
In this section, we show that \QLE is polynomial-time solvable for the case where the missing part consists of only two vertices of arbitrary degree.
Recall that the problem of extending stack layouts is already \NP-hard in this setting~\cite{DFGN.PCE.2024}.

Let $\instance$ be an instance of \QLE with $\Vadd = \{u,v\}$.
First, we branch to determine the placement of $u$ and $v$ on the spine. 
In each of the $\BigO{\Size{\instance}^2}$ branches, the spine order $\prec_G$ is fixed, and we only need to assign the new edges to pages.
However, in contrast to \Cref{thm:kappa-xp}, the number of new edges is unbounded.
Nevertheless, we can avoid the intractability result of \Cref{thm:only-edges-hard} by using the fact that all missing edges are incident to $u$ or $v$, which allows us to
observe useful properties of the visibility relation on certain pages. %

For the remainder of this section, we assume a fixed spine order $\prec_G$ of $G$ where, w.l.o.g., $u \prec_G v$.
Furthermore, we can branch to determine the page assignment $\sigma(uv)$ of $uv$, if it exists, and check in linear time whether $uv$ is in a nesting relation with an old edge on $\sigma(uv)$.
Hence, we consider the edge $uv$ for the upcoming discussion as an old edge.

We define for every edge $e \in \Eadd$ the sets $P(e)$ and $\nestingRelation(e)$ of \emph{admissible} pages and \emph{conflicting} new edges, respectively.
\ifthenelse{\boolean{long}}{Recall that}{More formally,} $P(e)$ contains all pages $p \in [\ell]$ such that $\langle \prec_G, \sigma'_H \rangle$ is a queue layout of the graph $G'$ with $V(G') = V(G)$ and $E(G') = E(H) \cup \{e\}$ where $\sigma'_H(e) = p$ and $\sigma'_H(e') = \sigma_H(e')$ for all $e' \in E(H)$.
Furthermore, $\nestingRelation(e)$ contains all new edges $e' \in \Eadd$ that are in a nesting relation with $e$ w.r.t.\ $\prec_G$.
\ifthenelse{\boolean{long}}{Note that we can compute for a single edge $e \in \Eadd$ both sets in linear time. We first make some simple observation that allows us to either immediately conclude that \instance is a negative instance or fix parts of the page assignment $\sigma_G$.}{We now make a simple observation about the relation among the sets $P(e)$ and $\nestingRelation(e)$ and the existence of a solution and its page assignment.}
\begin{observation}
	\label{obs:two-missing-vertices-simple-cases}
	Let $e \in \Eadd$ be a new edge.
	If $P(e) = \emptyset$, then \instance is a negative instance.
	Furthermore, if \instance is a positive instance and there exists a page $p \in P(e) \setminus \bigcup_{e' \in \nestingRelation(e)} P(e')$, then there exists a solution $\langle \prec_G, \sigma\rangle$ with $\sigma(e) = p$. 
\end{observation}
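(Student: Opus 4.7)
The plan is to handle the two parts of the observation separately, noting that both statements operate under the fixed spine order $\prec_G$ (and the already-fixed page assignment $\sigma(uv)$, which allows us to treat $uv$ as old).

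For the first part, the argument is immediate from unrolling the definition of $P(e)$. If $P(e) = \emptyset$, then for every page $p \in [\ell]$ the page assignment that extends $\sigma_H$ by placing $e$ on $p$ fails to be a queue layout, meaning $e$ is in a nesting relation with some old edge assigned to $p$. Since old edges' pages are fixed by $\ql[H]$, any candidate solution \ql[G] must assign $e$ to some page, and on every such page at least one old-old nesting is unavoidable. Hence no valid extension exists.

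For the second part, let $\langle \prec_G, \sigma\rangle$ be a hypothetical solution of \instance witnessing that it is positive, and let $p \in P(e) \setminus \bigcup_{e' \in \nestingRelation(e)} P(e')$. I would define a new page assignment $\sigma^{\star}$ that coincides with $\sigma$ on every edge except $e$, for which we set $\sigma^{\star}(e) = p$. The goal is to verify that $\langle \prec_G, \sigma^{\star}\rangle$ is still a queue layout; since the only edge whose page changed is $e$, it suffices to argue that $e$ is not in a nesting relation with any other edge placed on $p$ by $\sigma^{\star}$. Edges of $H$ are handled by the defining property of $P(e)$: because $p \in P(e)$, augmenting $\ql[H]$ by $e$ on page $p$ is already a valid queue layout, so no old edge on $p$ nests with $e$. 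For new edges, the crucial step is to note that any $e' \in \Eadd \setminus \{e\}$ that could nest with $e$ belongs to $\nestingRelation(e)$, and for every such $e'$ the hypothesis gives $p \notin P(e')$. Since $\sigma$ is valid, $e'$ must be assigned to some page in $P(e')$, so in particular $\sigma(e') \neq p$, and consequently $\sigma^{\star}(e') \neq p$ as well. Thus no conflicting new edge sits on page $p$ next to $e$, and $\langle \prec_G, \sigma^{\star}\rangle$ is the desired solution.

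Since the argument is essentially a swap, I do not expect any technical obstacle; the only subtlety is to make sure we correctly argued that edges in $\nestingRelation(e)$ are the only new edges that can conflict with $e$ in a nesting relation, which follows directly from the definition of $\nestingRelation(\cdot)$ and the fact that $\prec_G$ is already fixed by the branching step.
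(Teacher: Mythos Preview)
Your proof is correct and matches the implicit reasoning behind the observation; the paper itself states this as an observation without proof, treating both parts as immediate from the definitions of $P(\cdot)$ and $\nestingRelation(\cdot)$. Your swap argument for the second part, including the key step that $\sigma(e') \in P(e')$ for any valid solution and hence $\sigma(e') \neq p$, is exactly the justification one would expect.
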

\Cref{obs:two-missing-vertices-simple-cases} allows us to restrict our attention to parts of the instance where for every missing edge $e$ each admissible page $p$ is also admissible for some other new edge $e' \in \nestingRelation(e)$.
Note that this also implies $\nestingRelation(e) \neq \emptyset$.
Next, we show a relation among the admissible pages for different new edges\ifthenelse{\boolean{long}}{ that turns out to be crucial for obtaining our efficient algorithm}{}. 
\begin{restatable}\restateref{lem:propage-admissible-pages}{lemma}{lemmaPropagateAdmissiblePages}
\label{lem:propage-admissible-pages}
Let $e_1 = vx$, $e_2 = uy$, and $e_3 = uz$ be three new edges such that $u \prec x \prec y \prec z$ and $v \prec y$.
For any page $p \in P(e_1) \cap P(e_3)$ it holds $p \in P(e_2)$.
\end{restatable}
\begin{proofsketch}
For the sake of a contradiction, assume $p \notin P(e_2)$.
This means there is an old edge $e = ab$ on $p$ that is in a nesting relation with $e_2$.
If $e$ is nested by $e_2$, it is also nested by $e_3$; see \Cref{fig:admissible-pages-propagation}a.
Thus, we have $p \notin P(e_3)$, which is a contradiction.
Similarly, if $e$ nests $e_2$, then it must also nest $e_1$; see \Cref{fig:admissible-pages-propagation}b.
As this implies $p \notin P(e_1)$, we again arrive at a contradiction.
As both cases lead to a contradiction, we conclude that $p \in P(e_2)$ must hold.
\end{proofsketch}
\begin{prooflater}{plemmaPropagateAdmissiblePages}
Towards a contradiction, assume that there exists a page $p \in P(e_1) \cap P(e_3)$ with $p \notin P(e_2)$.
This means there exists an edge $e = ab$ with $\sigma(e) = p$ that is in a nesting relation with $e_2$.
Assume without loss of generality $a \prec b$.
There are two cases to consider also visualized in \Cref{fig:admissible-pages-propagation}: The edge $e$ could be nested by $e_2$ or nests $e_2$.
The former case implies $u \prec a \prec b \prec y$; see also \Cref{fig:admissible-pages-propagation}a.
Since we have $y \prec z$, we have $u \prec a \prec b \prec z$, which implies that $e$ is also nested by $e_3 = uz$.
However, this contradicts $p \in P(e_1) \cap P(e_3)$ as we have $p \notin P(e_3)$ by definition of $P(\cdot)$.
For the latter case, we get $a \prec u \prec y \prec b$; see also \Cref{fig:admissible-pages-propagation}b. 
Together with $v \prec y$ and $u \prec x \prec y$, we derive $a \prec x \prec b$ and $a \prec v \prec b$, which implies that $e$ also nests $e_1$.
However, this means $p \notin P(e_1)$ by the definition of $P(\cdot)$, contradicting $p \in P(e_1) \cap P(e_3)$.
Since both cases lead to a contradiction, we conclude $p \in P(e_2)$ must hold.
Since $p$ was selected arbitrarily, this holds for all $p \in P(e_1) \cap P(e_3)$.
\end{prooflater}
\begin{figure}
	\centering
	\includegraphics[page=1]{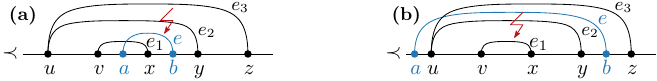}
	\caption{If the page $p$ is admissible for $e_1$ and $e_3$, it is also admissible for $e_2$: No edge $e$ (on page $p$) can be in a nesting relation with only $e_2$.}
	\label{fig:admissible-pages-propagation}
\end{figure}

\ifthenelse{\boolean{long}}{
We can state a property that is symmetric to \Cref{lem:propage-admissible-pages}.
To show it, we can use statements that are analogous to those for \Cref{lem:propage-admissible-pages}.
}{}
\begin{statelater}{lemmaPropageAdmissiblePagesLeft}
    \begin{lemma}
    \label{lem:propage-admissible-pages-left}
    Let $e_1 = ux$, $e_2 = vy$, and $e_3 = vz$ be three new edges such that $z \prec y \prec x \prec v$ and $y \prec u$.
    For any page $p \in P(e_1) \cap P(e_3)$ it holds $p \in P(e_2)$.
    \end{lemma}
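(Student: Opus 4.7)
The plan is to prove \Cref{lem:propage-admissible-pages-left} as the mirror image of \Cref{lem:propage-admissible-pages}, exploiting the fact that its hypotheses are obtained from those of the earlier lemma by reversing the spine order and swapping the roles of $u$ and $v$. Since both the nesting relation and the notion of admissibility are invariant under spine reversal, essentially the same case analysis should apply after a relabelling of edges.

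Concretely, I would argue by contradiction: suppose there exists a page $p \in P(e_1) \cap P(e_3)$ with $p \notin P(e_2)$, and let $e = ab \in E(H)$ with $a \prec b$ and $\sigma(e) = p$ be a witness, so $e$ is in a nesting relation with $e_2 = vy$. From the section-wide assumption $u \prec v$ together with $y \prec u$ we obtain $y \prec v$, so $y$ is the left and $v$ the right endpoint of $e_2$.

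The argument then splits into the two standard cases. If $e$ is nested by $e_2$, then $y \prec a \prec b \prec v$; combining with $z \prec y$ yields $z \prec a \prec b \prec v$, so $e$ is also nested by $e_3 = vz$, contradicting $p \in P(e_3)$. If instead $e$ nests $e_2$, then $a \prec y \prec v \prec b$; combining with $y \prec u$, $y \prec x$, $x \prec v$, and $u \prec v$ yields $a \prec u$, $a \prec x$, $u \prec b$, and $x \prec b$, which witnesses that $e$ nests $e_1 = ux$ and contradicts $p \in P(e_1)$. Either case produces a contradiction, hence $p \in P(e_2)$.

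I do not expect a substantive obstacle; the only care needed is in tracking which of $e_1$ or $e_3$ plays the role of the ``outer'' edge in each case. Because the new vertex shared by $e_2$ and $e_3$ is now $v$ (the right-hand new vertex) rather than $u$ (the left-hand one), the assignment of the two cases to $e_1$ and $e_3$ is effectively swapped compared with \Cref{lem:propage-admissible-pages}, and the inequality bookkeeping has to be done with this flip in mind.
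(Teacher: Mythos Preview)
Your proof is correct and mirrors exactly the argument the paper gives for \Cref{lem:propage-admissible-pages}; the paper itself merely states that \Cref{lem:propage-admissible-pages-left} follows by analogous (symmetric) reasoning, which is precisely what you carry out. One small remark: contrary to your closing comment, the case assignment is not actually swapped---``$e$ nested by $e_2$'' still contradicts $p\in P(e_3)$ and ``$e$ nests $e_2$'' still contradicts $p\in P(e_1)$, just as in the original lemma---but this does not affect the validity of your argument.
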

\end{statelater}

A solution \emph{uses} a spine order $\prec^*_G$ if we have $\langle \prec^*_G, \sigma_G \rangle$.
We now use \Cref{lem:propage-admissible-pages} to show that for two missing vertices with a fixed placement on the spine we can safely remove some edges with at least two admissible pages:
\begin{restatable}\restateref{lem:two-missing-vertices-remove-safe}{lemma}{lemmaTwoMissingVerticesRemoveSafe}
    \label{lem:two-missing-vertices-remove-safe}
    Let $\instance = \instanceLong$ be an instance of \QLE with $\Vadd = \{u,v\}$ and let $\prec_H\ \subseteq\ \prec_G$ be a spine order with $u \prec_G v$ that contains an edge $e = vx \in \Eadd$ with $u \prec_G x$ and $\Size{P(e)} \geq 2$.
    \instance admits a solution that uses $\prec_G$ if and only if $\instance' = \left(\ell, G' =  G \setminus \{e\}, H, \ql[H]\right)$ admits one that uses $\prec_G$.
\end{restatable}
\begin{proofsketch}
Since deleting an edge can never invalidate an existing solution, we focus on the more involved direction of showing that we can re-insert $e$ into a solution $\langle \prec_G, \sigma_{G'} \rangle$ to $\instance'$.
As the spine order $\prec_G$ is fixed, we can list all new edges $\nestingRelation(e) = \{uw_1, uw_2, \ldots, uw_k\}$ that are in a nesting relation with $e$.
Note that they must be incident to $u$.
Thus, we can sort them by their single old endpoint, i.e., $w_1 \prec_G \ldots \prec_G w_k$.
If there exists a page $p \in P(e)$ such that for all $e' \in \nestingRelation(e)$ we have $\sigma_{G'}(e') \neq p$, we can set $\sigma_G(e) = p$.
Otherwise, we process the $w_i$, $i \in k$ in decreasing order from $i = k - 1$ to $i = 1$.
If $\sigma_{G'}(e_{i + 1}) \neq \sigma_{G'}(e_{i})$, we change the page assignment of $e_{i}$ to match the one from $e_{i + 1}$.
By \Cref{lem:propage-admissible-pages}, we know that $e_i$ cannot be in a nesting relation with an old edge on page $\sigma_{G'}(e_{i + 1})$.
Furthermore,~$e_i$ cannot be in a nesting relation with a new edge~$e'$ on page $\sigma_{G'}(e_{i + 1})$ as then $\sigma_{G'}(e_{i + 1})$ would also be in a nesting relation with $e'$ as \Cref{fig:reduction-rule}a underlines.
Thus, setting $\sigma_{G'}(e_{i}) = \sigma_{G'}(e_{i + 1})$ is safe.
Once we have eventually processed the last edge $e_1$, there exists at least one page $p \in P(e)$ such that no edge $e' \in \nestingRelation(e)$ is still assigned to $p$.
We can now safely set $\sigma_G(e) = p$.
\end{proofsketch}
\begin{prooflater}{plemmaTwoMissingVerticesRemoveSafe}
For the $(\Rightarrow)$-direction, it is sufficient to observe that we can take any solution \ql[G] (that uses $\prec_G$) and remove the edge $e$ from the page assignment $\sigma_G$.
The obtained page assignment $\sigma_{G'}$ witnesses the existence of a solution $\langle \prec_G, \sigma_{G'}\rangle$ for $\instance'$ that uses $\prec_G$.
Hence, we focus for the remainder of the proof on the more involved $(\Leftarrow)$-direction.

To that end, let $\langle \prec_G, \sigma_{G'} \rangle$ be a solution for $\instance'$ that uses $\prec_G$.
To obtain a solution \ql[G] for $G$ we only need to assign the edge $e = vx$ to a page $p \in [\ell]$.
Observe that the spine order of $G$ is fixed, i.e., the end points of $e$ have a fixed position on the spine.
Clearly, if there exists a page $p \in P(e)$ such that $e$ is in no nesting relation with a new edge incident to $u$, we can assign $e$ to the page $p$ and are done.
Hence, let us consider the case where $e$ is on every page $p \in P(e)$ in a nesting relation with another edge, also depicted in \Cref{fig:reduction-rule}a.
Let these edges be $e_1, \ldots, e_k$ with $k \geq \Size{P(e)}$, i.e., $\sigma_{G'}(e_i) \in P(e)$ for all $i \in [k]$ and $\bigcup_{i \in [k]} \sigma_{G'}(e_i) = P(e)$.
Observe that due to the definition of $P(e)$, each $e_i$ for $i \in [k]$ is a new edge incident to $u$, i.e., we have $e_i = uw_i$ for $w_i \in V(H)$.
Let the edges $e_1, \ldots, e_k$ be ordered by their endpoint $w_i$, i.e., we have $w_1 \prec_G \ldots \prec_G w_k$.
Recall that $e = vx$ and observe $u \prec_G w_i$, $x \prec_G w_i$, and $v \prec_G w_i$ for every $i \in [k]$. 
We now consider each edge $e_i$ from $i = k - 1$ to $i = 1$ in this order and define $\sigma^{k}_{G'} = \sigma_{G'}$.
For every $i \in [ k - 1]$, if $\sigma^{i + 1}_{G'}(e_i) = \sigma^{i + 1}_{G'}(e_{i + 1})$, we set $\sigma^{i}_{G'} = \sigma^{i + 1}_{G'}$ and proceed to $e_{i - 1}$.
Otherwise, i.e., if $\sigma^{i + 1}_{G'}(e_i) \neq \sigma^{i + 1}_{G'}(e_{i + 1})$, we consider the page $p$ of $e_{i + 1}$, i.e., the page $p \in [\ell]$ with $\sigma^{i + 1}_{G'}(e_{i + 1}) = p$.
We now show that we can re-assign $e_i$ to page $p$, i.e., that $\langle \prec_G, \sigma^{i}_{G'} \rangle$ is also a solution to $\instance'$, where $\sigma^{i}_{G'}$ differs from $\sigma^{i + 1}_{G'}$ only in $\sigma^{i}_{G'}(e_i) = p$.
To that end, we first observe $u \prec_G x \prec_G w_i \prec_G w_{i + 1}$ and $v \prec_G w_i$.
As $\sigma^{i + 1}_{G'}(e_{i + 1}) \in P(e)$ and $\sigma^{i + 1}_{G'}(e_{i + 1}) = p$, we conclude $p \in P(e)$ and $p \in P(e_{i + 1})$.
Thus, using \Cref{lem:propage-admissible-pages}, we conclude $p \in P(e_i)$.
By the definition of $P(e_i)$, this means that $e_i$ cannot be in a nesting relation with an old edge on $p$.
Hence, the only reason that could hinder us from assigning $e_i$ to page $p$ is a new edge on $p$ that is in a nesting relation with $e_i$.
Let this new edge be $e'$, see also the dashed edge in \Cref{fig:reduction-rule}a, and observe that $e'$ must be incident to $v$, i.e., $e' = vz$ with $z \in V(H)$.
As $u \prec_G v$, and $e_i$ and $e'$ are in a nesting relation, we must have $z \prec_G w_i$, i.e., $e_i$ nests $e'$.
However, $w_i \prec_G w_{i + 1}$ implies that also $e_{i + 1}$ nests $e'$.
However, this is a contradiction to the existence of the solution $\langle \prec_G, \sigma^{i + 1}_{G'}\rangle$ as $\sigma^{i + 1}_{G'}(e_i) = \sigma^{i + 1}_{G'}(e') = p$.
Hence, the edge $e'$ cannot exist, and we can safely assign $e_i$ to page $p$ and obtain a valid solution $\langle \prec_G, \sigma^{i}_{G'} \rangle$ of $\instance'$.
We continue this process with $e_{i - 1}$ until we have eventually handled the edge $e_1$.
By above arguments, $\langle \prec_G, \sigma^1_{G'} \rangle$ is a valid solution to $\instance'$.
Recall that we had $\bigcup_{i \in [k]} \sigma_{G'}(e_i) = P(e)$.
Since $\sigma_{G'}(e_i) \in P(e)$ for all $i \in [k]$, there must exist one $j \in [k]$ such that $\sigma^1_{G'}(e_j) \neq \sigma_{G'}(e_j) = p^*$.
Thanks to the above procedure, for every new edge $e'$ that nests $e$ we have $\sigma_{G'}^1(e') \neq p^*$.
Thus, we conclude that $\langle \prec_G, \sigma^1_{G'} \rangle$ extended by $\sigma^1_{G'}(e) = p^*)$ is a solution to $\instance$.
\end{prooflater}
\begin{figure}
	\centering
	\includegraphics[page=1]{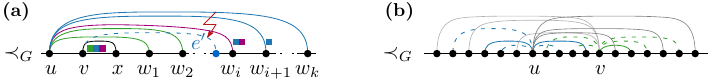}
	\caption{\textbf{(a)} Visualization of the reduction rule from \Cref{lem:two-missing-vertices-remove-safe}. Colors indicate the page assignment and squares the admissible pages for some of new edges. The dashed edge~$e'$ cannot exist. \textbf{(b)}~The obtained instance after applying the reduction rule. Solid, colored edges have only one admissible page, dashed, colored edges got removed, and the page assignment of the gray edges depends on the one of the solid, colored edges. These edges are thus retained.}
	\label{fig:reduction-rule}
\end{figure}
\begin{statelater}{lemmaTwoMissingVerticesRemoveSafeLeft}
    \Cref{lem:propage-admissible-pages-left} allows us to define a reduction rule that removes new edges that are incident to $u$ and have their endpoint left of $v$.
    The following lemma formally states this rule, which is symmetric to \Cref{lem:two-missing-vertices-remove-safe}.
    To show it, we can use almost the same proof strategy as for \Cref{lem:two-missing-vertices-remove-safe}.
    However, we now use \Cref{lem:propage-admissible-pages-left} to deduce that that $\sigma_{G'}(e_{i + 1}) \in P(e_i)$ must hold.
    \begin{lemma}
        \label{lem:two-missing-vertices-remove-safe-left}
        Let $\instance = \instanceLong$ be an instance of \QLE with $\Vadd = \{u,v\}$ and let $\prec_H\ \subseteq\ \prec_G$ be a spine order with $u \prec_G v$ that contains an edge $e = ux \in \Eadd$ with $x \prec_G v$ and $\Size{P(e)} \geq 2$.
        \instance admits a solution that uses $\prec_G$ if and only if $\instance' = \left(\ell, H, G' =  G \setminus \{e\}, \ql\right)$ admits one that uses $\prec_G$.
    \end{lemma}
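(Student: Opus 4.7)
\begin{proofsketch}
The approach mirrors that of \Cref{lem:two-missing-vertices-remove-safe}, substituting \Cref{lem:propage-admissible-pages-left} for \Cref{lem:propage-admissible-pages} and flipping the processing direction to account for the symmetry in the placement of $x$ relative to $v$.

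For the $(\Rightarrow)$-direction, removing $e$ from $\sigma_G$ in any solution of $\instance$ using $\prec_G$ yields a valid solution of $\instance'$ using $\prec_G$. For the $(\Leftarrow)$-direction, let $\langle\prec_G,\sigma_{G'}\rangle$ be a solution of $\instance'$; we show how to extend it by a page assignment for $e = ux$. If some $p\in P(e)$ is not used by any new edge conflicting with $e$, we set $\sigma_G(e)=p$ and are done. Otherwise, observe that any new edge in a nesting relation with $e=ux$ (under the fixed order $u\prec_G v$ with $x\prec_G v$) must be incident to $v$ and have its old endpoint strictly left of $u$: indeed, an edge $vw$ nests $ux$ exactly when $w\prec_G u\prec_G x\prec_G v$, and $e$ cannot nest any new edge incident to $v$ since $v$ lies to the right of $x$. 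Thus the conflicting new edges take the form $e_i = vw_i$ and we can order them as $w_1\prec_G\dots\prec_G w_k\prec_G u$.

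The key step is a reassignment procedure processing the $e_i$ in increasing order $i=2,\dots,k$: starting from $\sigma^1_{G'}=\sigma_{G'}$, whenever $\sigma^{i-1}_{G'}(e_i)\neq\sigma^{i-1}_{G'}(e_{i-1})$, update $e_i$ to the page $p=\sigma^{i-1}_{G'}(e_{i-1})$. To justify this, I would invoke \Cref{lem:propage-admissible-pages-left} with $e_1=e$, $e_3=e_{i-1}$, and $e_2=e_i$: the required inequalities $w_{i-1}\prec_G w_i\prec_G x\prec_G v$ and $w_i\prec_G u$ all hold, so $p\in P(e)\cap P(e_{i-1})$ implies $p\in P(e_i)$, ruling out conflicts with old edges. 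It remains to rule out a conflict on $p$ with some other new edge $e'$; as $e_i$ is incident to $v$, such an $e'$ must be incident to $u$, i.e., $e'=uz$. A short case analysis on the spine position of $z$ shows that $e'$ is in a nesting relation with $e_i=vw_i$ only when $w_i\prec_G z\prec_G v$. However, in that range $e'$ is also nested by $e_{i-1}=vw_{i-1}$ (since $w_{i-1}\prec_G w_i$), so $\sigma^{i-1}_{G'}(e')=p=\sigma^{i-1}_{G'}(e_{i-1})$ would already have violated the validity of $\sigma^{i-1}_{G'}$, a contradiction.

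After processing all $e_i$, at least one page in $P(e)$ is freed up among the conflicting edges (since they originally covered all of $P(e)$ but now some edge has been moved to match its predecessor), and we assign $e$ to that page. The main obstacle is verifying the ``no other new-edge conflict'' step in this symmetric setup; the argument parallels the one in \Cref{lem:two-missing-vertices-remove-safe} but requires re-checking the spine positions carefully since the roles of $u$ and $v$ and the direction of processing are reversed.
\end{proofsketch}
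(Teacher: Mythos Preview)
Your proposal is correct and follows exactly the approach the paper indicates: mirror the proof of \Cref{lem:two-missing-vertices-remove-safe}, replacing \Cref{lem:propage-admissible-pages} by \Cref{lem:propage-admissible-pages-left} and reversing the direction of the collapsing procedure so that you push all conflicting edges onto the page of the outermost one (here $e_1$, the edge $vw_1$ with leftmost $w_1$). Your identification of the conflicting edges as those $vw_i$ with $w_i\prec_G u$ (and hence $w_i\prec_G x$), the verification of the hypotheses of \Cref{lem:propage-admissible-pages-left}, and the ``no new-edge conflict'' argument via $e_{i-1}$ are all sound.

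One technical point to tighten: when you invoke \Cref{lem:propage-admissible-pages-left} you need $p\in P(e)\cap P(e_{i-1})$, but as written you list \emph{all} conflicting new edges as $e_1,\dots,e_k$, so $p=\sigma_{G'}(e_1)$ need not lie in $P(e)$. The fix is exactly what the proof of \Cref{lem:two-missing-vertices-remove-safe} does: restrict $e_1,\dots,e_k$ to those conflicting edges whose current page lies in $P(e)$ (there are at least $\lvert P(e)\rvert\geq 2$ of them by the case assumption). Then $p\in P(e)$ holds at every step by induction, \Cref{lem:propage-admissible-pages-left} applies, and after collapsing, the at least one remaining page of $P(e)$ other than $\sigma_{G'}(e_1)$ is free for $e$. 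Conflicting edges on pages outside $P(e)$ can be ignored since they never obstruct assigning $e$ to a page in $P(e)$.
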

\end{statelater}

We now have all tools at hand to show our final result:

\theoremTwoMissingVerticesPoly*
\begin{proofsketch}
We branch to determine the spine order $\prec_G$ and the page assignment for the edge $uv$.
For each of the $\BigO{\Size{\instance}^2 \cdot \ell}$ branches, we check if the edge $uv$ is in a nesting relation with an old edge.
If not, we compute the sets $P(\cdot)$ and $\nestingRelation(\cdot)$.
Using \Cref{obs:two-missing-vertices-simple-cases}, we remove edges $e$ with $\left(P(e) \setminus \bigcup_{e' \in \nestingRelation(e)} P(e')\right) \neq \emptyset$.
{%
Furthermore, we remove all edges $vx \in \Eadd$ with $u \prec_G x$ and $\Size{P(vx)} \geq 2$.
\Cref{lem:two-missing-vertices-remove-safe} ensures that this operation is safe.
Symmetrically, we remove all edges $uy \in \Eadd$ with $y \prec_G v$ and $\Size{P(uy)} \geq 2$.
Safeness is guaranteed by a reduction rule that is symmetric to \Cref{lem:two-missing-vertices-remove-safe}.
We provide the formal statement in \Cref{lem:two-missing-vertices-remove-safe-left} in \Cref{app:omitted-lemmas}.
}
Each new edge $e$ now either has $\Size{P(e)} = 1$ or is of the form $ux'$ with $v \prec_G x'$ or $vy'$ with $y' \prec u$; see \Cref{fig:reduction-rule}b for an example.
Observe that these edges $ux'$ and $vy'$ are not in a nesting relation.
We assign edges with a single admissible page to the respective page.
Afterwards, we assign the remaining new edges $e$ to a page $\sigma(e) = p$ such that for all new edges $e' \in \nestingRelation(e)$ with $\Size{P(e')} = 1$ we have $\sigma(e') \neq p$.
If this is not possible, the instance has no solution due to \Cref{lem:two-missing-vertices-remove-safe,lem:two-missing-vertices-remove-safe-left}.
Otherwise, the same lemmas in concert with \Cref{obs:two-missing-vertices-simple-cases} allow us to complete the obtained page assignment to a solution.
Overall, this takes $\BigO{\madd \cdot \Size{\instance}}$ time per branch.
\end{proofsketch}
\begin{prooflater}{ptheoremTwoMissingVerticesPoly}
Let \instance be a instance of \QLE with $\nadd = 2$ and let $\Vadd = \{u,v\}$ the two missing vertices.
First, we branch to determine the final spine order $\prec_G$ and the page to which the edge $uv$, if it exists, should be assigned to.
This gives us $\BigO{\Size{\instance}^2 \cdot\ell}$ different branches, each corresponding to a different combination of spine order $\prec_G$ and page assignment $\sigma_G(uv)$.
For the sake of presentation, we assume without loss of generality $u \prec_G v$.
For each branch, we can check in linear time if the edge $uv$ is in a nesting relation with an old edge on $\sigma_G(uv)$.
If this is the case, we discard the branch.
Otherwise, we also compute for each edge $\Eadd \setminus \{uv\}$ in linear time the sets $P(e)$ and $\nestingRelation(e)$.
We assume for the remainder of this proof that no new edge $e$ has $P(e) = \emptyset$.
Otherwise, we can immediately return that \instance does not have a solution thanks to \Cref{obs:two-missing-vertices-simple-cases}.
In addition, \Cref{obs:two-missing-vertices-simple-cases} allows us to remove all new edges $e$ with $P(e) \setminus \bigcup_{e' \in \nestingRelation(e)} P(e') \neq \emptyset$.
With the aim to apply \Cref{lem:two-missing-vertices-remove-safe,lem:two-missing-vertices-remove-safe-left}, we remove all edges $vx$ and $uy$ with $u \prec_G x$, $y \prec_G v$, and $\Size{P(vx)}, \Size{P(uy)} \geq 2$.
Overall, this takes $\BigO{\madd \Size{\instance}}$ time.

In the resulting instance $\instance'$, each new edge $e$ either has $\Size{P(e)} = 1$ or is of the form $ux'$ with $v \prec_G x'$ or $vy'$ with $y' \prec u$; see \Cref{fig:reduction-rule}b for an example.
For new edges with a single admissible page, we have no choice but must assign these edges to their respective unique admissible page.
For the other remaining new edges $e$, we check whether we can still assign each of them to one of their admissible pages, i.e., whether there is a page $p \in P(e)$ such that for all new edges $e' \in \nestingRelation(e)$ with $\Size{P(e')} = 1$ we have $\sigma(e') \neq p$.
If there exists a new edge for which this is not the case, we conclude, by \Cref{lem:two-missing-vertices-remove-safe,lem:two-missing-vertices-remove-safe-left}, that the instance does not admit a solution.
Otherwise, it is sufficient to observe that no two edges $ux'$ and $vy'$ with $v \prec_G x'$ and $y' \prec u$ are in a nesting relation, as we have $y' \prec_G u \prec_G v \prec x'$.
Hence, we have obtained a solution $\langle \prec_G, \sigma_{G'}\rangle$ to $\instance'$.
Iteratively applying first \Cref{lem:two-missing-vertices-remove-safe,lem:two-missing-vertices-remove-safe-left} and then \Cref{obs:two-missing-vertices-simple-cases}, we can complete $\langle \prec_G, \sigma_{G'}\rangle$ to a solution \ql[G] of \instance.
This takes \BigO{\madd\cdot \Size{\instance}} time.
Overall, the running time can be bounded by $\BigO{\Size{\instance}^2 \cdot \ell \cdot (\madd \cdot \Size{\instance})}$.
\end{prooflater}

\section{Concluding Remarks}
\label{sec:conclusion}
This paper provides a detailed analysis of the (parameterized) complexity of the problem of extending queue layouts.
While we can draw for some results parallels to the related extension problem for stack layouts, our investigation also uncovers surprising differences between the two problems.

We see generalizing \Cref{thm:two-missing-vertices-poly} to $k$ (high-degree) missing vertices as the main challenge for future work.
While the equivalence between queue layouts and colorings of permutation graphs immediately yields membership in \XP\ when parameterized by \nadd and $\ell$, \Cref{thm:two-missing-vertices-poly} suggests that we can maintain membership in \XP\ even after dropping the number of pages $\ell$ from our parametrization.
{%
Recall that we required $H$ to be a subgraph of $G$.
A generalization of the studied problem relaxes this requirement, allowing the spine order for some vertices and the page assignment for some edges of $G$ to be specified as part of the input.
We see this as an interesting direction for future investigations.
}%
{%
Finally, a natural next step is to extend our results to other linear layout drawing styles.
}
 \bibliography{references}

\newpage
\appendix
\ifthenelse{\boolean{long}}{}{
\section{Omitted Proofs}
\label{app:omitted-proofs}

\subsection{Omitted Proofs from Section~4}

\theoremOnlyEdgesFPT*
\label{thm:only-edges-fpt*}
Before we can show \Cref{thm:only-edges-fpt}, we first show that we can remove all new edges from the instance that can be placed in at least $\madd$ pages without being in a nesting relation with an old edge on that page.
Let $P(e) \subseteq [\ell]$ be the set of \emph{admissible pages} for a new edge $e \in \Eadd$, i.e., the set of pages such that $p \in P(e)$ if and only if $\langle\prec_{H},\sigma'_{H}\rangle$ is an $\ell$-page queue layout of $H\cup \{e\}$, where $\sigma'_{H}(e) = p$ and $\sigma'_{H}(e') = \sigma_{H}(e')$ for all other edges $e' \in E(H)$.
\begin{lemma}
\stateLemmaOnlyEdgesRemoveSafeText
\end{lemma}
\plemmaOnlyEdgesRemoveSafe
With \Cref{lem:only-edges-remove-safe} at hand, we are now ready to show \Cref{thm:only-edges-fpt}.
\ptheoremOnlyEdgesFPT

\subsection{Omitted Proofs from Section~5}
\theoremKappaXP*
\label{thm:kappa-xp*}
\ptheoremKappaXP

\theoremKappaWHard*
\label{thm:kappa-w1*}
\ptheoremKappaWHard

\subsection{Omitted Proofs from Section~6}
\lemmaTwoSat*
\label{lem:two-sat*}
\plemmaTwoSat

\subsection{Omitted Proofs from Section~7}
\lemmaPropagateAdmissiblePages*
\label{lem:propage-admissible-pages*}
\plemmaPropagateAdmissiblePages

\lemmaTwoMissingVerticesRemoveSafe*
\label{lem:two-missing-vertices-remove-safe*}
\plemmaTwoMissingVerticesRemoveSafe

\theoremTwoMissingVerticesPoly*
\label{thm:two-missing-vertices-poly*}
\ptheoremTwoMissingVerticesPoly

\newpage

\section{On Permutation Graph Colorings and Fixed-Order Queue Layouts}\label{app:permutation}
Dujmović and Wood already note that ``the problem of assigning edges to queues in a fixed vertex ordering is equivalent to colouring a permutation graph'' \cite{DW.LLG.2004} and attribute this result to a 1941 publication by Dushnik and Miller \cite{DM.POS.1941}.
To make this insight more formal and also accessible in modern terms, we want to independently show this relationship in this section.
We want to thank the participants of the HOMONOLO 2024 workshop who helped us uncover the relationship to permutation graphs.

\permutationGraphs

Recall that \NP-completeness of the precoloring extension problem for permutation graphs~\cite{Jan.OCC.1997} has been shown as well as \XP-membership for list-coloring when parameterized by the number $k$ of colors~\cite{EST.LCL.2014}.
Thus, this shows the following theorem.
\thmVGisVH*
\label{thm:only-edges-hard*}

\newpage
\section{Omitted Details from the \W[1]-hardness Reduction}
\label{app:reduction}
\subsection{Edge Gadget}
\detailsEdgeGadget

\subsection{Fixation Gadget}
\detailsFixationGadget

\newpage

\section{Omitted Lemma Statements From \Cref{sec:two-missing-vertices}}
\label{app:omitted-lemmas}
In this section, we provide the lemma statements that we omitted from the main text due to space constraints.

The following lemma states a symmetric property to \Cref{lem:propage-admissible-pages} w.r.t.\ new edges that have $u$ as one endpoint and their second endpoint left of $v$.
Its proof is analogs to the one of \Cref{lem:propage-admissible-pages}.
\lemmaPropageAdmissiblePagesLeft

\lemmaTwoMissingVerticesRemoveSafeLeft

\newpage

}
\section{Removing Multi-Edges in the \W[1]-hardness Reduction}
\label{app:remove-multi-edges}
In this section, we describe how we can adapt the reduction used to obtain \W[1]-hardness in \Cref{thm:kappa-w1} to not rely on multi-edges.
To that end, recall that we introduce multi-edges in two places: First in the edge gadget, since we create, e.g., the edge $u_{\alpha}^{\bot_L}u_{\alpha}^{i + 1}$ in every gadget for an edge $e \in E(G_C)$ with endpoint $u_{\alpha}^i$.
Second in the fixation gadget, where we create, e.g., the edge $u_1^{\bot_L}u_1^{\bot}$ on every page $p \in [\ell] \setminus \{p_d\}$.
Our strategy to remove these multi-edges is to introduce multiple copies of certain vertices and distribute the multi-edges among them.

More concretely, we introduce for every edge $e\in E(G_C)$ and every color $\alpha \in [k + 1]$ the vertices $u_{\alpha}^{\bot^e_L}$ and $u_{\alpha}^{\bot^e_R}$.
These vertices will take the role of $u_{\alpha}^{\bot_L}$ and $u_{\alpha}^{\bot_R}$, and we remove, therefore, the latter vertices from our construction.
For the remainder of this section, we assume an (arbitrary) order $e_1, e_2, \ldots, e_{M}$ of the $M = \Size{E(G_C)}$-many edges of $G_C$.
We order the additional vertices based on their index, i.e., we set $u_{\alpha}^{\bot^{e_i}_L} \prec u_{\alpha}^{\bot^{e_{i + 1}}_L}$ and $u_{\alpha}^{\bot^{e_i}_R} \prec u_{\alpha}^{\bot^{e_{i + 1}}_R}$ for every $i \in [M - 1]$ and $\alpha \in [k + 1]$.
Furthermore, we set $u_{\alpha}^{\bot^{e_M}_L} \prec u_{\alpha}^{\bot} \prec u_{\alpha}^{\bot^{e_1}_R}$ and, for every $\alpha \in [k]$, $u_{\alpha}^{\bot^{e_M}_R} \prec u_{\alpha}^{1}$ and $u_{\alpha}^{n_{\alpha} + 1} \prec u_{\alpha + 1}^{\bot^{e_1}_L}$.
The remainder of the spine order is identical to the description in \Cref{sec:kappa-w-1} and \Cref{fig:w-1-example}.
To redistribute the multi-edges over these additional vertices, we replace in the gadget for the edge $e = v_{\alpha}^iv_{\beta}^j \in E(G_C)$ for every edge the vertices $u_{\alpha}^{\bot_L}$ with $u_{\alpha}^{\bot^e_L}$, $u_{\alpha}^{\bot_R}$ with $u_{\alpha}^{\bot^e_R}$, $u_{\beta}^{\bot_L}$ with $u_{\beta}^{\bot^e_L}$, and $u_{\beta}^{\bot_R}$ with $u_{\beta}^{\bot^e_R}$, respectively.
We observe that this eliminates all multi-edges created in the edge gadget, provided that $G_C$ has none, which we can assume.
In the fixation gadget, we only need to replace on page $p_e$ the edges $u_1^{\bot_L}u_1^{\bot}$ and $u_{k + 1}^{\bot}u_{k + 1}^{\bot_R}$ with $u_1^{\bot^e_L}u_1^{\bot}$ and $u_{k + 1}^{\bot}u_{k + 1}^{\bot^e_R}$, respectively.
This removes the remaining multi-edges from our construction.

What remains to do is to show that the properties of these gadgets still hold.
Once correctness of the gadgets has been established, correctness of the whole reduction follows immediately.
This can be seen by a closer analysis of the proof for \Cref{thm:kappa-w1}, together with the observation that the number of old vertices is still polynomial in the size of $G_C$ and the number of old edges and new vertices and edges is not affected.

\subsection{Correctness of Gadgets.} 
First, we can make the following observation for \Cref{prop:edge-gadget}, i.e., that from $u_{\alpha}^1 \prec x_{\alpha} \prec u_{\alpha}^{n_{\alpha} + 1}$, $u_{\beta}^1 \prec x_{\beta} \prec u_{\beta}^{n_{\beta} + 1}$, and $\sigma(x_{\alpha}x_\beta) = p_e$ follows that $x_{\alpha}$ is placed in  $\intervalPlacing{v_{\alpha}^i}$ and $x_{\beta}$ in $\intervalPlacing{v_{\beta}^j}$.
Since the prerequisites of \Cref{prop:edge-gadget} require $u_{\alpha}^1 \prec x_{\alpha} \prec u_{\alpha}^{n_{\alpha} + 1}$ and $u_{\beta}^1 \prec x_{\beta} \prec u_{\beta}^{n_{\beta} + 1}$ we (still) have $u_{\alpha}^{\bot^e_L} \prec u_{\alpha}^{\bot^e_R} \prec u_{\alpha}^1 \prec x_{\alpha} \prec u_{\alpha}^{n_{\alpha} + 1} \prec u_{\alpha + 1}^{\bot^e_L}$ and similarly for $\beta$, i.e., the relative position to the $\bot_L$- and $\bot_R$-vertices has not changed.
Thus, the same arguments as in the proof of \Cref{lem:edge-gadget-property} can be applied to our adapted reduction.

Regarding \Cref{prop:fixation-gadget-property}, i.e., that we have $u_{\alpha}^1 \prec x_{\alpha} \prec u_{\alpha}^{n_{\alpha} + 1}$, we observe that its proof relies, on the one hand, on our construction having \Cref{prop:bot-vertices-not-visible}, and, on the other hand, $x_{\alpha} \prec u_1^{\bot_L}$ being not possible due to $x_{\alpha}$ not being able to see $u_{\alpha + 1}^{\bot}$ on any page $p \in [\ell] \setminus \{p_d\}$.
For the latter claim, it is sufficient to see that for every page $p \in [\ell] \setminus \{p_d\}$ we still have that $x_{\alpha} \prec u_1^{\bot^p_L}$ is impossible since the edge $u_1^{\bot^p_L}u_1^{\bot}$ on page $p$ blocks visibility to $x_{\alpha + 1}^{\bot}$.
Using symmetric arguments that exclude $u_{\alpha}^{n_{\alpha}+1} \prec x_{\alpha}$, we can again conclude %
that we have %
$u_{\alpha}^1 \prec x_{\alpha} \prec u_{\alpha}^{n_{\alpha} + 1}$.

Combining all, we obtain the following observation.
\begin{observation}
	\label{obs:w-1-reduction-no-multi-edges-lemmas-preserved}
	If our reduction without multi-edges has \Cref{prop:bot-vertices-not-visible}, then we preserve the properties of the edge and fixation gadget, i.e., \Cref{prop:edge-gadget,prop:fixation-gadget-property} still hold.
\end{observation}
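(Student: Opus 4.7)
The plan is to revisit the proofs of \Cref{prop:edge-gadget} (edge gadget) and \Cref{prop:fixation-gadget-property} (fixation gadget) from \Cref{sec:kappa-w-1} and verify that each individual nesting/visibility argument still goes through once every original $u_\gamma^{\bot_L}$ is replaced by the per-edge copy $u_\gamma^{\bot^e_L}$ (and symmetrically for $\bot_R$). The key structural invariant is that by construction the new spine still satisfies, for every $\gamma \in [k+1]$ and every edge $e \in E(G_C)$,
\[
    u_\gamma^{\bot^{e_1}_L} \prec \cdots \prec u_\gamma^{\bot^{e_M}_L} \prec u_\gamma^{\bot} \prec u_\gamma^{\bot^{e_1}_R} \prec \cdots \prec u_\gamma^{\bot^{e_M}_R},
\]
and that these blocks sit exactly where the old $u_\gamma^{\bot_L}, u_\gamma^\bot, u_\gamma^{\bot_R}$ used to sit relative to the copies $u_\gamma^1,\dots,u_\gamma^{n_\gamma+1}$. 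Consequently, for each fixed page $p_e$, whenever the original proof invoked an edge using $u_\gamma^{\bot_L}$ or $u_\gamma^{\bot_R}$, the new proof invokes the same edge with $\bot^e_L$/$\bot^e_R$ in its place, and the relative spine positions appearing in the nesting inequalities are unchanged.

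For \Cref{prop:edge-gadget}, I would walk through the three case-style subarguments of the original proof (the sub-cases $u_\alpha^1 \prec x_\alpha \prec u_\alpha^i$, $u_\alpha^{i+1} \prec x_\alpha$, and then fixing $x_\beta$). Each of these uses at most one $\bot_L^\cdot$ or $\bot_R^\cdot$ vertex per edge invoked, and every such edge now appears, with the per-edge-copied endpoint, on the very page $p_e$ under consideration. Since the hypothesis $u_\alpha^1 \prec x_\alpha \prec u_\alpha^{n_\alpha+1}$ gives $u_\alpha^{\bot^e_L} \prec u_\alpha^{\bot^e_R} \prec u_\alpha^1 \prec x_\alpha \prec u_\alpha^{n_\alpha+1} \prec u_{\alpha+1}^{\bot^e_L}$ (and symmetrically for $\beta$), every nesting comparison in the original proof carries over verbatim. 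Thus \Cref{prop:edge-gadget} survives.

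For \Cref{prop:fixation-gadget-property}, the original proof hinges on two ingredients: (i) ruling out $x_\alpha \prec u_1^{\bot_L}$ by exhibiting, for every candidate page $p$, an edge on $p$ that must be in a nesting relation with $x_\alpha u_{\alpha+1}^\bot$; and (ii) invoking \Cref{prop:bot-vertices-not-visible} to rule out the remaining positions with $u_1^{\bot_L} \prec x_\alpha \prec u_\alpha^1$ and $u_\alpha^{n_\alpha+1} \prec x_\alpha \prec u_{k+1}^{\bot_R}$. Ingredient (ii) is granted by the hypothesis of the observation. For ingredient (i), I would show that the per-page edge $u_1^{\bot^p_L} u_1^\bot$, which is precisely the replacement we introduced on page $p \in [\ell]\setminus\{p_d\}$, still blocks $x_\alpha u_{\alpha+1}^\bot$ on page $p$ whenever $x_\alpha \prec u_1^{\bot^p_L}$: we have $x_\alpha \prec u_1^{\bot^p_L} \prec u_1^\bot \prec u_{\alpha+1}^\bot$, causing the edge $x_\alpha u_{\alpha+1}^\bot$ to nest $u_1^{\bot^p_L} u_1^\bot$. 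The symmetric case $u_{k+1}^{\bot^p_R} \prec x_\alpha$ is handled identically. For positions in between, Property~\ref{prop:bot-vertices-not-visible} applies to $v = x_\alpha$ word-for-word, since its statement is phrased in terms of the $\bot_L$/$\bot_R$ boundaries---and those boundaries in the modified construction are exactly the $\bot^{e_1}_L$ and $\bot^{e_M}_R$ vertices.

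The only genuinely new thing to check, and what I expect to be the mildly delicate step, is that all the visibility-blocking arguments for $x_\alpha u_{\alpha+1}^\bot$ and $x_\alpha u_\alpha^\bot$ on pages $p \neq p_d$ still have \emph{some} blocker on the right page; that is, that we did not accidentally remove the blocker we need on some page by distributing copies across pages. Because we introduced a distinct copy $u_\gamma^{\bot^p_L}, u_\gamma^{\bot^p_R}$ for \emph{every} page $p \in [\ell]\setminus\{p_d\}$ and placed the boundary edges $u_1^{\bot^p_L} u_1^\bot$ and $u_{k+1}^\bot u_{k+1}^{\bot^p_R}$ on every such $p$, the blocker used in ingredient (i) is present on each relevant page, and the combinatorial conclusion of both gadget properties is preserved. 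Putting these observations together gives \Cref{obs:w-1-reduction-no-multi-edges-lemmas-preserved}.
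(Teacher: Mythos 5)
Your proposal is correct and follows essentially the same route as the paper: for the edge gadget you observe that the prerequisites fix the relative position of $x_\alpha,x_\beta$ to the per-edge copies $u_\gamma^{\bot^e_L},u_\gamma^{\bot^e_R}$ so the original nesting arguments carry over verbatim, and for the fixation gadget you re-verify the per-page blockers $u_1^{\bot^p_L}u_1^{\bot}$ (and the symmetric ones) and delegate the intermediate positions to the hypothesized visibility property. The only tiny imprecision is that the translated visibility property is phrased with the page-specific copies $u_1^{\bot^p_L},u_{k+1}^{\bot^p_R}$ rather than the outermost $\bot^{e_1}_L,\bot^{e_M}_R$ copies, but since the relevant spine intervals are nested this does not affect the argument.
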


Until now, the correctness of the adapted reduction hinges on it having \Cref{prop:bot-vertices-not-visible}.
Thus, we now show that it still has said properties.
Recall \Cref{prop:bot-vertices-not-visible}, translated to our modified reduction:
\begin{property}
	\label{prop:bot-vertices-not-visible-no-multi-edges}
    For every color $\alpha \in [k]$ and %
    {%
    	for the page $p \in [\ell]$ of every edge gadget}, the vertex $u_{\alpha}^{\bot}$ is not visible on $p$ for any vertex $v$ with $u_1^{\bot^p_L} \prec v \prec u_{\alpha}^{\bot^p_L}$ or $u_{\alpha}^{n_{\alpha} + 1} \prec v \prec u_{k + 1}^{\bot^p_R}$.
\end{property}
We now establish the equivalent statement to \Cref{lem:bot-vertices-not-visible}.
\begin{lemma}
	\label{lem:bot-vertices-not-visibile-no-multi-edges}
	Our constructed instance of \QLE without multi-edges has \Cref{prop:bot-vertices-not-visible-no-multi-edges}.
\end{lemma}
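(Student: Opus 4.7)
The plan is to replay the three-case analysis from the proof of \Cref{lem:bot-vertices-not-visible}, verifying that in the multi-edge-free construction every blocking edge invoked in the original argument is simply the same edge with $\bot_L, \bot_R$ relabelled to $\bot^{p_e}_L, \bot^{p_e}_R$. Concretely, I would fix a color $\alpha \in [k]$, a page $p = p_e$ corresponding to an edge $e = v_\beta^i v_\gamma^j \in E(G_C)$ with (w.l.o.g.) $\beta < \gamma$, and a vertex $v$ with $u_1^{\bot^p_L} \prec v \prec u_\alpha^{\bot^p_L}$; the symmetric assumption $u_\alpha^{n_\alpha+1} \prec v \prec u_{k+1}^{\bot^p_R}$ follows by a mirrored argument. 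Then I would split on the relative position of $v$ with respect to $u_\beta^i$ and $u_\gamma^j$, obtaining the same three cases as before.

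In \emph{Case 1} ($v \prec u_\beta^i \prec u_\gamma^j$), if $\alpha \leq \beta$ the edge $u_1^{\bot^p_L}u_\beta^1$ (which has been placed on $p_e$ as the replacement of $u_1^{\bot_L}u_\beta^1$) nests $vu_\alpha^\bot$, since $u_1^{\bot^p_L} \prec v \prec u_\alpha^{\bot^p_L} \preceq u_\beta^{\bot^p_L} \prec u_\alpha^\bot \prec u_\beta^1$; if $\beta < \alpha$, the edge $u_\beta^i u_{\beta+1}^{\bot^p_L}$ blocks via $v \prec u_\beta^i \prec u_{\beta+1}^{\bot^p_L} \prec u_\alpha^\bot$. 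In \emph{Case 2} ($u_\beta^i \prec v \prec u_\gamma^j$) we must have $\beta < \alpha$; if $\alpha \leq \gamma$ the twist edge $u_\beta^i u_\gamma^j$ blocks by $u_\beta^i \prec v \prec u_\alpha^{\bot^p_L} \prec u_\alpha^\bot \prec u_\gamma^j$, and if $\gamma < \alpha$ the whisker $u_\gamma^j u_\gamma^{n_\gamma+1}$ blocks by $v \prec u_\gamma^j \prec u_\gamma^{n_\gamma+1} \prec u_\alpha^\bot$. In \emph{Case 3} ($u_\beta^i \prec u_\gamma^j \prec v$), the edge $u_\gamma^{n_\gamma+1} u_{\gamma+1}^{\bot^p_L}$ handles $v \prec u_\gamma^{n_\gamma+1}$ (using $u_{\gamma+1}^{\bot^p_L} \prec u_{\gamma+1}^\bot \preceq u_\alpha^\bot$), while $u_\gamma^{n_\gamma+1}u_{k+1}^{\bot^p_R}$ handles $u_\gamma^{n_\gamma+1} \prec v$ (using $u_\gamma^{n_\gamma+1} \prec v \prec u_\alpha^\bot \prec u_{k+1}^{\bot^p_R}$).

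The only non-routine step, and the place where I would focus the careful verification, is confirming that all spine inequalities involving the new superscripted vertices still hold in the modified order. The crucial facts are (i) $u_\alpha^{\bot^{e_M}_L} \prec u_\alpha^\bot \prec u_\alpha^{\bot^{e_1}_R}$ for every $\alpha$, hence in particular $u_\alpha^{\bot^p_L} \prec u_\alpha^\bot \prec u_\alpha^{\bot^p_R}$; (ii) the vertices $u_\alpha^{\bot^e_L}$ (resp.\ $u_\alpha^{\bot^e_R}$) for different $e$ are ordered between $u_{\alpha-1}^{n_{\alpha-1}+1}$ and $u_\alpha^\bot$ (resp.\ between $u_\alpha^\bot$ and $u_\alpha^1$), so they do not cross any of the $u_\alpha^i$-vertices used in the blockers; and (iii) $u_1^{\bot^p_L} \prec v \prec u_\alpha^{\bot^p_L}$ therefore still implies $v \prec u_\alpha^\bot$ and, for $\beta < \alpha$, $v \prec u_{\beta+1}^{\bot^p_L}$ lies to the left of $u_\alpha^\bot$. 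Once these order relations are in place, every nesting argument from the original proof transfers verbatim, yielding \Cref{prop:bot-vertices-not-visible-no-multi-edges}. The main obstacle is purely bookkeeping: tracking which copy of $\bot_L$ or $\bot_R$ each edge uses on page $p_e$ and ensuring the inequalities still thread through; no new structural insight is required beyond the order on the introduced vertices.
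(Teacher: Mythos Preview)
Your proposal is correct and follows essentially the same approach as the paper: both replay the three-case analysis of \Cref{lem:bot-vertices-not-visible}, observing that every blocking edge involving a $\bot_L$- or $\bot_R$-vertex is replaced by its superscripted counterpart on page $p_e$, while Case~2 needs no change since the blockers there involve none of these vertices. One minor slip: in Case~1, sub-case $\alpha \leq \beta$, your chain $u_\alpha^{\bot^p_L} \preceq u_\beta^{\bot^p_L} \prec u_\alpha^\bot$ is false when $\alpha < \beta$ (then $u_\alpha^\bot$ precedes the entire $\beta$-block); the correct chain is $u_1^{\bot^p_L} \prec v \prec u_\alpha^{\bot^p_L} \prec u_\alpha^\bot \preceq u_\beta^\bot \prec u_\beta^1$, which still yields the desired nesting.
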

\begin{proof}
	The arguments to show \Cref{lem:bot-vertices-not-visibile-no-multi-edges} are almost identical to those for showing \Cref{lem:bot-vertices-not-visible}.
	However, some statements in the different cases need slight adjustments, which we point out in the following.
	Recall that we consider a page $p_e \in [\ell] \setminus \{p_d\}$ for an edge $e = v_{\beta}^iv_{\gamma}^j \in E(G_C)$ and a placement of $v$ such that $u_1^{\bot^{p_e}_L} \prec v \prec u_{\alpha}^{\bot^{p_e}_L}$.
	Without loss of generality, we again assume $\beta < \gamma$.
	
	\proofsubparagraph*{Case 1: $v \prec u_{\beta}^i \prec u_{\gamma}^j$.}
        For the first sub-case, i.e., $\alpha \leq \beta$, we can now use the fact that we have $\sigma(u_1^{\bot^{p_e}_L}u_{\beta}^1) = p_e$ to conclude that $v$ cannot see $u_{\alpha}^{\bot}$.
	For the second sub-case, i.e., $\beta < \alpha$, we get the desired result by using $v \prec u_{\beta}^i \prec u_{\beta + 1}^{\bot^{p_e}_L} \prec u_{\alpha}^{\bot}$ and $\sigma(u_{\beta}^iu_{\beta + 1}^{\bot^{p_e}_L}) = p_e$.
		
	\proofsubparagraph*{Case 2: $u_{\beta}^i \prec v \prec u_{\gamma}^j$.}
	For this case there is nothing to do since we use in the proof of \Cref{lem:bot-vertices-not-visible} edges that do not have the $\bot_L$- and $\bot_R$-vertices as endpoints.
    Furthermore, since the relative placement of the new $\bot_L$- and $\bot_R$-vertices with respect to the other vertices remains unchanged, the arguments carry over without adaption.
	
	\proofsubparagraph*{Case 3: $u_{\beta}^i \prec u_{\gamma}^j \prec v$.}
	For the third and last case we get $u_{\gamma}^j \prec v \prec u_{\alpha}^{\bot} \prec u_{k + 1}^{\bot^{p_e}_R}$.
    Again, for $v \prec u_{\gamma}^{n_{\gamma} + 1}$ we can use the edge $u_{\gamma}^{n_{\gamma}+1}u_{\gamma + 1}^{\bot^{p_e}_L}$ to block visibility for $v$ and for $u_{\gamma}^{n_{\gamma} + 1} \prec v$ the edge $u_{\gamma}^{n_{\gamma}+1}u_{k + 1}^{\bot^{p_e}_R}$ is sufficient to conclude that $u_{\alpha}^{\bot}$ is not visible for $v$ on page $p_e$.

    As before, the setting with $u_{\alpha}^{n_{\alpha} + 1} \prec v \prec u_{k + 1}^{\bot^{p_e}_R}$ can be shown by an analogous, symmetric case analysis.
    Thus, \Cref{prop:bot-vertices-not-visible-no-multi-edges} holds in the construction without multi-edges.
\end{proof}

Finally, note that after (re-)establishing both properties, a closer look at \Cref{obs:fixation-gadget-edges} reveals that they only talk about edges incident to the vertices $u_{\alpha}^{\bot}$ and the page $p_d$.
Since we did not touch either of them, and have the corresponding \Cref{prop:bot-vertices-not-visible-no-multi-edges}, we conclude that \Cref{obs:fixation-gadget-edges} is also true for our modified reduction.
Similar holds for \Cref{obs:edge-gadget-colors}.

\end{document}